\theoremstyle{plain} 
\newtheorem{thm}{Theorem}[section]
\newtheorem{lem}[thm]{Lemma}
\newtheorem{cor}[thm]{Corollary}
\theoremstyle{definition}
\newcommand*{\rom}[1]{\expandafter\@slowromancap\romannumeral #1@}
\theoremstyle{remark}
\newtheorem{claim}{Claim}
\newcommand{\E}{\mathbb{E}}
\newcommand{\eps}{\varepsilon}
\newcommand{\R}{\mathbb{R}}
\global\long\def\dt#1#2{\langle #1, #2 \rangle} 
\global\long\def\norm#1{\left\Vert #1\right\Vert } 
\newcommand{\script}[1]{\mathcal{#1}}
\newcommand{\ceil}[1]{\lceil #1 \rceil}
\newcommand{\floor}[1]{\lfloor #1 \rfloor}
\newcommand{\disc}{\mathrm{disc}}
\titlespacing*{\paragraph}{0pt}{1ex plus 1ex minus .2ex}{1em}
\title{A Unified Approach to Discrepancy Minimization}
\author{Nikhil Bansal\thanks{University of Michigan, bansaln@umich.edu}, Aditi Laddha\thanks{Georgia Tech, aladdha6@gatech.edu}, Santosh S. Vempala\thanks{Georgia Tech, vempala@gatech.edu}}
\begin{document}

\maketitle

\begin{abstract}
We study a unified approach and algorithm for constructive discrepancy minimization based on a stochastic process. By varying the parameters of the process, one can recover various state-of-the-art results. We demonstrate the flexibility of the method by deriving a discrepancy bound for smoothed instances, which interpolates between known bounds for worst-case and random instances. 

\end{abstract}
\section{Introduction}
Given a universe of elements $U=\{1,\ldots, n\}$ and a collection $\mathcal{S} = \{S_1, \ldots, S_m\}$ of subsets $S_i \subseteq U$, the discrepancy of the set system $\mathcal{S}$ is defined as 
\[
    \mathrm{disc}(\mathcal{S}) = \min_{x: U \rightarrow \{-1,1\}} \max_{i \in [m]} \Big|\sum_{j \in S_i}  x(j) \Big| \,. \]
That is, the discrepancy is the minimum imbalance that must occur in at least one of the sets in $\mathcal{S}$ over all bipartitions of $U$.
More generally for an $m \times n$ matrix $A$, the discrepancy of $A$ is defined as 
$  \mathrm{disc}(A) = \min_{x\in\{-1,1\}^n}\norm{{A}x}_{\infty}$. 
Note that the definition for set systems corresponds to choosing $A$ as the incidence matrix of $\mathcal{S}$, i.e., $A_{ij} = 1$ if $j \in S_i$ and $0$ otherwise.
Discrepancy is a well-studied area with several applications in both mathematics and theoretical computer science (see \cite{beck1995discrepancy, chazelle2001discrepancy, matousek1999geometric}).
\allowdisplaybreaks

\paragraph{Spencer's problem.}
In a celebrated result, Spencer \cite{spencer1985six} showed that the discrepancy of any set system with $m = n$ sets is $O(\sqrt{n})$, and more generally $O(\sqrt{n\log(2m/n)})$ for $m \geq n$.
To show this, he developed a general partial-coloring method (a.k.a. the entropy method), building on a counting argument of Beck \cite{beck1981roth}, that has since been used widely for various other problems. A similar approach was developed independently by Gluskin \cite{gluskin1989extremal}.
Roughly, here the elements are colored in $O(\log n)$ phases. In each phase, an $\Omega(1)$ fraction of the elements get colored while incurring a small discrepancy for each row.

\paragraph{Beck-Fiala and Koml\'{o}s problems.}
Another central question is the Beck-Fiala problem where each element appears in at most $k$ sets in $\mathcal{S}$. Equivalently, every column of the incidence matrix is $k$-sparse. The long-standing Beck-Fiala conjecture  \cite{beck1981integer} states that 
$\mathrm{disc}(\mathcal{S}) = O(\sqrt{k})$.
A further generalization is the Koml\'{o}s problem, also called the vector balancing problem, about the discrepancy of matrices $A$ with column $\ell_2$-norms at most $1$. 
Koml\'{o}s conjectured that $\disc(A) = O(1)$ for any such matrix.
Note that the Koml\'{o}s conjecture implies the Beck-Fiala conjecture.

Banaszczyk showed an $O(\sqrt{\log n})$ bound for the Koml\'{o}s problem  based on a deep geometric result~\cite{banaszczyk1998balancing}.
Here, the full coloring is constructed directly (in a single phase), and this result has also found several applications.
The resulting $O(\sqrt{ k \log n})$ bound for the Beck-Fiala problem is also the best known bound for general $k$.\footnote{For $k =o(\log n)$ an improved bound follows from the $2k-1$ bound by \cite{beck1981integer}.}
 
In contrast, the partial coloring method only gives weaker bounds of $O(\log n)$  and $O(k^{1/2} \log n)$ for these problems -- the $O(\log n)$ loss is incurred due to the $O(\log n)$ phases of partial coloring.

\paragraph{Limitations of Banaszczyk's result.}
Even though Banaszczyk's method gives better bounds for the  Koml\'{o}s problem, it is not necessarily stronger, and
is incomparable to the partial coloring method.
E.g., it is not known how to obtain Spencer's $O(\sqrt{n})$ result (or anything better than the trivial $O(\sqrt{n \log n})$ random-coloring bound) using Banaszczyk's result.
A very interesting question is whether there is a common generalization that unifies both these results and techniques.

\paragraph{Algorithmic approaches.}
Both the partial coloring method and Banaszczyk's result were originally non-algorithmic, and a lot of recent progress has resulted in their algorithmic versions.
Starting with the work of \cite{bansal2010constructive}, several different algorithmic approaches are now known for the partial coloring method \cite{lovett2015constructive, rothvoss2017constructive, harvey2014discrepancy, eldan2014efficient}, based on various elegant ideas from linear algebra, random walks, optimization and convex geometry. 

In further progress,
an algorithmic version of the $O(\sqrt{\log n})$ bound for the Koml\'{o}s problem was obtained by \cite{bansal2019algorithm}, see also \cite{bansal2017algorithmic}, and \cite{bansal2018gram} for the more general algorithmic version of Banaszczyk's result. In related work, Levy et al.~\cite{levy2017deterministic} gave deterministic polynomial time constructive algorithms for the Spencer and Koml\'{o}s settings matching $O(\sqrt{n\log(2m/n)})$ and $O(\sqrt{\log{n}})$ respectively.  

A key underlying idea behind many of these results is to perform a discrete Brownian motion (random walk with small steps) in the $\{-1,1\}^n$ cube, where the update steps are correlated and chosen to lie in some suitable subspace.
However, the way in which these subspaces are chosen for the partial coloring method and the Koml\'{o}s problem are quite different.
We give a high level description of these approaches as this will be crucial later on.

In the partial coloring approach, the walk is performed in a subspace orthogonal to the {\em tight discrepancy constraints}. If the discrepancy for some row $A_i$ reaches its target discrepancy bound, the update $\Delta x$ to the coloring satisfies $A_i \cdot \Delta x=0$. As the walk continues over time, the subspace dimension gets smaller and smaller until the walk is stuck. At this point, the subspace is reset and the {\em next phase} resumes.

On the other hand, the algorithm for the Koml\'{o}s problem 
does not consider the discrepancy constraints at all, 
and chooses 
a different subspace with a certain sub-isotropic property which ensures the discrepancy incurred for a row is roughly proportional to its $\ell_2$ norm, while ensuring that the rows with large $\ell_2$-norm incur zero-discrepancy.
In particular, in contrast to the partial coloring method, all the elements are colored in a {\em single phase}, and the discrepancy constraints are ignored.

\paragraph{The need for a combined approach.} Even though the $O(\sqrt{k \log n})$ bound for the general Beck-Fiala problem is based on Banaszczyk's method, all the important special cases where the conjectured $O(\sqrt{k})$ bound holds are based on the partial coloring method. 
For example, Spencer's problem with $m=O(n)$ sets corresponds to special case of the Beck-Fiala problem with $k = O(n)$. So Spencer's six-deviations result resolves the Beck-Fiala conjecture for this case, which we do not know how to obtain from Banaszczyk's result.

The Beck-Fiala conjecture also holds for the case of random set systems with $m \geq n$. In particular, Potukuchi \cite{potukuchi2019spectral} considers the model where each column has 1's in $k$ randomly chosen rows and shows that the discrepancy is $O(\sqrt{k})$ with high probability. See also \cite{ezra2019beck, bansal2020discrepancy, hoberg2019fourier, altschuler2021discrepancy} for related results.
Potukuchi's result crucially relies on the partial coloring approach, and it is not clear at all how to exploit the properties of random instances in Banaszcyck's approach.

Thus a natural question and a
first step towards resolving the Beck-Fiala and Koml\'{o}s conjecture, and making progress on other discrepancy problems,
is whether there exist more general techniques  
to obtain both Spencer's and Potukuchi's result and the $O(\sqrt{k \log n})$ bound for the Beck-Fiala problem in a unified way.

\subsection{Our results}

We present a new unified framework that recovers all the results mentioned above, and various other state-of-the-art results as special cases.
Our algorithm is based on a derandomization of a stochastic process that is guided by a barrier-based potential function.

Given a matrix $A$, the algorithm starts with the  all-zero coloring $x_0$. Let $x_t \in [-1,1]^n$ be the coloring at time.
The algorithm maintains a  barrier $b_t>0$ over time
and defines the slack of row $i$ at time $t$ as
\begin{equation}\label{slack}  
s_i(t)  = b_t - \underbrace{\sum_{j=1}^n a_i(j) x_t(j)}_{\text{current discrepancy}}    - \lambda  \underbrace{\sum_{j=1}^n a_i(j)^2 (1-x_t(j)^2)}_{\text{remaining variance}}.
\end{equation}
Notice that when all $x_t(j)$ eventually reach $\pm 1$, the {\em remaining variance} term is zero and the slack measures the gap between the discrepancy and the barrier. 
We define the potential  
\begin{equation}\label{potential}
    \Phi(t)  = \sum_i s_i(t)^{-p} 
\end{equation}
for some fixed $p>1$, that penalizes the rows with small slacks and blows up to infinity if some slack approaches zero. If we can ensure that the slacks are always positive and the potential is bounded, then the discrepancy is upper bounded by value of the barrier when the algorithm terminates.

At each time step, the algorithm picks a random direction $v_t$ that is orthogonal to some of the rows with the least slack, and satisfies some additional properties, and updates the coloring by a small amount in the direction $v_t$.
The barrier $b_t$ is also updated. These updates are chosen to ensure that the potential does not increase in expectation, and hence all the slacks stay bounded away from $0$. We give a more detailed overview in Section \ref{sec:framework}.

By changing the parameters $p, \lambda$ depending on the problem at hand, we obtain several results using a unified approach.
\begin{enumerate}
    \item Set coloring~\cite{spencer1985six}.
For any set system on $n$ elements and $m \geq n$ sets,
    $\mathrm{disc}(\script{S}) = O(\sqrt{n\log(2m/n)})$. 
    \item Koml\'{o}s problem~\cite{bansal2017algorithmic}.
For any ${A} \in \mathbb{R}^{m\times n}$ with columns norms $\norm{{A}^{j}}_2 \leq 1$, $\mathrm{disc}(A) = O(\sqrt{\log{n}})$.
\item  Random/Spectral Hypergraphs~\cite{potukuchi2019spectral}.
Let $A \in \{0,1\}^{m\times n}$ be the incidence matrix of a set system with $n$ elements and $m$ sets, where element lies in at most $k$ sets and let $\gamma = \max_{v\perp \mathbf{1}, \norm{v}=1} \norm{Av}$. Then for $m\geq n$,
$\mathrm{disc}(\script{S}) = O(\sqrt{k} + \gamma)$.
\item Gaussian Matrix~\cite{chandrasekaran2014integer}.
For a random matrix $A \in \R^{m\times n}$ with each entry $A_{ij} \sim \script{N}(0, \sigma^2)$ independently, with probability at least $1-(1/m^3)$,
 $\mathrm{disc}(A) = O\left(\sigma\left(\sqrt{n}+\sqrt{\log m}\right) \cdot \sqrt{\log \frac{2m}{n}} \right)$.
\end{enumerate}
More generally, given a matrix $A$, we state the following result based on optimizing the various parameters of the algorithm, depending on the properties of $A$.
This allows our framework to be applied in a black-box manner to a given problem at hand.
 
\begin{restatable}{thm}{mainthm}
\label{thm:g3}
For a matrix $A \in \R^{m\times n}$ with $\norm{A^j}_2\leq L$ and $\vert a_i(j)\vert \leq M$ for all $i\in [m], j\in [n]$, let $h:\R^+ \rightarrow \R^+$ be a non-increasing function such that for every subset $S\subseteq [n]$ and $i \in [m]$, \begin{equation}
 \sum_{j\in S} a_i(j)^2 \leq \vert S\vert\cdot h(\vert S\vert).
    \label{eq:g5}
\end{equation}
Then, for any $p > 1$, there exists a vector $x\in \{-1,1\}^n$ such that
$\norm{Ax}_{\infty} \leq 5b_0 + 2M$,
where
\begin{align}
    b_0 &= \min \left(\sqrt{8(p+1)(48m)^{1/p} \cdot \beta},\;  250 L \sqrt{\log\left(2m\right)} \right)\label{eq:c4}.
\end{align}
where $\beta = \int_{t = 0}^{n-2}h(n-t)(n-t)^{-1/p}dt$.
\end{restatable}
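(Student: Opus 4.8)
The plan is to run the stochastic process described in Section~\ref{sec:framework} with the barrier updated so that the potential $\Phi(t) = \sum_i s_i(t)^{-p}$ is a supermartingale, and then bound the final barrier value two different ways, corresponding to the two terms in the min defining $b_0$. First I would set up the dynamics: at each time step the coloring moves as $x_{t+1} = x_t + \delta v_t$ for a small step size $\delta$, where $v_t$ is a random direction chosen orthogonal to the rows of smallest slack (the "frozen" rows) and to the coordinates already at $\pm 1$, and chosen to be (sub-)isotropic on the remaining subspace so that $\E[\langle v_t, v_t\rangle]$ is controlled and $\E[(a_i \cdot v_t)^2] \lesssim \|a_i\|^2 / (\text{active dim})$. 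The barrier increment $db_t$ is then chosen large enough to absorb, in expectation, both the drift of the current-discrepancy term (which is a martingale, zero drift) and — crucially — the decrease in the remaining-variance term $\lambda \sum_j a_i(j)^2(1 - x_t(j)^2)$, which shrinks as coordinates freeze. Writing out $\E[\Delta \Phi(t) \mid x_t]$ via a second-order Taylor expansion of $s \mapsto s^{-p}$, one gets a negative first-order term from $db_t$ and positive second-order terms of size $p(p+1) s_i^{-p-2} \E[(\Delta s_i)^2]$; the choice of $db_t$ proportional to $\sum_i s_i^{-p-1}(\cdots)$ divided by $\sum_i s_i^{-p}$ makes the whole expression $\le 0$, which is the heart of the argument.

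Next I would integrate the barrier increments. The key identity is that the total increase in $b_t$ is $\sum_t db_t$, and each $db_t$ has a "variance-release" contribution and a "second-order" contribution. The variance-release part telescopes: summing $\lambda \sum_j a_i(j)^2 \, d(x_t(j)^2)$ over the whole run gives $\lambda \sum_j \|A^j\|^2 \cdot 1 \le \lambda L^2 n$ — wait, more carefully, it is bounded using the per-row structure and hypothesis~\eqref{eq:g5}. This is where $\beta = \int_0^{n-2} h(n-t)(n-t)^{-1/p}\,dt$ enters: when $t$ coordinates have frozen, the active set has size $n - t$, the relevant row-restricted $\ell_2^2$ mass on the active coordinates is at most $(n-t) h(n-t)$ by \eqref{eq:g5}, and the isotropy of $v_t$ divides this by the active dimension $n-t$, while the $s_i^{-p-1}/\sum s_i^{-p}$ weighting contributes a factor that, after using $\Phi(t) \le \Phi(0)$ and Hölder/power-mean inequalities, scales like $(n-t)^{-1/p}$ times $(\text{current }\Phi)^{1/p}$ times $(48m)^{1/p}$-type factors. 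Integrating $h(n-t)(n-t)^{-1/p}$ over $t$ produces exactly $\beta$, and carrying the constants through gives the first bound $b_0 \le \sqrt{8(p+1)(48m)^{1/p}\beta}$. The initial potential $\Phi(0) = \sum_i s_i(0)^{-p} = m \, b_0^{-p}$ (all discrepancies zero, variance term folded into the choice of $b_0$), and the self-consistent inequality $b_0^2 \gtrsim (p+1)(48m)^{1/p}\beta$ closes the loop.

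For the second bound $b_0 \le 250 L\sqrt{\log(2m)}$, I would instead argue more crudely, ignoring the function $h$ and using only the column-norm bound $\|A^j\|_2 \le L$: here the process is essentially the Komlós-type analysis, where one shows the discrepancy of each row stays $O(\|a_i\|_2 \sqrt{\log m}) = O(L\sqrt{\log m})$ with the $\sqrt{\log m}$ coming from a union bound over $m$ rows against the Gaussian-like tails of the martingale $\sum_j a_i(j) x_t(j)$; equivalently, choosing $b_0 \sim L\sqrt{\log(2m)}$ and $p \sim \log m$ makes $(48m)^{1/p} = O(1)$ and $\beta \le L^2 \cdot O(\log m)$ (since always $\sum_{j\in S} a_i(j)^2 \le L^2$, so one may take $h \equiv L^2 / |S| \cdot |S|$... i.e. the trivial bound), recovering the same estimate through the first formula. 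Finally, once the run terminates every $x(j) \in \{-1,1\}$ except possibly a few stuck coordinates; rounding those arbitrarily costs an extra $M$ per row in $\ell_\infty$, and a factor-$5$ slack in the constants absorbs the step-size discretization and the "frozen rows" that were never actually controlled by $\Phi$ — giving the stated $\|Ax\|_\infty \le 5b_0 + 2M$.

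\textbf{Main obstacle.} The delicate point is the supermartingale property: one must choose $v_t$ so that it is simultaneously (i) orthogonal to all currently-frozen rows and frozen coordinates, (ii) sub-isotropic on the active subspace with the right normalization, and (iii) such that the resulting second-order term $\sum_i p(p+1)s_i^{-p-2}\E[(a_i\cdot \delta v_t)^2]$ is dominated by the first-order gain $-p\,db_t \sum_i s_i^{-p-1}$ from the barrier increment — and to do this one needs the number of rows that are "dangerously frozen" (slack comparable to the minimum) to be small enough that removing them still leaves a subspace of dimension $\Omega(n-t)$ to move in. Controlling this active dimension, i.e. showing not too many rows freeze simultaneously, is exactly what the potential $\Phi$ buys us (few rows can have tiny slack when $\Phi \le \Phi(0)$), so the argument is somewhat circular and must be unwound carefully; getting the constants ($48$, $250$, $5$, $2$) to line up is where most of the technical work lives, but none of it is conceptually hard once the potential estimate $\E[\Delta\Phi \mid x_t] \le 0$ is in place.
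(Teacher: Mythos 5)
Your outline of the first bound in \eqref{eq:c4} is essentially the paper's Case-1 argument: block the $\Theta(n_t)$ smallest-slack rows, use an averaging/isotropy argument over an orthonormal basis of the feasible subspace to get $\dt{a_i}{v_t}^2 \lesssim h(n_t)$ on average (Lemmas \ref{lem:g2} and \ref{lem:g4}), let the barrier drift $d_t \sim (p+1)\,h(n_t)\max_{i\in\script{J}_t}s_i(t)^{-1}$ absorb the second-order term so the potential stays bounded (Lemma \ref{thm:g1}), bound $\max_{i\in\script{J}_t}s_i(t)^{-1}\le \frac{2}{b_0}(48m/n_t)^{1/p}$ from the potential, and integrate the drift to produce $\beta$. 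You elide the $a^S/a^L$ split for entries larger than $1/(2\lambda)$ (which is what makes the averaging bound and the constants work, and contributes $3b_0$ of the final $5b_0$), but that is a repairable detail.

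The genuine gap is the second bound $250L\sqrt{\log(2m)}$. Your route through the first formula fails twice. First, the inequality $\sum_{j\in S}a_i(j)^2\le L^2$ is false: $L$ bounds \emph{column} norms while this is a row sum (a single row of all ones has $L=1$ but row mass $n$), so $h(|S|)=L^2/|S|$ does not satisfy \eqref{eq:g5}. Second, even granting such an $h$, one gets $\beta=O(L^2p)$ and hence $b_0=O(\sqrt{(p+1)\beta})=O(Lp)=O(L\log m)$ at $p\sim\log m$, a full $\sqrt{\log m}$ worse than claimed. Your fallback, a union bound over $m$ rows against Gaussian-like martingale tails, also cannot give $L\sqrt{\log m}$: even after a row becomes active (restricted mass $O(1)$), mere isotropy gives per-step variance $\approx \delta^2/n_t$, which accumulates to $\Theta(\log n)$ over the run, so tail bounds plus a union bound again yield a $\log$-type, not $\sqrt{\log}$-type, bound. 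The paper's Case 2 uses a mechanism absent from your proposal: keep $d_t=0$ and force $v_t$ into the subspace $\script{G}_t$ of \eqref{eq:gt}, whose existence with dimension at least $9n_t/10$ follows from the trace/eigenvalue Lemma \ref{lem:subspace} applied to the slack-weighted matrices $G$ and $H$; on this subspace the weighted quadratic term $\sum_i\dt{2\lambda e_{t,i}-a_i}{v_t}^2 s_i(t)^{-p-1}$ is dominated by $40\sum_i\dt{a_i^{(2)}}{v_t^{(2)}}s_i(t)^{-p-1}$, i.e., it is paid for by the deterministic decrease of the $\lambda$-weighted remaining-variance term inside the slack itself rather than by barrier motion. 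Then $f(t)\le 0$ as soon as $b_0/42 \ge 20(p+1)\max_{j\in\script{J}_t}s_j(t)^{-1}=O((p+1)/b_0)$ with $p=2\log(2m)$, giving $b_0=O(\sqrt{\log(2m)})$ after rescaling columns to norm $1$ — this is also exactly why the bound depends on column norms $L$ and not row norms. Without this subspace argument your proof establishes only the first term of \eqref{eq:c4}.
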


Let us see how Theorem \ref{thm:g3} directly leads to the results stated above.
\paragraph{Set coloring.} As $\|A^j\|_2\leq \sqrt{m}$, we have $L = \sqrt{m}$, and as $\sum_{j\in S} a_i(j)^2\leq \vert S \vert$, we can set $h(t) = 1$ for all $t\in [n]$. 
Consider \eqref{eq:c4} and suppose $p\geq 1.1$ so that $p/(p-1) = O(1)$. Then 
\[ \beta = \int_{t=0}^{n-2} h(n-t)\cdot (n-t)^{-1/p}dt  = O(n^{1-1/p}),\]
and the first bound in \eqref{eq:c4} gives $b_0 = O(pn^{1/2} (m/n)^{1/p})$. Setting $p=\log(2m/n)$ gives Spencer's $O(\sqrt{n \log (2m/n)})$ bound.

Interestingly, the above result gives a new proof of Spencer's six-deviations result based on a direct single-phase coloring. In contrast, all the previously known proofs of this result \cite{bansal2010constructive, lovett2015constructive, rothvoss2017constructive, eldan2014efficient} required 
multiple partial coloring phases.

\paragraph{Koml\'{o}s problem.} Here $L=1$ and the second term in \eqref{eq:c4} directly gives a $O(\sqrt{\log m})$ bound\footnote{It would be interesting to construct an explicit family of examples where the discrepancy obtained by our approach is $\Omega(\sqrt{\log n})$.}. This also implies an $O(\sqrt{\log n})$ bound as at most $n^2$ rows can have $\ell_1$-norm more than $1$, and we can assume that $m \leq n^2$. 

Similarly, bounding $h(t)$ using standard concentration bounds, directly gives the following results for various models of random matrices.

\begin{restatable}[Sub-Gaussian Matrix]{thm}{subg}
\label{thm:subg}
Let $A \in \R^{m\times n}$ with each column drawn independently from a distribution $\script{D}$, where the marginal of each coordinate is sub-Gaussian with mean $0$ and variance $\sigma^2$. Then, for $n \leq m \leq 2^{O(\sqrt{n})}$, $ \mathrm{disc}(A) = O(\sigma\sqrt{n\log(2m/n)})$, with probability at least $1-(1/m^2)$.
\end{restatable}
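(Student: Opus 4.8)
The plan is to obtain Theorem~\ref{thm:subg} as a direct, black-box application of Theorem~\ref{thm:g3}. First I would show that on an event $\mathcal{E}$ of probability at least $1-m^{-2}$ the random matrix $A$ admits an entrywise bound $M$ and a non-increasing function $h$ satisfying~\eqref{eq:g5}, both of the right order, and then invoke Theorem~\ref{thm:g3} on that fixed realization of $A$. Since only the first bound on $b_0$ in~\eqref{eq:c4} will be used, no control on the column norms is needed: formally one takes $L$ to be the actual largest column $\ell_2$-norm of $A$, which merely enlarges the (unused) second branch of the $\min$.

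For the entrywise bound, each $a_i(j)$ is mean zero and $O(\sigma)$-sub-Gaussian, so $\Pr[\,|a_i(j)|>c\sigma\sqrt{\log(mn)}\,]\le (mn)^{-3}$ for a suitable constant $c$; a union bound over the $mn$ entries lets me take $M=c\sigma\sqrt{\log(mn)}=O(\sigma\sqrt{\log m})$ (using $m\ge n$) on $\mathcal{E}$. For $h$, fix a row $i$ and a set $S$ with $|S|=s$; because the columns of $A$ are independent, $\{a_i(j)^2\}_{j\in S}$ are i.i.d.\ sub-exponential variables with mean $\sigma^2$ and sub-exponential norm $O(\sigma^2)$, so Bernstein's inequality bounds the probability that $\sum_{j\in S}a_i(j)^2$ exceeds $\sigma^2 s+C\sigma^2\big(s\log(en/s)+\log m\big)$ by $\exp\!\big(-2(s\log(en/s)+\log m)\big)$ once $C$ is a large enough absolute constant. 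Summing over the $\binom{n}{s}m\le e^{\,s\log(en/s)}m$ pairs $(S,i)$ with $|S|=s$, and then over $s=1,\dots,n$, gives total failure probability at most $m^{-2}\sum_{s\ge 1}e^{-2s}=O(m^{-2})$; so on $\mathcal{E}$ I may take the non-increasing function $h(s)=C_0\sigma^2\big(1+\log(n/s)+(\log m)/s\big)$ for $s\le n$ (extended flatly for $s>n$), which verifies~\eqref{eq:g5}.

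I would then condition on $\mathcal{E}$ and apply Theorem~\ref{thm:g3} with $p=\max(2,\log(2m/n))$. Splitting $\beta=\int_{2}^{n}h(s)\,s^{-1/p}\,ds$ into its three pieces, the ``$1$'' and ``$\log(n/s)$'' terms integrate to $O(\sigma^2 n^{1-1/p})$ exactly as in the set-coloring computation following Theorem~\ref{thm:g3} (here using $p\ge 2$), while the ``$(\log m)/s$'' term contributes $\log m\int_2^n s^{-1-1/p}\,ds=O(\sigma^2 p\log m)$; hence $\beta=O\big(\sigma^2(n^{1-1/p}+p\log m)\big)$ and, by~\eqref{eq:c4},
\[
b_0^2=O\big((p+1)\,m^{1/p}\,\beta\big)=O\big(\sigma^2\,p\,n\,(m/n)^{1/p}+\sigma^2\,p^2\,m^{1/p}\log m\big).
\]
For the chosen $p$ one has $(m/n)^{1/p}=O(1)$ and $p=O(\log(2m/n))$, so the first summand is $O(\sigma^2 n\log(2m/n))$.

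The hard part will be to check that the second summand $\sigma^2 p^2 m^{1/p}\log m$ is no larger, and this is the only point where the hypothesis $m\le 2^{O(\sqrt n)}$ enters. It gives $\log m=O(\sqrt n)$ and $p=O(\log m)=O(\sqrt n)$; writing $m^{1/p}=n^{1/p}(m/n)^{1/p}=O(n^{1/p})$, it suffices to show $p\log m=O(n^{1-1/p})$, which I would do by a short case split: if $p\ge\log^2 n$ then $n^{1-1/p}\ge n/2$ absorbs $p\log m=O(n)$; if $p<\log^2 n$ then $\log(2m/n)<\log^2 n$ forces $\log m=O(\log^2 n)$, so $p\log m=O(\log^4 n)\le\sqrt n\le n^{1-1/p}$. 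Hence $b_0=O(\sigma\sqrt{n\log(2m/n)})$, and since $M=O(\sigma\sqrt{\log m})=O(\sigma n^{1/4})=o(b_0)$, Theorem~\ref{thm:g3} produces $x\in\{-1,1\}^n$ with $\norm{Ax}_\infty\le 5b_0+2M=O(\sigma\sqrt{n\log(2m/n)})$, as claimed. Everything except the estimate $p\log m=O(n^{1-1/p})$ is routine and parallels the set-coloring derivation already given in the text.
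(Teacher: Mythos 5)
Your proof follows essentially the same route as the paper: the same Bernstein-plus-union-bound estimates giving $h(s)=O\big(\sigma^2(\log(en/s)+(\log m)/s)\big)$ and $M=O(\sigma\sqrt{\log (mn)})$, the same choice $p\approx \log(2m/n)$, and the same evaluation $\beta=O\big(\sigma^2(n^{1-1/p}+p\log m)\big)$ fed into the first branch of \eqref{eq:c4}; your explicit case analysis showing $p\log m=O(n^{1-1/p})$ merely spells out where $m\le 2^{O(\sqrt{n})}$ enters, which the paper leaves implicit. One small nit: with your stated tail exponent $\exp\big(-2(s\log(en/s)+\log m)\big)$ the union over $\binom{n}{s}m$ pairs yields total failure probability $O(1/m)$ rather than $O(1/m^2)$, but enlarging the absolute constant in the Bernstein threshold (as you already allow) fixes this without changing the asymptotics.
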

\begin{restatable}[Random Matrix]{thm}{random}
\label{thm:random}
Let $A \in \R^{m\times n}$, $m\geq n$ such that every column of $A$ is drawn independently from the uniform distribution on $\{x\in \R^m: \norm{x}_2 \leq 1\}$. Then $\mathrm{disc}(A) = O(1)$ with probability at least $1-(1/m^2)$.
\end{restatable}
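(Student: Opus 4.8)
The plan is to apply Theorem~\ref{thm:g3} in a black-box way, with the exponent $p$ chosen to be an absolute constant (say $p=2$, in contrast to Spencer's setting where $p=\log(2m/n)$), and to show that for a random $A$ the parameters $L$, $M$, and $h$ are so small that the first bound in \eqref{eq:c4} is $O(1)$. The easy parameters come for free: since every column $A^j$ lies in the unit ball of $\R^m$ we have $\norm{A^j}_2 \le 1$ and $|a_i(j)|\le 1$, so we may take $L=M=1$ deterministically, and the error term $2M$ in Theorem~\ref{thm:g3} is already $O(1)$.

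The crux is to exhibit a non-increasing $h$ satisfying \eqref{eq:g5} with high probability. Fix a row $i$ and a set $S$ with $|S|=s$. Because the columns are drawn independently, $\sum_{j\in S} a_i(j)^2$ is a sum of $s$ independent random variables, each with mean $\tfrac{1}{m+2}$; moreover a coordinate of a uniform point in the unit ball of $\R^m$ is sub-Gaussian with variance proxy $O(1/m)$ (it is dominated in distribution by a coordinate of the uniform point on the sphere), so each $a_i(j)^2$ is sub-exponential with Orlicz norm $O(1/m)$. Bernstein's inequality then gives, for any $t>0$,
\[
\Pr\!\left[\,\sum_{j\in S}a_i(j)^2 > \frac{s}{m+2}+t\,\right]\le 2\exp\!\left(-c\,\min\!\Big(\tfrac{t^2 m^2}{s},\,tm\Big)\right).
\]
Union-bounding over all $m$ rows, all $\binom{n}{s}\le(en/s)^s$ sets of each size $s$, and all $s\le n$ — using $m\ge n$ so that $\log m$ dominates $\log n$ — it suffices to take $t=t_s:=\tfrac{C}{m}\big(\log m+s\log(en/s)\big)$ for a suitable absolute constant $C$; one checks $t_s\ge s/m$, so we are in the linear regime of the $\min$. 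Hence, off a failure event of probability at most $m^{-3}$, every $S$ of size $s$ and every row satisfy $\sum_{j\in S}a_i(j)^2\le s\cdot h(s)$ with
\[
h(s)=\frac{C'(1+\log(n/s))}{m}+\frac{C'\log m}{sm},
\]
which is non-increasing in $s$ as required.

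It remains to compute $\beta$ and plug in. With the above $h$ and the substitution $u=nv$,
\[
\beta=\int_2^n h(u)\,u^{-1/p}\,du\;\le\;\frac{C'}{m}\left[\,n^{1-1/p}\Big(\tfrac{2}{1-1/p}+\tfrac{1}{(1-1/p)^2}\Big)+p\log m\,\right],
\]
using $\int_2^n u^{-1/p}du\le \tfrac{n^{1-1/p}}{1-1/p}$, the convergent integral $\int_0^1(-\log v)v^{-1/p}dv=(1-1/p)^{-2}$ for the $\log(n/u)$ term, and $\int_2^\infty u^{-1-1/p}du\le p$ for the $\tfrac{\log m}{u}$ term. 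Since $(48m)^{1/p}/m=m^{1/p-1}$, the first bound in \eqref{eq:c4} gives
\[
b_0^2\le 8(p+1)\,48^{1/p}\,C'\left[(n/m)^{1-1/p}\Big(\tfrac{2}{1-1/p}+\tfrac{1}{(1-1/p)^2}\Big)+p\,m^{1/p-1}\log m\right].
\]
Because $m\ge n$ we have $(n/m)^{1-1/p}\le 1$, and for the fixed choice $p=2$ the quantity $m^{1/p-1}\log m=m^{-1/2}\log m$ is bounded by an absolute constant; hence $b_0=O(1)$ and $\disc(A)\le 5b_0+2M=O(1)$. The whole argument succeeds whenever the Bernstein union bound holds, which fails with probability at most $m^{-3}\le m^{-2}$, as claimed in Theorem~\ref{thm:random}.

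The main obstacle is the second step: the bound \eqref{eq:g5} must hold \emph{simultaneously} for all $2^n$ subsets $S$, so the union bound unavoidably pushes the overhead $\log\binom{n}{s}\sim s\log(n/s)$ (plus a stray $\log m$ from ranging over rows) into $h$, and one must verify this does not spoil the $O(1)$ conclusion. It does not, for two structural reasons visible above: the $s\log(n/s)$ term, after dividing by $s$ and integrating against $u^{-1/p}$, contributes only a constant multiple of $n^{1-1/p}$ (the integral $\int_0^1(-\log v)v^{-1/p}dv$ converges precisely because $1/p<1$), and the residual $\log m$ term is annihilated by the prefactor $m^{1/p-1}=o(1)$ arising from $(48m)^{1/p}/m$. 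The hypothesis $m\ge n$ is used only to guarantee $(n/m)^{1-1/p}\le 1$. Finally, Theorems~\ref{thm:subg} and its analogue for bounded-column models follow the same template, the only change being the concentration bound used to derive $h$ from the distribution of the entries.
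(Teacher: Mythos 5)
Your proposal is correct, but it takes a different (and somewhat longer) route than the paper. The paper's proof is a two-line reduction: a coordinate of a uniform point in the unit ball of $\R^m$ is sub-Gaussian with parameter $O(1/\sqrt{m})$ (citing Vershynin), and then the Sub-Gaussian Matrix theorem (Theorem~\ref{thm:subg}; the paper's citation of Theorem~\ref{thm:sr1} there appears to be a typo) gives $\disc(A)=O(\sigma\sqrt{n\log(2m/n)})=O(\sqrt{(n/m)\log(2m/n)})=O(1)$ for $m\geq n$. You instead bypass Theorem~\ref{thm:subg} and apply Theorem~\ref{thm:g3} directly: you re-derive the function $h$ via Bernstein's inequality for the sub-exponential variables $a_i(j)^2$ with a union bound over rows and subsets (this is essentially the same computation as the paper's proof of Theorem~\ref{thm:subg}, specialized to $\sigma^2=O(1/m)$), but you fix $p=2$ rather than $p=\Theta(\log(2m/n))$ and absorb the $\log m$ overhead through the prefactor $m^{1/p-1}$. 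What your route buys: it is self-contained, and because the residual $p\,m^{1/p-1}\log m$ term is $o(1)$ for constant $p$, it works for the full range $m\geq n$ with no upper bound on $m$ — whereas Theorem~\ref{thm:subg} (and hence the paper's one-line derivation) formally assumes $m\leq 2^{O(\sqrt{n})}$, a restriction the statement of Theorem~\ref{thm:random} does not carry. What the paper's route buys is brevity and reuse of an already-proved theorem. Your probability accounting ($m^{-3}$ failure from the union bound, using $n\leq m$) and the estimates $\int_0^1(-\log v)v^{-1/2}dv<\infty$, $(n/m)^{1-1/p}\leq 1$ are all sound.
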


\subsubsection{Flexibility of the method}
An important advantage of the method is it flexibility, which can be used to obtain several additional results.

\paragraph{Subadditivity.} Given $A, B \in \R^{m \times n}$, can we bound  $\disc(A+B)$ given bounds on $\disc(A)$ and $\disc(B)$? 
Such questions can be directly handled by this framework by considering a weighted combination of two different potential functions -- one for $A$ and another for $B$.

More precisely,
let us define $\mathrm{sdisc}(A)$, the {\em Stochastic Discrepancy} of a matrix $A$, to be the upper bound on discrepancy obtained by the Potential Walk described in Algorithm~\ref{algo:potential_walk}. 
For this notion, we have the following approximate subadditivity for arbitrary matrices.
\begin{restatable}[Subadditivity of Stochastic Discrepancy]{thm}{subadditive}
\label{thm:subadd}
For any two arbitrary matrices $A,B \in \R^{m \times n}$, there exists $x \in \{-1,1\}^n$ such that 
\begin{align*}
 |\dt{a_i}{x}| &\lesssim \mathrm{sdisc}(A) \quad \text{for every row }a_i \text{ of } A, \text{ and}\\
 |\dt{b_i}{x}| &\lesssim \mathrm{sdisc}(B) \quad \text{for every row }b_i \text{ of } B.
\end{align*} \label{thm:sdisc} In particular, this implies that 
 $\mathrm{sdisc}(A+B)\lesssim \mathrm{sdisc}(A)+\mathrm{sdisc}(B)$.
\end{restatable}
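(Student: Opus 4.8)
The plan is to run the Potential Walk of Algorithm~\ref{algo:potential_walk} \emph{once}, on a single coloring $x_t\in[-1,1]^n$, but to guide it by the combined potential $\Phi(t)=\Phi_A(t)+\Phi_B(t)$, where $\Phi_A(t)=\sum_i s^A_i(t)^{-p_A}$ and $\Phi_B(t)=\sum_i s^B_i(t)^{-p_B}$ are the potentials~\eqref{potential} built from the slacks~\eqref{slack} of the rows of $A$ and of $B$ respectively, each family of rows carrying its own barrier $b^A_t,b^B_t$ and its own parameters $(p_A,\lambda_A),(p_B,\lambda_B)$ --- taken exactly as in the runs that certify $\mathrm{sdisc}(A)$ and $\mathrm{sdisc}(B)$. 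Since $\Phi_A,\Phi_B\ge 0$, a bound on $\Phi$ bounds each summand.

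Next I would fix the step. At time $t$ let $F_t$ be the coordinates already frozen at $\pm1$, let $R_A(t)$ be the set of least-slack rows of $A$ that the single-matrix analysis forces the step to be orthogonal to, and $R_B(t)$ the analogue for $B$; let
\[
W_t \;=\; \{\, v\in\R^n : v_j=0\ \forall j\in F_t,\ \dt{a_i}{v}=0\ \forall i\in R_A(t),\ \dt{b_i}{v}=0\ \forall i\in R_B(t)\,\},
\]
and take $V_t$ a mean-zero Gaussian supported on $W_t$, with per-step scale small enough that the second-order drift estimates of both single-matrix analyses apply; update $x_{t+1}=x_t+V_t$ (clipping and freezing coordinates that reach $\pm1$) and advance $b^A_t,b^B_t$ by their respective schedules. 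Conditioned on the history, $V_t$ satisfies \emph{every} hypothesis the $A$-analysis places on its step: it vanishes on the rows of $R_A(t)$, and since the extra constraints coming from $B$ only shrink $W_t$, they can only decrease the projections $\|P_{W_t}a_i\|$ and hence the per-row step variances $\E[\dt{a_i}{V_t}^2]$. Thus the $A$-analysis applies verbatim and yields $\E[\Delta\Phi_A\mid x_t]\le 0$; symmetrically $\E[\Delta\Phi_B\mid x_t]\le 0$, so $\E[\Delta\Phi\mid x_t]\le 0$. Hence $\Phi$ is a nonnegative supermartingale and, running the walk to termination exactly as in the single-matrix case, with probability bounded away from $0$ we reach $x\in\{-1,1\}^n$ along a trajectory on which $\Phi(t)\le\Phi(0)$ throughout, so every slack remains bounded away from $0$ (by $\Phi(0)^{-1/p_A}$ for the rows of $A$, by $\Phi(0)^{-1/p_B}$ for those of $B$). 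At the end the remaining-variance terms vanish, so positivity of $s^A_i$ forces $|\dt{a_i}{x}|\le b^A_{\mathrm{final}}+O(M_A)\lesssim\mathrm{sdisc}(A)$ for every row of $A$, and likewise $|\dt{b_i}{x}|\lesssim\mathrm{sdisc}(B)$. Finally $|\dt{a_i+b_i}{x}|\le|\dt{a_i}{x}|+|\dt{b_i}{x}|\lesssim\mathrm{sdisc}(A)+\mathrm{sdisc}(B)$, which is the claimed subadditivity; reading the same run as a Potential Walk for $A+B$ gives $\mathrm{sdisc}(A+B)\lesssim\mathrm{sdisc}(A)+\mathrm{sdisc}(B)$.

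The one genuinely delicate point --- and the step I expect to be the main obstacle --- is keeping $W_t$ large enough to take a step, i.e.\ ensuring $\dim W_t = n-|F_t|-|R_A(t)|-|R_B(t)|$ stays $\Omega(n)$ (and in particular positive) for the whole run; otherwise the walk could get stuck with coordinates still unfrozen. This requires revisiting the single-matrix analysis to confirm that $|R_A(t)|$ and $|R_B(t)|$ are each only a constant fraction of the currently active coordinates $n-|F_t|$ --- which is all the barrier argument actually uses --- so that their sum still leaves $\Omega(n)$ free directions; if the default reservation is too large one simply zeroes fewer of the least-slack rows, or rescales $\lambda_A,\lambda_B$, losing only constant factors that are absorbed into $\lesssim$. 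A comparatively routine additional check is that the two barrier schedules $b^A_t,b^B_t$, which may grow at different rates when $p_A\ne p_B$, couple only through the shared clock $t$ and the shared set $F_t$ of frozen coordinates, neither of which enters the per-matrix drift computation --- so combining the potentials introduces no new interaction beyond the dimension budget.
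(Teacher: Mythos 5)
Your overall skeleton (one walk, a combined potential, steps confined to the intersection of the two matrices' constraint subspaces, and a dimension count for that intersection) is the same as the paper's, but two specific steps in your argument do not hold as stated. First, the claim that because the $B$-constraints ``only shrink $W_t$'' the per-row variances $\E[\dt{a_i}{V_t}^2]$ can only decrease, ``thus the $A$-analysis applies verbatim,'' is not valid. The single-matrix drift bound is not a monotone consequence of small step variance: the negative drift comes from the remaining-variance term $\lambda\,\dt{a_i^{(2)}}{v_t^{(2)}}$ (and/or from the barrier rate $d_t$, which is calibrated via Lemma~\ref{lem:g4}), and this term \emph{also} shrinks when you restrict the subspace. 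What the analysis actually needs is a ratio/averaging condition -- either membership in $\script{G}_t$ as in \eqref{eq:gt}, or the bound of Lemma~\ref{lem:g4}, which holds only on average over an orthonormal basis of the \emph{specific} feasible subspace -- and a generic (or Gaussian) direction in the intersection $\script{Z}^1_t\cap\script{Z}^2_t$ need not satisfy it for either matrix. The paper fixes this by rerunning the averaging argument on a basis of the intersection (whose dimension is still $\Omega(n_t)$ by Lemma~\ref{lem:g2}) and using Markov's inequality to find one direction satisfying the analogue of \eqref{eq:vt1} for $A$ and $B$ simultaneously, with the constant degrading from $8$ to $25$ and $d_t$ multiplied by $4$; some such simultaneous selection step is unavoidable and is missing from your proposal.

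Second, your combined potential $\Phi=\Phi_A+\Phi_B$ is unweighted, whereas the paper uses $\Phi(t)=(b^1_0/2)^{p_1}\Phi_1(t)+(b^2_0/2)^{p_2}\Phi_2(t)$. The weights are not cosmetic: the per-matrix slack bound one extracts from $\Phi(t)\le\Phi(0)$ is $\max_{i\in\script{J}^\ell_t}s_i(t)^{-1}\le\big(12\,\Phi(0)/n_t\big)^{1/p_\ell}$, and with the unweighted sum $\Phi(0)$ is dominated by whichever of $\Phi_A(0),\Phi_B(0)$ is larger, which destroys the $1/b_0^\ell$ factor in \eqref{eq:k2}. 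Concretely, combining Spencer parameters for $A$ ($p_A\approx 1$, $b^A_0\approx\sqrt{n}$, so $\Phi_A(0)\approx\sqrt{n}$) with Koml\'os parameters for $B$ ($p_B=2\log(2m)$) gives $\max_i s_i(t)^{-1}=\Theta(1)$ for rows of $B$ instead of $\Theta(1/b^B_0)$, so the condition $f(t)\le 0$ forces $b^B_0=\Omega(p_B)=\Omega(\log n)$ rather than $O(\sqrt{\log n})$; the conclusion $|\dt{b_i}{x}|\lesssim\mathrm{sdisc}(B)$ then fails by a non-constant factor. The normalization $(b^\ell_0/2)^{p_\ell}$ is exactly what makes the combined potential bound translate back into per-matrix slack bounds with only constant-factor loss, as in \eqref{eq:sa2}. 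Your dimension-budget concern, by contrast, is handled correctly: Lemma~\ref{lem:g2} gives $\dim(\script{Z}^\ell_t)\ge\lceil 2n_t/3\rceil$, so the intersection retains dimension at least $n_t/3$, just as you anticipated.
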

Here $a \lesssim b$ means that $a = O(1) b$.
The theorem is algorithmic if $A,B$ are given. It also implies that for any matrix $A$, we have $\mathrm{sdisc}(A) \lesssim \min_B ( \mathrm{sdisc}(B)+\mathrm{sdisc}(A-B))$.

Similar questions have been studied previously in the context of understanding the discrepancy of unions of systems \cite{matouek-detlb, MatousekNikolov15}. For example,
other related quantities such as the $\gamma_2$-norm and the determinant lower bound are also subadditive \cite{matouek-detlb, MatousekNikolov15},
We remark that the additive bound cannot hold for the (actual) discrepancy or even hereditary discrepancy\footnote{A classical example due to Hoffman gives two set systems $A$ and $B$, each with hereditary discrepancy $1$, but their union  has discrepancy $\Omega(\log n/\log \log n)$ \cite{Matousek-geometric}.}, and a logarithmic loss is necessary.
For this reason, the previous additive bounds based on $\gamma_2$-norm and the determinant lower bound lose extra polylogarithmic factors when translated to discrepancy.

A direct application of Theorem \ref{thm:sdisc} is the following.
\begin{restatable}[Semi-Random Koml\'{o}s]{thm}{semirandom}
\label{thm:sr1}
Let $C \in \R^{m\times n}$ be an arbitrary matrix with columns satisfying $\norm{C^j}_2 \leq 1$ for all $j \in [n]$, and $R \in \R^{m \times n}$ be a matrix with entries drawn i.i.d. from $\mathcal{N}(0, \sigma^2)$. Then, for $n \leq m \leq 2^{O(\sqrt{n})}$, with probability at least $1-(1/m^2)$,
\begin{equation*}
    \mathrm{disc}(C+R) = O\left(\sqrt{\log{n}}+ \sigma\sqrt{n\log(2m/n)} \right).
\end{equation*}
\end{restatable}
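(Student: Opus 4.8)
The plan is to deduce Theorem~\ref{thm:sr1} directly from the subadditivity of stochastic discrepancy (Theorem~\ref{thm:sdisc}), combined with the two single-matrix bounds already in hand. I would apply Theorem~\ref{thm:sdisc} with $A=C$ and $B=R$, which produces a \emph{single} $x\in\{-1,1\}^n$ with $|\dt{c_i}{x}|\lesssim\mathrm{sdisc}(C)$ for every row $c_i$ of $C$ and $|\dt{r_i}{x}|\lesssim\mathrm{sdisc}(R)$ for every row $r_i$ of $R$. Since $|\dt{c_i+r_i}{x}|\le|\dt{c_i}{x}|+|\dt{r_i}{x}|$, this $x$ certifies $\mathrm{disc}(C+R)\lesssim\mathrm{sdisc}(C)+\mathrm{sdisc}(R)$, so it remains only to bound the two stochastic discrepancies. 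This routing through $\mathrm{sdisc}$ is essential: the columns of $C+R$ have $\ell_2$-norm as large as $\Theta(\sigma\sqrt n)$, so neither the Koml\'os bound nor the sub-Gaussian bound applies to $C+R$ directly.

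For the deterministic part, I would reuse the reduction from the Koml\'os discussion following Theorem~\ref{thm:g3}. Any row $c_i$ with $\|c_i\|_1\le 1$ satisfies $|\dt{c_i}{x}|\le 1$ for all $x\in\{-1,1\}^n$, and since $\sum_i\|c_i\|_2^2=\sum_j\|C^j\|_2^2\le n$, Cauchy--Schwarz implies that at most $n^2$ rows have $\|c_i\|_1>1$. Hence the $C$-potential in the Potential Walk need only track these $\le n^2$ rows, and Theorem~\ref{thm:g3} with $L=1$ (the second branch of \eqref{eq:c4}, applied with $n^2$ in place of $m$) gives $\mathrm{sdisc}(C)=O(\sqrt{\log n})$; the untracked rows contribute only $1=O(\sqrt{\log n})$ to the final discrepancy.

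For the random part, each column of $R$ is independent with every coordinate marginal equal to $\mathcal{N}(0,\sigma^2)$, hence sub-Gaussian with mean $0$ and variance $\sigma^2$, so $R$ is an instance of the model of Theorem~\ref{thm:subg}. Since the proof of that theorem runs the Potential Walk, it in fact bounds $\mathrm{sdisc}(R)$, giving $\mathrm{sdisc}(R)=O(\sigma\sqrt{n\log(2m/n)})$ for $n\le m\le 2^{O(\sqrt n)}$ on an event of probability at least $1-(1/m^2)$ (the additive $2M$ term of Theorem~\ref{thm:g3} is absorbed, as $\max_{i,j}|R_{ij}|=O(\sigma\sqrt{\log(mn)})=O(\sigma\sqrt n)$ with high probability in this range of $m$). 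Conditioning on this event and combining the two bounds yields $\mathrm{disc}(C+R)\lesssim\mathrm{sdisc}(C)+\mathrm{sdisc}(R)=O(\sqrt{\log n}+\sigma\sqrt{n\log(2m/n)})$, which is Theorem~\ref{thm:sr1}.

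Being a corollary-style statement, the argument has no serious obstacle; the two points that need care are (i) reading Theorem~\ref{thm:subg} as a bound on $\mathrm{sdisc}(R)$ rather than just $\mathrm{disc}(R)$, which is what lets it be plugged into Theorem~\ref{thm:sdisc}; and (ii) the bookkeeping in Theorem~\ref{thm:sdisc} when the $C$-potential covers only the $\le n^2$ large-$\ell_1$ rows while the $R$-potential covers all $m$ rows, with the remaining rows of $C$ handled by the trivial bound $|\dt{c_i}{x}|\le\|c_i\|_1\le 1$. Neither step loses a logarithmic factor -- this is exactly the advantage of composing $\mathrm{sdisc}$ bounds rather than discrepancy bounds, which would incur an extra $\mathrm{polylog}$ factor.
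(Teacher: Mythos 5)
Your proposal is correct and follows essentially the same route as the paper: Theorem~\ref{thm:sr1} is stated there as a direct application of the subadditivity result (Theorem~\ref{thm:sdisc}), combining the Koml\'os bound $\mathrm{sdisc}(C)=O(\sqrt{\log n})$ (via $L=1$ and the reduction to at most $n^2$ heavy rows) with the sub-Gaussian bound $\mathrm{sdisc}(R)=O(\sigma\sqrt{n\log(2m/n)})$ from Theorem~\ref{thm:subg}. Your bookkeeping remarks (treating the sub-Gaussian bound as an $\mathrm{sdisc}$ bound and handling the small-$\ell_1$ rows of $C$ trivially) match the paper's intended argument.
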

For $m=O(n)$, the bound above is $O(\sqrt{\log n}+\sigma\sqrt{n})$, which is better than the bound of $O(\sqrt{\log n}(1+\sigma\sqrt{n}))$ obtained by directly applying the best-known bound for the Koml\'{o}s problem to $C+R$.

As another application, consider a matrix $C$ with $n$ columns and two sets of rows, $A$ and $B$, where each row in $A$ has entries in $\{0,1\}$, and the column norm of every column restricted to rows in $B$ is at most $1$. Suppose that $A$ has $O(n)$ rows.
Applying the framework gives a coloring
with $O(\sqrt{n})$ discrepancy for rows in $A$ and $O(\sqrt{\log{n}})$ for rows in $B$.\footnote{This answers a question of Haotian Jiang.}  
Notice that using previous techniques, if we apply the partial coloring method to get $O(\sqrt{n})$ discrepancy for $A$, this would give $O(\log n)$ for rows of $B$. On the other hand, if we apply try to obtain $O(\sqrt{\log n})$ discrepancy for $B$, all the known methods would incur $O(\sqrt{n \log n})$ discrepancy for $A$.

\paragraph{Relaxing the function $h(\cdot)$.} 
Recall that the function $h$ in Theorem \ref{thm:g3}, that controls how the $\ell_2$ norms of rows decrease when restricted to subsets $S$ of columns, and plays an important role in the bounds. 
In many random or pseudo-random instances however, a worst case bound on $h$ can be quite pessimistic. For example, here even though most rows decrease significantly when restricted to $S$, $h$ can remain relatively high due to a few outlier rows.
The following result gives improved bound for such settings where for any subset $S$ of columns, most row sizes restricted to $S$ do not deviate much from their expectation if $S$ is chosen at random.
\begin{restatable}[pseudo-random bounded degree hypergraphs]{thm}{extended}
\label{thm:extended}
Let $A \in \{0,1\}^{m\times n}$ such that $\norm{A^j}_1 \le k$. Suppose there exists $\beta \leq k$ s.t. for any $S\subseteq [n]$ and any $c>0$, the number of rows of $A$ with
\begin{equation}
    \Big\vert \sum_{j\in S} a_i(j) - \norm{a_i}_1 \cdot(\vert S\vert/n)\Big\vert  \geq c\beta\label{eq:s1}
\end{equation}
is at most $c^{-2}\vert S\vert$. Then  $\mathrm{disc}(A) =  O(\sqrt{k} + \beta)$.
\end{restatable}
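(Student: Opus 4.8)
The plan is to run the Potential Walk of Algorithm~\ref{algo:potential_walk} on $A$, but to replace the worst-case hypothesis \eqref{eq:g5} of Theorem~\ref{thm:g3} by a \emph{typical-case} bound extracted from \eqref{eq:s1}, and to neutralize the few rows that violate it by keeping the walk orthogonal to them. Throughout we may assume $\beta\ge 1$ (otherwise replace $\beta$ by $1$, which only weakens \eqref{eq:s1} and keeps $\beta\le k$).

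First I would turn \eqref{eq:s1} together with the column-sparsity constraint $\norm{A^j}_1 \le k$ into a statement about restricted row sizes. Since $A$ is $0/1$, the remaining-variance term for row $i$ at the alive set $S$ (the coordinates with $|x_t(j)|<1$) equals $\sum_{j\in S} a_i(j)^2(1-x_t(j)^2)\le |S_i\cap S|$, so it suffices to control $|S_i\cap S|$. Writing $|S_i\cap S| = \norm{a_i}_1\cdot(|S|/n) + D_i(S)$, the mean part satisfies $\sum_i \norm{a_i}_1(|S|/n) = (|S|/n)\sum_{j} \norm{A^j}_1 \le k|S|$, so by Markov at most $|S|/C$ rows have $\norm{a_i}_1(|S|/n) > Ck$; and by \eqref{eq:s1} with $c=\sqrt{C}$, at most $|S|/C$ rows have $|D_i(S)| > \sqrt{C}\,\beta$. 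Hence, choosing $C$ a large absolute constant and setting $\tau := C(k+\beta^2)$, for every alive set $S$ all but at most $|S|/4$ rows satisfy $|S_i\cap S|\le \tau$.

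Next I would modify the update rule of Algorithm~\ref{algo:potential_walk} so that at each step the direction $v_t$ is orthogonal not only to the small-slack rows used in the base algorithm but also to every row whose current remaining variance exceeds $\tau$. By the previous paragraph this adds at most $|S_t|/4$ constraints, so the update subspace still has dimension a constant fraction of $|S_t|$; the walk therefore never gets stuck and terminates with $x\in\{-1,1\}^n$. A row is ever perturbed only while its remaining variance is at most $\tau$, so for the purpose of the potential argument the effective instance has ``column budget'' $h(\ell)=\min(1,\tau/\ell)$, and one can rerun the estimate behind \eqref{eq:c4} of Theorem~\ref{thm:g3} (with $M=1$, $L\le\sqrt{k}$, and $p$ chosen as in the examples following Theorem~\ref{thm:g3}) to bound the discrepancy of the perturbed rows by $O(\sqrt{\tau})=O(\sqrt{k}+\beta)$, while the orthogonalized rows accumulate zero discrepancy throughout. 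Combining via the $5b_0+2M$ form of the bound, with $b_0=O(\sqrt\tau)$ and $M=1$, yields $\mathrm{disc}(A)=O(\sqrt{k}+\beta)$.

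The main obstacle is this last step: a naive reduction to Theorem~\ref{thm:g3} with $h(\ell)=\min(1,\tau/\ell)$ loses a $\sqrt{\log m}$ factor through the $(48m)^{1/p}$ term in \eqref{eq:c4}, giving only $O((\sqrt{k}+\beta)\sqrt{\log m})$. Removing this factor requires opening up the potential analysis and exploiting \eqref{eq:s1} in its full layered strength --- it controls, for \emph{every} deviation scale $c\beta$ and every alive set, how many rows can simultaneously be ``dangerous'' --- so that the number of rows with slack near zero at a given scale $\ell$ is $O(\ell)$ rather than $m$; feeding this scale-dependent count into the required barrier growth (equivalently, using an exponent $p$ that varies with the scale) should make the logarithmic factors telescope away. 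Making this accounting precise, and arranging the orthogonalization to be consistent across time steps so that the zero-discrepancy bookkeeping for the neutralized rows is airtight, is where the real work lies.
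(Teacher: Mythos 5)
There is a genuine gap, and you essentially flag it yourself: everything up to the reduction is fine (using \eqref{eq:s1} plus Markov on $\sum_i \norm{a_i}_1(|S|/n)\le k|S|$ to cap the restricted size of most rows, and walking orthogonally to the few violators, which matches the paper's ``large deviation'' set $\script{B}_t$), but the reduction to Theorem~\ref{thm:g3} with a single cap $h(\ell)=\min(1,\tau/\ell)$, $\tau=\Theta(k+\beta^2)$, provably cannot give $O(\sqrt{k}+\beta)$: the barrier integral $\beta=\int h(n-t)(n-t)^{-1/p}dt$ picks up a factor of order $p\,\tau^{1-1/p}$ from the tail $\ell>\tau$, and optimizing $p$ against the $(48m)^{1/p}$ term still leaves a polylogarithmic loss (at best $O(\sqrt{\tau}\,\mathrm{polylog}\,m)$), exactly the obstruction you concede. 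The sketched remedy --- that \eqref{eq:s1} forces ``the number of rows with slack near zero at scale $\ell$ to be $O(\ell)$'' and that a scale-varying exponent $p$ makes the logarithms telescope --- is not established and is doubtful as stated: \eqref{eq:s1} controls deviations of restricted row \emph{sums}, not of slacks, and no mechanism is given that converts it into a per-scale bound on small-slack rows inside the potential argument. Since this is precisely where the theorem's content lies, the proposal does not constitute a proof.

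For comparison, the paper avoids the loss by a different, and fully worked-out, multi-scale device: rows are partitioned by their \emph{initial} $\ell_1$-norm into classes $\script{A}_q$ with $2^qk\le\norm{a_i}_1<2^{q+1}k$ (so $|\script{A}_q|\le 2^{1-q}n$), and the key structural observation is that a class-$q$ row can only become active (restricted norm $\le 12k$ while not deviating, by \eqref{eq:s1}) once $n_t\lesssim 2^{-q}n$; hence one may take $h_q=0$ before that time and $h_q(n_t)=2^{q+2}k/n$ afterwards, which makes each class's barrier growth $O(\sqrt{k})$ on its own. The classes are then coupled through one weighted potential $\Phi(t)=\frac1k\Phi_0(t)+\sum_q 2^{2q}\Phi_q(t)$, with per-class blocking budgets $\floor{2^{q-8}n_t^2/n}$ and per-class Markov weights $w_q(t)=2^{5-q/4}(n/n_t)^{1/4}$ whose reciprocals sum below $1$, so a single update direction works for all classes simultaneously; rows are dropped once their restricted $\ell_1$-norm falls below $8\beta$, costing only an additive $O(\beta)$. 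The geometric decay of $|\script{A}_q|$ and of the weights is what replaces your hoped-for telescoping; none of this accounting appears in your proposal, so the step from $O((\sqrt{k}+\beta)\sqrt{\log m})$ down to $O(\sqrt{k}+\beta)$ remains unproved.
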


As discussed in \cite{potukuchi2019spectral}, one can set $\beta \leq \max_{v\perp \mathbf{1}, \norm{v} =1 }\norm{Av}$ in \eqref{eq:s1}, which in particular gives Potukuchi's result \cite{potukuchi2019spectral} for random $k$-regular hypergraphs as $\beta = O(k^{1/2})$ in this case.

Combining with Theorem \ref{thm:subadd}, this extends to the following semi-random setting.
Consider a random $k$-regular hypergraph $A$ with $n$ vertices and $n$ edges. Suppose an adversary can arbitrarily modify $A$ by adding or deleting vertices from edges such that degree of any vertex changes by at most $t$.
How much can this affect the discrepancy of the hypergraph?
\begin{restatable}[Semi-Random Hypergraphs]{thm}{semirandomh}
\label{thm:sr2}
Consider a random $k$-regular hypergraph with incidence matrix $A\in \R^{m \times n}$  with $m\geq n$, and let $C \in \{-1,0,1\}^{m\times n}$ be an arbitrary matrix with at most $t$ non-zero entries per column. Then $\mathrm{disc}(A + C) = O\left(\sqrt{k}+\sqrt{t\log{n}}\right)$ with probability $1-n^{-\Omega(1)}$.
\end{restatable}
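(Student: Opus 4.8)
The plan is to decompose $A+C$ and apply the subadditivity of stochastic discrepancy (Theorem~\ref{thm:subadd}). This produces a single $x\in\{-1,1\}^n$ for which $|\dt{a_i}{x}|\lesssim\mathrm{sdisc}(A)$ for every row $a_i$ of $A$ and $|\dt{c_i}{x}|\lesssim\mathrm{sdisc}(C)$ for every row $c_i$ of $C$; by the triangle inequality the same $x$ satisfies $\norm{(A+C)x}_\infty\le\max_i\bigl(|\dt{a_i}{x}|+|\dt{c_i}{x}|\bigr)\lesssim\mathrm{sdisc}(A)+\mathrm{sdisc}(C)$. So it suffices to show $\mathrm{sdisc}(A)=O(\sqrt k)$ with high probability and $\mathrm{sdisc}(C)=O(\sqrt{t\log n})$ deterministically.

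For $C\in\{-1,0,1\}^{m\times n}$ with at most $t$ nonzeros per column, every column has $\norm{C^j}_2\le\sqrt t$ and every entry has magnitude at most $1$. All-zero rows contribute no discrepancy, so we may assume $C$ has at most $tn$ rows, hence $\mathrm{poly}(n)$ many when $t=\mathrm{poly}(n)$. Applying Theorem~\ref{thm:g3} to $C$ with $L=\sqrt t$ and $M=1$ (equivalently, applying the Koml\'{o}s bound to $C/\sqrt t$), the second term in \eqref{eq:c4} gives $b_0=O\!\left(\sqrt t\,\sqrt{\log n}\right)$ and therefore $\mathrm{sdisc}(C)=O(\sqrt{t\log n})$.

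For the random $k$-regular hypergraph $A\in\{0,1\}^{m\times n}$ we have $\norm{A^j}_1=k$ for every column. I would first invoke the standard spectral concentration for random regular set systems underlying Potukuchi's result to conclude that, with probability $1-n^{-\Omega(1)}$, $\gamma:=\max_{v\perp\mathbf 1,\ \norm v=1}\norm{Av}=O(\sqrt k)$. Conditioning on this event, I would verify the pseudo-randomness hypothesis \eqref{eq:s1} of Theorem~\ref{thm:extended} with $\beta=\gamma$: for any $S\subseteq[n]$ set $u=\mathbf 1_S-(|S|/n)\mathbf 1$, where $\mathbf 1_S$ is the indicator of $S$; then $u\perp\mathbf 1$ and $\norm u^2=|S|(1-|S|/n)\le|S|$, and since $(Au)_i=\sum_{j\in S}a_i(j)-\norm{a_i}_1(|S|/n)$ we get $\sum_i(Au)_i^2=\norm{Au}^2\le\gamma^2|S|$. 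By Markov's inequality at most $c^{-2}|S|$ rows $i$ can have $|(Au)_i|\ge c\gamma$, which is exactly \eqref{eq:s1} with $c\beta=c\gamma$. Since $\gamma\le k$, Theorem~\ref{thm:extended} applies and gives $\mathrm{disc}(A)=O(\sqrt k+\gamma)=O(\sqrt k)$; because that theorem is proved by running the Potential Walk of Algorithm~\ref{algo:potential_walk}, the bound in fact holds for $\mathrm{sdisc}(A)$.

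Feeding the two estimates into the inequality of the first paragraph yields $x\in\{-1,1\}^n$ with $\norm{(A+C)x}_\infty=O(\sqrt k+\sqrt{t\log n})$, and since the only randomness used is the $\gamma=O(\sqrt k)$ spectral estimate, this holds with probability $1-n^{-\Omega(1)}$. The main obstacle is a matter of bookkeeping rather than new ideas: Theorem~\ref{thm:extended} is stated for $\mathrm{disc}$, whereas Theorem~\ref{thm:subadd} needs a bound on $\mathrm{sdisc}$, so one must check that the proof of Theorem~\ref{thm:extended} actually controls the potential-walk quantity, and that the high-probability spectral event can be fixed before the deterministic walk is run on the combined instance $A+C$. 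A secondary point is pinning down the precise random model of a ``$k$-regular hypergraph'' for which $\gamma=O(\sqrt k)$ is known with the stated failure probability.
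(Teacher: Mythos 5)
Your proposal is correct and follows essentially the same route as the paper: the paper's own (very brief) proof also decomposes $A+C$ and invokes the subadditivity of stochastic discrepancy (Theorem~\ref{thm:subadd}), bounding $\mathrm{sdisc}(A)=O(\sqrt{k})$ via the pseudo-random/spectral condition \eqref{eq:s1} of Theorem~\ref{thm:extended} and $\mathrm{sdisc}(C)=O(\sqrt{t\log n})$ via the Koml\'os-type bound of Theorem~\ref{thm:g3} with $L=\sqrt{t}$. Your added verifications (the Markov argument giving \eqref{eq:s1} with $\beta=\gamma$, and the disc-versus-sdisc bookkeeping) are exactly the details the paper leaves implicit, including its own remark that the resulting bound is not algorithmic for $A+C$ since the walks are run on $A$ and $C$ separately.
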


\section{The Framework}\label{sec:framework}
Given a matrix $A \in \R^{m\times n}$, we start at some $x_0$ and our goal is to reach an $x_T$ in $\{-1,1\}^n$ with small discrepancy. The basic idea will be to apply a small random update (of size $\delta$) to $x_t$ at step $t$ for $T$ steps, where the update will be chosen with care. 
We use the slack function and the potential function defined in (\ref{slack}) and (\ref{potential}) to implement this approach. The figure below gives a high level description of the algorithm.

\vspace{2mm}

\begin{algorithm}[H]\label{algo:potential_walk}
\SetAlgoLined
\caption{PotentialWalk}
\textbf{Input:} A matrix $A \in R^{m\times n}$, a potential function $\Phi: \R \times \R^n \rightarrow \R^+$.

Let $x_0 = 0, t=0$. Let $T = (n-2)/\delta^2$.

\For {$t \in [T]$}{

Select $v_t$ such that: (i) $\E_{\eps}[\Phi(t+1, x_t + \eps\delta v_t)] \leq \Phi(t, x_t)$, (ii) $x_t \pm\delta v_t  \in [-1,1]^n$, and (iii) $\dt{x_t}{v_t} = 0$, where
$\eps$ is a Rademacher random variable ($\pm 1$ with probability $1/2$).

Let $x_{t+1} = x_t + \eps \delta v_t$. 
 
}
\textbf{Output:} $x_T$
\label{alg:p1}
\end{algorithm}

\subsection{Example: Koml\'{o}s setting}
We first give an overview of the ideas by describing how the framework above works for the Koml\'{o}s setting. Recall that here $A\in \R^{m\times n}$ has columns satisfying $\norm{A^j}_2 \leq 1$. To minimize notation, let us assume here that $m=n$ (this is also the hardest case for the problem).

At time $t$, let $\script{V}_t = \{j \in [n]: \vert x_t(j) \vert < 1- 1/2n \}$ and let $n_t = \vert \script{V}_t\vert$. These are the variables that are ``alive", and not yet ``frozen". To ensure that $x_t \in [-1,1]^n$, the update $v_t$ will only change the variables in $\script{V}_t$.
We also set $\dt{v_t}{x_t} = 0$, which ensures that 
$\norm{x_t}^2 = \delta^2 t$ for any $t \in [0, T]$. 
So $v_t$ satisfies
 \begin{equation}
     v_t(j) = 0 \text{ for all } j\not\in \script{V}_t \text{ and } \dt{v_t}{x_t} = 0.\label{eq:t1}
 \end{equation}
As $\vert x_t(j)\vert \geq (1-1/2n)$ for all $j \notin \script{V}_t$, we have 
\begin{equation*}
   (n-n_t)(1-1/2n)^2\leq \sum_{j\notin \script{V}_t} x_t(j)^2 \leq  \sum_{j \in [n]} x_t(j)^2 = \delta^2 t. 
\end{equation*} 
So the number of alive variables at time $t$ satisfies $n_t \geq n - \frac{\delta^2 t}{(1-(1/(2n)))^2} > n - \delta^2 t - 1$.

{\bf Blocking large rows.} 
To ensure the two-sided bound
$|\sum_j a_i(j) x(j)| < b_0$, we create a new row $-a_i$ for each row $a_i$ at the beginning. Now, as the squared 2-norm of every column of $A$ is at most $2$, at any time $t$, the number of rows with $\sum_{j \in \script{V}_t} a_i(j)^2 > 12$ is at most $\vert \script{V}_t\vert /6 = n_t/6$. 
 Let us call such rows {\em large} (at time $t$). Otherwise, the row is {\em small}.
We additionally constrain $v_t$ so that 
\begin{equation}
    \dt{{a}_i}{v_t} = 0 \text{ for all rows } \{i: \sum_{j\in \script{V}_t} a_i(j)^2 > 12\}. \label{eq:t6}
\end{equation}  
This ensures that a row only starts to incur any discrepancy once it becomes small. So at step $t$, we will define the slacks only for small rows and only such rows will contribute to the potential $\Phi(t)$.  
Let $\script{I}_t$ denote the set of small rows at time $t$. In the slack function \eqref{slack}, we will set $b_t=b_0$ for all $t$ and $\lambda = 2^{-5}b_0$.
So, at the beginning of the algorithm,
when $x_0(j)=0$ for all $j$, we have 
\begin{equation*}
    \Phi(0) = \sum_{i\in \script{I}_0} \frac{1}{(b_0 - \lambda\cdot \sum_{j\in[n]} a_i(j)^2)^p} \leq \frac{|\script{I}_0|}{(b_0 - 12\lambda)^p} \leq n\left(\frac{2}{b_0}\right)^p .
\end{equation*}

At any time $t$, the change in potential $\Phi(t+1)-\Phi(t)$ is due to (i) new rows becoming small and entering $\script{I}_{t+1}$ and (ii) and the change slack of rows in $\script{I}_{t}$.
As each row has discrepancy $0$ until it becomes small, the total contribution of step (i) over the entire algorithm is at most $n (2/b_0)^p$. 

So the main goal will be to show that $\Phi$ does not rise due to step (ii). This will ensure that the potential throughout the algorithm is at most $2n (2/b_0)^p$, which gives the $\sum_j a_i(j) x(j) < b_0$ for all $i$. 

{\bf Bounding the increase in $\Phi$.} 
We now describe the main ideas of the algorithm and computations for the change in $\Phi$ in step (ii).
The desired $O(\sqrt{\log n})$ will then follow directly by optimizing the parameters $b_0$ and $p$ in $\eqref{slack}$.

Let $e_{t,i}$ denote a vector in $\R^n$ with $j$-th entry $a_i(j)^2 x_t(j)$. 
At step $t$, $x_t$ changes as $x_{t+1}-x_t =  \eps \delta \cdot v_t$ and, by a simple calculation, the approximate change in $s_i(t)$ is: 
\begin{align*}
s_i(t+1)-s_i(t) &\simeq \left(2\lambda \dt{e_{t,i}}{v_t} - \dt{{a}_i}{v_t}\right)\eps \delta + \lambda \dt{{a}_i^{(2)}}{v_t^{(2)}} \delta^2
\end{align*}
where $\eps$ is a Rademacher random variable and ${a}^{(2)}$ denotes the vector with $j$-th entry $a(j)^2$. The error terms not included above are all higher powers of $\delta$, and can be ignored for small enough $\delta$ as long as all coefficients are bounded. We formalize this in Section \ref{sec:gen} and Appendix \ref{appendix:ss}.

Then, up to second order terms in $\delta$,
    \[  \Phi(t+1)-\Phi(t) \simeq f(t) \delta^2 + g(t) \eps \delta \]
 where,  
 \begin{align*}
     f(t) &= -p \lambda\sum_{i\in \script{I}} \frac{\dt{{a}_i^{(2)}}{v_t^{(2)}}}{s_i(t)^{p+1}}  + \frac{p(p+1)}{2}\sum_{i\in \script{I}} \frac{\left(2\lambda \dt{e_{t,i}}{v_t} - \dt{{a}_i}{v_t}\right)^2 }{s_i(t)^{p+2}}, \\
     g(t) &= p \sum_{i\in \script{I}} \frac{\left(2\lambda \dt{e_{t,i}}{v_t} - \dt{{a}_i}{v_t}\right)}{s_i(t)^{p+1}}. 
\end{align*}

To bound the expected change in $\Phi$, note that the expectation of the second term  $g(t) \eps \delta $ is zero. 
So it suffices to prove that there is a choice of $v_t$ such that $f(t) \leq 0$. This will ensure the expected change of $\Phi$ is at most $zero$, and 
there will be a choice of $\epsilon$ that ensures $\Phi$ is nonincreasing.

The difficulty in making $f(t)$ at most zero is that the positive part (the second term of $f(t)$) has an extra factor of $s_i(t)$ in the denominator. So if some $s_i(t)$ becomes very small, the positive term could dominate. To ensure this doesn't happen, we choose $v_t$ to be in a subspace that makes this positive term zero for the smallest slack indices.

{\bf Blocking small slacks.} Let $\mathcal{J}_t$ be the subset of $\mathcal{I}$ corresponding to all but the $\floor{n_t/12}$ smallest values of $s_i(t)$ at time $t$. 
Select $v_t$ such that 
\begin{equation}
    \left(2\lambda\dt{e_{t,i}}{v_t} - \dt{{a}_i}{v_t}\right) = 0 \text{ for all } i\in \script{I} \backslash \script{J}_t, \label{eq:t3}
\end{equation} 
Then as $\sum_i s_i(t)^{-p}) \leq \Phi(t)$, and the smallest $n_t/12$ slacks are ``blocked", we have
\[
\max_{j\in \script{J}_t}\frac{1}{s_j(t)} \le \left(\frac{\Phi(t)}{n_t/12}\right)^{1/p},
\]
and so,
\begin{align*}
    f(t) &\leq p\left(\frac{p+1}{2}\sum_{i\in \mathcal{J}_t}  \frac{\left(2\lambda \dt{e_{t,i}}{v_t} - \dt{{a}_i}{v_t}\right)^2}{s_i(t)^{p+1}} \max_{j\in \script{J}_t}s_j(t)^{-1} -\lambda\sum_{i\in \mathcal{I}}  \frac{\dt{{a}_i^{(2)}}{v_t^{(2)}}}{s_i(t)^{p+1}} \right)\\
    &\leq p\left(\frac{p+1}{2}\sum_{i\in \mathcal{J}_t}  \frac{\left(2\lambda \dt{e_{t,i}}{v_t} - \dt{{a}_i}{v_t}\right)^2}{s_i(t)^{p+1}}\left(\frac{12\Phi(t)}{n_t}\right)^{1/p} -\lambda\sum_{i\in \mathcal{I}}  \frac{\dt{{a}_i^{(2)}}{v_t^{(2)}}}{s_i(t)^{p+1}} \right)
\end{align*}
In addition to \eqref{eq:t1} and \eqref{eq:t3}, suppose $v_t$ also satisfies
\begin{equation}
    \sum_{i\in \script{J}_t}  \frac{\dt{2\lambda {e}_{t,i} - {a}_{i}}{v_t}^2}{s_i(t)^{p+1}} \leq 12 \cdot \sum_{i\in \script{J}_t}  \frac{\dt{a_i^{(2)}}{{v_t}^{(2)}}}{s_i(t)^{p+1}}.\label{eq:t4}
\end{equation}
{\bf Choosing the update $v_t$.} Later in Section \ref{sec:gen}, we will see how to find a vector $v_t$ satisfying \eqref{eq:t1}, \eqref{eq:t3}, \eqref{eq:t6}, and \eqref{eq:t4}. Then, 
\[
    f(t)\leq p\sum_{i\in \mathcal{J}_t}  \frac{\dt{{a}_i^{(2)}}{v_t^{(2)}}}{s_i(t)^{p+1}}  \left(6(p+1)\left(\frac{12\Phi(t)}{n_t}\right)^{1/p} - \lambda \right). 
\]
To show that $f(t) \leq 0$, it thus suffices to have 
$ 6(p+1) \left(12\Phi(t)/n_t\right)^{1/p} - \lambda  \leq 0$.

As $\Phi(t)^{\frac{1}{p}} \leq 2(2n)^{1/p}/b_0$ by the inductive hypothesis, and $n_t \geq 1$, it suffices to have
\[\frac{12(p+1)}{b_0} \left(24n\right)^{1/p} - \lambda \leq 0.\]
Choosing $p = \log{n}$ so that $n^{1/p} = O(1)$, and as $\lambda=2^{-5}b_0$, we can pick $b_0 = O(\sqrt{\log{n}})$ to satisfy the above. This gives the desired discrepancy bound. 

\subsection{The General Framework}
\label{sec:gen}
We now describe the algorithm more formally.
Given a matrix $A \in \R^{m\times n}$ with $\norm{A^j}_2 \leq 1$ for all $j\in [n]$, extend $A$ such that for each original row $a_i$ of $A$, there are two rows $a_i$ and $-a_i$ in $A$. Additionally, partition every row $a_i$ into $2$ rows, $a_i^S$ and $a_i^L$, with small and large entries, as follows:
\begin{equation*}
    a_i^S(j) = \begin{cases} 0 & \text{ if } \vert a_{i}(j)\vert > 1/2\lambda \\ a_i(j) & \text{ otherwise}\end{cases}, \quad a_i^L(j) = \begin{cases} a_i(j) & \text{ if } \vert a_{i}(j)\vert > 1/2\lambda\\ 0 & \text{ otherwise,}\end{cases}
\end{equation*}
 where $\lambda$ is a parameter to be determined later.  After this transformation, for any $x \in \R^n$, $\norm{Ax}_{\infty} = \max_{i} \dt{a_i^S + a_i^L}{x}$, and the squared 2-norm of any column of $A$ is at most $2$.

Let $\script{I}$ denote the index set of all rows of ${A}$, and $\script{I}^S$ denote the index set of rows of the first type above.  
 
The step-size of the algorithm is $\delta$ and the algorithm will run for $T = \frac{n-2}{\delta^2}$ steps. Starting with $x_0 = 0$,  let $v_t \in \R^n$ with $\dt{x_t}{v_t} = 0$. For $t \in [T]$,
\begin{equation*}
    x_t = \begin{cases} x_{t-1} + \delta v_{t-1} & \text{ w.p. } 1/2,\\
    x_{t-1} - \delta v_{t-1} & \text{ w.p. } 1/2.
    \end{cases}
\end{equation*}
As $t$ increases, some variables will start approaching $1$ in magnitude. To ensure that $x_t \in [-1,1]^n$, we restrict $v_t$ to be in the space of \emph{alive} variables, defined as  
\begin{equation*}
   \script{V}_t = \{i \in [n]: \vert x_t(i) \vert < 1- 1/(2n) \}. 
\end{equation*} 
For any $t \in [T]$, $\norm{x_t}^2 = \delta^2 t$ as
\begin{equation}
        \norm{x_t}^2 = \norm{x_{t-1} + \delta v_t}^2 = \norm{x_{t-1}}^2 + \delta^2\norm{v_t}^2 = \delta^2(t-1) + \delta^2 = \delta^2t.\label{eq:norm_x}
\end{equation}
Let $n_t = \vert \script{V}_t\vert$ denote the number of alive variables at $t$. By~\eqref{eq:norm_x}, $(n-n_t)(1-\epsilon)^2 \leq \delta^2 t$, which gives \begin{equation*}
    n_t \geq n - \frac{\delta^2 t}{(1-1/(2n))^2} > n - \delta^2 t - 1.
\end{equation*} 

The goal of the rest of this section is to select a $v_t$ such that for all $t\in [T]$, $x_t \in [-1,1]^n$ and $\dt{a_i}{x_t}$ is bounded by some function of $m$ and $n$ for all rows. To help with this goal, we classify the rows according to how many variables are still ``uncolored'' in a row.

Let the set of \emph{s-Alive} rows at time $t$ be defined as: \[
\script{I}_t = \{i \in \script{I}^S: \sum_{j \in \script{V}_t} a_i(j)^2 \leq 20\}.
\]

The choice of $20$ here is arbitrary, and large enough constant works.
We can now define the slack and the potential function.

\noindent \textbf{Slack.} For any $i\in \script{I}$, the slack function is defined as
\[
 s_i(t) = b_t - \dt{a_i}{x_{t}} - \lambda \cdot \sum_{j=1}^{n}a_{i}(j)^2(1-x_{t}(j)^2).\]
We call $b_t$ the barrier,
and for $t\in [T]$, we also move it as
\[
b_t = b_{t-1} + \delta^2 d_{t-1},
\]
for some function $d_{t}$. We set $\lambda = cb_0$ where $c = 1/42$ and $b_0$ is the initial barrier.

\noindent \textbf{Potential function.} The potential function has a parameter $p > 1$ and is defined as\[
    \Phi(t)= \sum_{i\in \mathcal{I}_t} s_i(t)^{-p}.\]
 
\allowdisplaybreaks
We will only consider slacks for {\em alive} rows and ensure that they are always positive. 
Moreover, we will consider only the small {\em s-Alive} rows as the rows in $\script{I}^L$ will be easily handled.
To ensure that $s_i(t)$ does not become too ``small" for any \emph{s-Alive} row, the choice of $v_t$ should not decrease the smallest slacks. This motivates the following definitions.
\begin{itemize}
    \item \emph{Blocked} rows: Let $\mathcal{C}_t$ be the subset of $\mathcal{I}_t$ corresponding to the $\floor{n_t/12}$ smallest values of $s_i(t)$.
    \item Let  $\mathcal{J}_t = \mathcal{I}_t \backslash \mathcal{C}_t$. These are the ``large slack" rows.
\end{itemize}

To prove that all the slacks are positive, we will upper bound the potential throughout
by bounding the change in $\Phi(t)$ at each step.
Note that $\Phi(t)$ will experience jumps whenever a new index gets added to $\script{I}_t$, however the total contribution of jumps is easily shown to be bounded (see Lemma \ref{lem:dphi}) and can essentially be ignored. To bound the one-step change in $\Phi$, we use the second order Taylor expansion of $\Phi(t+1)$ centered at $\Phi(t)$. In Appendix \ref{appendix:ss}, we show that  by choosing $\delta \leq O(1/(n^2m^{6}p^4))$, the overall error due to ignoring the higher powers of $\delta$ is negligible.

\subsection{Algorithm and Analysis}
Recall that $e_{t,i}$ denotes the vector in $\R^n$ with $j$-th entry $a_i(j)^2 x_t(j)$. We can now state the algorithm for selecting $v_t$.

\begin{algorithm}[H]
\label{alg:p2}
\caption{Algorithm for Selecting $v_t$}
\SetAlgoNoLine
\DontPrintSemicolon
\setstretch{1.35}
Initialize $x_0 \leftarrow 0$\;
\For{$t = 1, \ldots, T=\frac{n-2}{\delta^2}$}{
Let $\script{W}_t = \{{w} \in \R^n: {w}(i) = 0, \; \forall i \notin \script{V}_t \}$ \tcp*{\small restrict to alive variables}
Let $\script{U}_t = \{{w} \in \script{W}_t: \dt{{w}}{2\lambda {e}_{t,i} - {a}_{i}} = 0, \forall i \in \mathcal{C}_t \text{ and } \dt{{w}}{x_t} = 0 \}$\;\tcp*{\small restrict to large slack rows}
Let $\script{Y}_t = \{{w} \in \mathcal{W}_t: \dt{{w}}{{a}_{i}} = 0, \forall i \in \script{I} \backslash  \script{I}_t\}$\tcp* {\small restricted to s-Alive rows}
Let $\script{G}_t$ denote the subspace  
\begin{equation}
    \label{eq:gt}
\script{G}_t = \left\{{w} \in \script{W}_t: \sum_{i\in \script{J}_t}  \dt{\left(2\lambda {e}_{t,i} - {a}_{i}\right)}{{w}}^2  s_i(t)^{-p-1} \leq 40 \cdot \sum_{i\in \script{J}_t} \; \dt{a_i^{(2)}}{{w}^{(2)}} s_i(t)^{-p-1} \right\}\end{equation}\\

Consider the subspace $\script{Z}_t = \script{U}_t \cap\script{Y}_t \cap \script{G}_t$ and let $W =\{{w}_1, {w}_2, \ldots, {w}_k \}$ be an orthonormal basis for $\script{Z}_t$. Choose
    \begin{equation}
    \label{eq:vt7}
            v_t = \arg\min_{w \in W} \sum_{i\in \script{J}_t} \dt{2\lambda {e}_{t,i} - {a}_{i}}{w}^2 s_i(t)^{-(p+1)}.
            \end{equation}
}
\end{algorithm}

We now re-state our main theorem. In words, the assumption of the theorem is that there is a non-decreasing function $h(.)$ such that for any row, the squared norm in any subset of coordinates $S$ is proportional to $h(|S|)$ times the size of the subset $S$. Under this condition, we can bound the discrepancy as a function of $h$.

\mainthm*
The case when $h(t)=h$ is often useful, in which case we have following corollary.
\begin{cor}
\label{thm:g2}
For a matrix $A \in \R^{m\times n}$ with $\|A^j\|\leq L$ and $\vert a_i(j)\vert \leq M$ for all $i\in [n], j\in [m]$, let $h$ be such that for every subset $S\subseteq [n]$ and every $i \in [m]$,
\begin{equation}
 \sum_{j\in S} a_i(j)^2\leq \vert S\vert \cdot h.
    \label{eq:g4}
\end{equation}
Then, $\disc(A) \leq 5b_0 + 2M$, where
$b_0 = \min (26\sqrt{h n\log(2m/n)},\; 
    250L\sqrt{\log\left(2m\right)})$.
\end{cor}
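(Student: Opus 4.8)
The corollary is a direct specialization of Theorem \ref{thm:g3}, so the plan is essentially to plug $h(t) \equiv h$ into the bound \eqref{eq:c4} and evaluate the integral $\beta = \int_{t=0}^{n-2} h(n-t)(n-t)^{-1/p}\,dt$. With $h$ constant this becomes $\beta = h\int_{t=0}^{n-2}(n-t)^{-1/p}\,dt = h\int_{u=2}^{n} u^{-1/p}\,du = h\cdot \frac{p}{p-1}\bigl(n^{1-1/p} - 2^{1-1/p}\bigr) \le h\cdot \frac{p}{p-1} n^{1-1/p}$. Substituting into the first term of $b_0$ in \eqref{eq:c4} gives
\begin{align*}
\sqrt{8(p+1)(48m)^{1/p}\cdot \beta} \;\le\; \sqrt{8(p+1)(48m)^{1/p}\cdot \tfrac{p}{p-1}\, h\, n^{1-1/p}} \;=\; \sqrt{8\, h\, n}\cdot \sqrt{\tfrac{p(p+1)}{p-1}}\cdot \Bigl(\tfrac{48m}{n}\Bigr)^{1/(2p)}.
\end{align*}
The second term $250 L\sqrt{\log(2m)}$ is carried over unchanged, so it remains only to choose $p$ to make the first expression $\le 26\sqrt{h\,n\log(2m/n)}$.

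The natural choice is $p = \max(2, \log(2m/n))$, which makes $(48m/n)^{1/(2p)} = \Theta(1)$ (using $48m/n \le (2m/n)^{c}$ for an absolute constant $c$ when $m \ge n$, so that $(48m/n)^{1/(2p)} \le e^{c/2}$), and makes $\sqrt{p(p+1)/(p-1)} = O(\sqrt{p}) = O(\sqrt{\log(2m/n)})$. Multiplying the $\sqrt{8hn}$ factor by these two $O(\sqrt{\log(2m/n)})$ and $O(1)$ contributions yields $O(\sqrt{h\,n\log(2m/n)})$, and one checks that the absolute constants work out to at most $26$. When $\log(2m/n) < 2$ one takes $p=2$ and the bound $26\sqrt{hn\log(2m/n)}$ is replaced by its obvious floor $O(\sqrt{hn})$, which is still dominated; since we only need the $\min$ of the two terms, it suffices that \emph{one} of the estimates beats $26\sqrt{hn\log(2m/n)}$ whenever that quantity is the smaller one. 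The $+2M$ additive term and the leading factor $5b_0$ are inherited verbatim from Theorem \ref{thm:g3}.

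The only mildly delicate point — and the one place to be careful — is tracking the absolute constants through the substitution so that the clean bound $b_0 \le \min(26\sqrt{hn\log(2m/n)},\,250L\sqrt{\log(2m)})$ actually holds, rather than just $O(\cdot)$. This is a matter of verifying that $\sqrt{8}\cdot\sqrt{p(p+1)/(p-1)}\cdot(48m/n)^{1/(2p)} \le 26$ for the chosen $p$, using $p/(p-1)\le 2$ and $(p+1)\le 2p$ so the middle factor is at most $2\sqrt{p}$, and $p = \log(2m/n)$ combined with $(48m/n)^{1/(2p)} \le (48)^{1/(2p)}(2m/n)^{1/(2p)} \le \sqrt{48}\cdot\sqrt{e}$ (since $p \ge \log(2m/n)$ forces $(2m/n)^{1/p}\le e$, and when $m$ is small we bump $p$ up to $2$). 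No new ideas are needed beyond Theorem \ref{thm:g3}; everything is bookkeeping on the integral and the constants.
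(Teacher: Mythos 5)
Your proposal is correct and follows essentially the same route as the paper: specialize Theorem \ref{thm:g3} to constant $h$, bound $\beta \le h\,n^{1-1/p}\,p/(p-1)$, and choose $p = \log(2m/n)$ to optimize the first term of \eqref{eq:c4}, carrying the second term over unchanged. Your extra care with the case $\log(2m/n) < 2$ (taking $p=\max(2,\log(2m/n))$ so that $p>1$) and with tracking the absolute constant $26$ is bookkeeping the paper glosses over, but it is consistent with its argument.
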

\begin{proof}
For a constant $h$, we have
$
  \beta= \int_{0}^{n-2}(n-t)^{-1/p} h dt \leq   n^{1-1/p}h/(1-1/p)$.
Choosing $p = \log(2m/n)$ to optimize the first term in \eqref{eq:c4} gives the result.
\end{proof}

\paragraph{Roadmap of the proof.} The first main lemma below (Lemma ~\ref{lem:g2}) establishes that there is a large feasible subspace from which $v_t$ as defined above can be chosen. Using this we  prove Lemma ~\ref{thm:g1}, which bounds the change in potential. This will allow us to bound the discrepancy of each row and hence prove Theorem~\ref{thm:g3}.

A key fact used for proving Lemma ~\ref{lem:g2} is the following lemma in \cite{bansal2017algorithmic}. We include a proof for the reader's convenience.
\begin{lem}[\cite{bansal2017algorithmic}]
\label{lem:subspace}
Let $G, H \in \R^{m\times n}$ be matrices such that $\vert G_{ij} \vert \leq \alpha \vert H_{ij} \vert$ for all $i\in [m]$ and $j \in [n]$. Let $K = \text{diag}(H^\top H)$. Then for any $\beta \in (0,1]$, there exists a subspace $W \subseteq \mathbb{R}^n$ satisfying
\begin{enumerate}
    \item $\dim(W) \geq (1-\beta)n$, and
    \item $\forall w \in W,\; w^{\top} G^{\top}G w \leq \frac{\alpha^2}{\beta} \cdot w^{\top} K w$.
\end{enumerate}
\end{lem}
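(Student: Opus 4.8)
The plan is to construct the subspace $W$ by removing from $\R^n$ the span of a small set of "bad" rows, chosen greedily by an averaging argument. Let $K = \mathrm{diag}(H^\top H)$, so $K_{jj} = \sum_i H_{ij}^2 = \norm{H^j}_2^2$. After a change of coordinates we may assume $K = I$ (restricting to the coordinates $j$ with $K_{jj} > 0$; the others are irrelevant since they are killed by both $G$ and $H$ on the relevant subspace, and a short separate argument handles them). Under this normalization the goal becomes: find $W$ with $\dim(W) \ge (1-\beta)n$ such that $\norm{Gw}_2^2 \le (\alpha^2/\beta)\norm{w}_2^2$ for all $w \in W$, i.e. $W$ avoids the large-eigenvalue directions of $G^\top G$.

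The key quantitative input is the trace bound $\mathrm{tr}(G^\top G) = \sum_{i,j} G_{ij}^2 \le \alpha^2 \sum_{i,j} H_{ij}^2 = \alpha^2 \mathrm{tr}(H^\top H) = \alpha^2 \mathrm{tr}(K) = \alpha^2 n$, where the inequality uses the hypothesis $|G_{ij}| \le \alpha|H_{ij}|$ entrywise. Now diagonalize the PSD matrix $G^\top G$ with eigenvalues $\mu_1 \ge \mu_2 \ge \cdots \ge \mu_n \ge 0$; then $\sum_j \mu_j \le \alpha^2 n$. Let $W$ be the span of the eigenvectors corresponding to $\mu_j \le \alpha^2/\beta$. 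The number of eigenvalues exceeding $\alpha^2/\beta$ is at most $(\sum_j \mu_j)/(\alpha^2/\beta) \le \alpha^2 n /(\alpha^2/\beta) = \beta n$, so $\dim(W) \ge (1-\beta)n$. For any $w \in W$, writing $w$ in the eigenbasis gives $w^\top G^\top G w = \sum_j \mu_j \langle w, u_j\rangle^2 \le (\alpha^2/\beta)\norm{w}_2^2$, which after undoing the change of coordinates reads $w^\top G^\top G w \le (\alpha^2/\beta)\, w^\top K w$. This is exactly property 2.

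The only subtlety — and the step I would be most careful about — is the change of coordinates when $K$ is singular, i.e. some column $H^j$ is zero. On those coordinates $H_{ij} = 0$ forces $G_{ij} = 0$ as well, so both quadratic forms $w^\top G^\top G w$ and $w^\top K w$ ignore that coordinate entirely; one can simply include the corresponding standard basis vectors in $W$ for free (they only help the dimension count and contribute $0$ to both sides of the inequality). So: restrict to the support $T = \{j : K_{jj} > 0\}$, run the eigenvalue argument there with $K|_T$ (invertible, so conjugate by $K|_T^{-1/2}$ to reduce to the identity case), obtain a subspace $W' \subseteq \R^T$ of dimension $\ge (1-\beta)|T|$, and set $W = W' \oplus \R^{[n]\setminus T}$, which has dimension $\ge (1-\beta)|T| + (n - |T|) \ge (1-\beta)n$ and satisfies property 2 since both forms vanish on the added coordinates. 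I don't expect any real obstacle beyond bookkeeping; the content is entirely the trace bound plus Markov's inequality on the eigenvalue spectrum.
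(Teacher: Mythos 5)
Your proposal is correct and follows essentially the same route as the paper's proof: both reduce to the normalized matrix $M = GK^{-1/2}$ (handling zero columns of $H$ separately, exactly as you do), bound $\mathrm{tr}(M^\top M) \leq \alpha^2 n$ via the entrywise domination, and apply Markov's inequality to the eigenvalues of $M^\top M$ to take $W$ spanned by the eigendirections with eigenvalue at most $\alpha^2/\beta$. No gaps.
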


\begin{proof}
If $K_{ii} = 0$ for some $i$, then $H_{ji} = G_{ji} = 0$ for all $j \in [n]$. So, for a $w\in W$, $w_i$ can take any value, and removing the $i$-th column of $G$ and $H$ decreases both $n$ and $\dim(W)$ by $1$.  Without loss of generality, assume that $K_{ii} > 0$ for all $i\in [n]$ and let $M = GK^{-\frac{1}{2}}$. For any $w \in \R^n$, let $y =K^{\frac{1}{2}} w$. Then \begin{equation*}
     w^{\top} G^{\top}G w \leq \frac{\alpha^2}{\beta} \cdot w^{\top} K w \Leftrightarrow y^{\top} M^\top M y\leq \frac{\alpha^2}{\beta} \cdot y^\top y.
\end{equation*}Let $Y$ be the subspace of vectors $y$ that satisfy $\beta y^{\top} M^\top M y \leq \alpha^2\cdot y^\top y$. Then $\dim(W) = \dim(Y)$. Thus, $\dim(W)$ is equal to the number of eigenvalues of $M^\top M$ less than $\alpha^2/\beta$. The sum of eigenvalues of $M^\top M$ is equal to $\mathrm{tr}(M^{\top}M)$, which is equal to sum of length squared of columns of $M$. Since $M = GK^{-\frac{1}{2}}$ and $\vert G_{ij} \vert \leq \alpha \vert H_{ij} \vert$, the length of every column of $M$ is at most $\alpha$, and $\mathrm{tr}(M^{\top}M) \leq n\alpha^2$. Therefore, the number of eigenvalues of $M^\top M$ greater than $\alpha^2/\beta$ is at most $\beta n$ and the lemma follows. 
\end{proof}

We now prove Lemma \ref{lem:g2}.
\begin{lem}[Subspace Dimension]
\label{lem:g2}
For all $t\in T$, $\dim(\script{Z}_t) \geq \ceil{2n_t/3}$.
\end{lem}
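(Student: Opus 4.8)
The plan is to exhibit $\mathcal{Z}_t=\mathcal{U}_t\cap\mathcal{Y}_t\cap\mathcal{G}_t$ as a large subspace sitting inside the coordinate subspace $\mathcal{W}_t$, which has dimension exactly $n_t$. I will bound the dimension lost when passing from $\mathcal{W}_t$ to each of $\mathcal{G}_t$, $\mathcal{U}_t$, and $\mathcal{Y}_t$, and arrange the constants so the three losses sum to at most $n_t/3$. The crucial point is the order of operations: since $\mathcal{G}_t$ is not a linear subspace — it is the cone defined by the quadratic inequality \eqref{eq:gt} — I must first replace it by a genuine subspace using Lemma~\ref{lem:subspace}, and I must do this on the coordinate subspace $\mathcal{W}_t$ (where the entrywise hypothesis of Lemma~\ref{lem:subspace} makes sense), imposing the non-coordinate linear constraints of $\mathcal{U}_t$ and $\mathcal{Y}_t$ only afterwards.

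\emph{Stage 1 ($\mathcal{G}_t$).} Work on the coordinates in $\mathcal{V}_t$. Let $G$ be the matrix whose rows, indexed by $i\in\mathcal{J}_t$, are $s_i(t)^{-(p+1)/2}(2\lambda e_{t,i}-a_i)$, and let $H$ have rows $s_i(t)^{-(p+1)/2}a_i$ over the same index set. Using $(2\lambda e_{t,i}-a_i)(j)=a_i(j)\bigl(2\lambda a_i(j)x_t(j)-1\bigr)$ together with the fact that every $i\in\mathcal{J}_t\subseteq\mathcal{I}_t\subseteq\mathcal{I}^S$ has $|a_i(j)|\le 1/(2\lambda)$ and every $j$ has $|x_t(j)|\le1$, we get $|2\lambda a_i(j)x_t(j)-1|\le2$, i.e.\ $|G_{ij}|\le 2|H_{ij}|$ entrywise. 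Moreover $w^{\top}G^{\top}Gw=\sum_{i\in\mathcal{J}_t}\dt{2\lambda e_{t,i}-a_i}{w}^2 s_i(t)^{-(p+1)}$ and, with $K=\mathrm{diag}(H^{\top}H)$, $w^{\top}Kw=\sum_{i\in\mathcal{J}_t}\dt{a_i^{(2)}}{w^{(2)}}s_i(t)^{-(p+1)}$, so the defining condition of $\mathcal{G}_t$ in \eqref{eq:gt} is exactly $w^{\top}G^{\top}Gw\le 40\,w^{\top}Kw$. Applying Lemma~\ref{lem:subspace} with $\alpha=2$ and $\beta=1/10$ (so that $\alpha^2/\beta=40$) produces a \emph{linear} subspace $\mathcal{G}_t'\subseteq\mathcal{W}_t$ with $\mathcal{G}_t'\subseteq\mathcal{G}_t$ and $\dim\mathcal{G}_t'\ge(1-\beta)n_t=\tfrac{9}{10}n_t$.

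\emph{Stage 2 ($\mathcal{U}_t$ and $\mathcal{Y}_t$).} These are intersections of $\mathcal{W}_t$ with linear equations, so each only reduces the dimension of $\mathcal{G}_t'$ by its number of independent constraints. For $\mathcal{U}_t$ this is at most $|\mathcal{C}_t|+1=\floor{n_t/12}+1$ (one per blocked row, plus $\dt{w}{x_t}=0$). For $\mathcal{Y}_t$ the number of effective constraints is $\mathrm{rank}\{a_i|_{\mathcal{V}_t}:i\in\mathcal{I}\setminus\mathcal{I}_t\}$; split $\mathcal{I}\setminus\mathcal{I}_t$ into the non-$s$-Alive rows of $\mathcal{I}^S$ and the rows of $\mathcal{I}^L$. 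For the former, $\sum_i\sum_{j\in\mathcal{V}_t}a_i(j)^2=\sum_{j\in\mathcal{V}_t}\norm{A^j}_2^2\le 2n_t$, so at most $n_t/10$ rows can have $\sum_{j\in\mathcal{V}_t}a_i(j)^2>20$. For the latter, every nonzero entry of an $\mathcal{I}^L$ row has magnitude exceeding $1/(2\lambda)$, so the same column-norm bound forces the number of $\mathcal{I}^L$ rows with a nonzero restriction to $\mathcal{V}_t$ to be a small constant fraction of $n_t$ once $\lambda=cb_0$ is taken with $c$ small enough; denote this fraction $\eta n_t$. Adding up,
\[
\dim\mathcal{Z}_t\ \ge\ \tfrac{9}{10}n_t-\bigl(\tfrac{1}{12}n_t+1\bigr)-\bigl(\tfrac{1}{10}n_t+\eta n_t\bigr)\ \ge\ \tfrac23 n_t,
\]
provided $\tfrac1{12}+\tfrac1{10}+\eta\le\tfrac15$, and since $\dim\mathcal{Z}_t$ is an integer this yields $\dim\mathcal{Z}_t\ge\ceil{2n_t/3}$.

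The \textbf{main obstacle} is making all the constants close: the losses from the block size ($n_t/12$), the $s$-Alive threshold ($20$), the parameter $\beta$ of Lemma~\ref{lem:subspace} (which is pinned to $1/10$ by the $\alpha=2$ entrywise bound), and the $\mathcal{I}^L$ contribution must jointly leave at least $2n_t/3$ — this is precisely why $c=1/42$ is chosen small and why the target is $2n_t/3$ rather than something closer to $n_t$, and it is the step where the role of $\lambda$ and the $\mathcal{I}^S/\mathcal{I}^L$ split is genuinely used. A secondary subtlety, already noted, is that Lemma~\ref{lem:subspace} must be invoked on the coordinate subspace $\mathcal{W}_t$ before the other constraints are added, since its hypothesis $|G_{ij}|\le\alpha|H_{ij}|$ is not basis-invariant; and the floor/ceiling and the extra $+1$ from $\dt{w}{x_t}=0$ call for a separate, routine check when $n_t$ is very small.
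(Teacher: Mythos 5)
Your overall strategy is the paper's: bound the dimension loss from $\script{U}_t$, $\script{Y}_t$, and $\script{G}_t$ separately, invoke Lemma~\ref{lem:subspace} with $G_{ij}=(2\lambda e_{t,i}(j)-a_i(j))s_i(t)^{-(p+1)/2}$, $H_{ij}=a_i(j)s_i(t)^{-(p+1)/2}$, $\alpha=2$, $\beta=1/10$ to get a genuine linear subspace of dimension at least $\ceil{9n_t/10}$ inside the quadratic-cone condition \eqref{eq:gt}, and add up the losses; your Stage~1 and your count for $\script{U}_t$ ($\floor{n_t/12}+1$ constraints) coincide with the paper, and your remark that $\script{G}_t$ is not literally a subspace is a fair (indeed slightly more careful) reading of what the paper does.

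The genuine gap is in your treatment of the $\script{I}^L$ rows inside $\script{Y}_t$. You claim that the number of $\script{I}^L$ rows with a nonzero restriction to $\script{V}_t$ is at most $\eta n_t$ for a small constant $\eta$ "once $\lambda=cb_0$ is taken with $c$ small enough." The column-norm bound only gives that each such row contributes more than $1/(4\lambda^2)$ to $\sum_{j\in\script{V}_t}\norm{A^j}^2\le 2n_t$, hence there can be up to $8\lambda^2 n_t$ of them. Since $\lambda = b_0/42$ and $b_0$ is \emph{not} a constant (it is $\Theta(\sqrt{\log m})$ in the Koml\'os setting and $\Theta(\sqrt{n\log(2m/n)})$ after scaling in Spencer's setting), $8\lambda^2 n_t \gg n_t$, so this count is vacuous and the claim is false in general: a single alive column of norm $\le 1$ can meet up to $4\lambda^2$ distinct rows of $\script{I}^L$. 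The paper avoids this entirely: in its proof of Lemma~\ref{lem:g2} the only rows counted as blocking constraints are those (of either type) whose restricted squared norm $\sum_{j\in\script{V}_t}a_i(j)^2$ exceeds $20$, and there are at most $\floor{n_t/10}$ of these by the same column-norm bound; a row $a^L$ whose restricted norm has dropped below $20$ is simply no longer blocked, and the discrepancy it accrues afterwards is bounded separately in the proof of Theorem~\ref{thm:g3} by Cauchy--Schwarz ($\le \sqrt{20 n_1}\le O(\lambda)\le O(b_0)$), not inside this dimension lemma. So the fix is not to block all intersecting $\script{I}^L$ rows (which would indeed cost too many dimensions), but to drop them from $\script{Y}_t$ once small and pay for them later; with that change your accounting reduces to the paper's $\ceil{11n_t/12}-1+\ceil{9n_t/10}+\ceil{9n_t/10}-2n_t$ computation (your residual small-$n_t$/floor-ceiling caveat applies to that chain as well).
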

\begin{proof}
To lower bound the dimension of $\script{Z}_t$ we lower bound the dimensions of $\script{U}_t, \script{Y}_t$ and $\script{G}_t$.

First, we have $\dim(\script{U}_t) \geq n_t - \dim(\script{C}_t)-1 \geq \ceil{11n_t/12}-1$. 
Second, at time $t$,
as the sum of $\ell_2$-norm square of all columns is at most $2n_t$, we have that
$\sum_{i \in \mathcal{I}} \sum_{j\in \mathcal{V}_t} a_i(j)^2 \leq 2n_t$. So the number of rows $a_i$ with $\sum_{j\in \mathcal{V}_t} a_i(j)^2 \geq 20$ is at most $\floor{n_t/10}$ and  $\dim(\script{Y}_t)\geq n_t - \floor{n_t/10} = \ceil{9n_t/10}$. 

We now bound $\dim(\script{G}_t)$ by applying Lemma \ref{lem:subspace}.
Let  $G$ denote the matrix with columns $j$ corresponding to variables in $\script{V}_t$ and rows $i$ restricted to $i \in \mathcal{J}_t$ with $(i,j)$ entry
$(2\lambda {e}_{t,i}(j) - {a}_{i}(j))  s_i(t)^{-(p+1)/2}$.
 
Let $H$
be the matrix with entries ${a}_{i}(j)\cdot s_i(t)^{-(p+1)/2}$ for  $i \in \mathcal{J}_t\}$ and $j \in \mathcal{V}_t$.
As  $\vert a_{ij} \vert \leq 1/(2\lambda)$  for $i \in \script{I}_t$,  we have 
\begin{equation*}
|G_{ij}| =    \vert 2\lambda a_{i}(j)^2x_t(j) - a_j(i) \vert \leq  \vert 2\lambda a_{i}(j)^2x_t(j) \vert + \vert a_j(i) \vert\leq 2\vert a_j(i) \vert = 2 |H_{ij}|.
\end{equation*}

Let $K = \text{diag}(H^\top H)$. Then, using Lemma \ref{lem:subspace} with  $\alpha = 2$ and $\beta = 1/10$, we get that there is a  subspace $\mathcal{G}_t$  with $\dim(\script{G}_t) \geq \ceil{9 n_t/10}$ such that 
\[
    \script{G}_t = \{w \in \script{W}_t: w^\top G^\top G w \leq 40 \cdot w^\top K w\},
\]
which by the definition of $G$ and $H$ is equivalent to that given by \eqref{eq:gt}.

Putting together the bounds on the dimensions of these subspaces gives,
\[
    \dim(\script{Z}_t) \geq \dim(\script{U}_t \cap\script{Y}_t \cap \script{G}_t)\geq \ceil{11n_t/12} - 1 + \ceil{9n_t/10} + \ceil{9n_t/10} - 2n_t \geq \ceil{2n_t/3}.\qedhere\]
\end{proof}

\paragraph{Setting the parameters.}
 To show the two bounds in \eqref{eq:c4}, we will set the parameters $b_t,d_t$ (the change in $b_t$) and $p$ in two ways:
     \begin{equation}
      {\text{ \emph{Case 1:} }} d_t = 4(p+1) \cdot h(n_t) \cdot \max_{i\in \mathcal{J}_t} s_i(t)^{-1}  \text{   for all $t \in [T]$, and $p,b_0$ arbitrary}
   \label{eq:c1}
    \end{equation}
    \begin{equation}
  \text{ \emph{Case 2:} } p = 2\log(2m), \; b_0 =  840(p+1) \cdot \max_{j\in\script{J}_t}s_j(t)^{-1} \text{ and } d_t = 0 \text{ for all }t \in [T].
        \label{eq:c2}
 \end{equation}

\paragraph{Bounding the potential.} The next lemma shows that in both these cases, the potential function remains bounded. 
\begin{lem}[Bounded Potential]
\label{thm:g1} In either of the cases given by \eqref{eq:c1} and\eqref{eq:c2}, we have that 
$\Phi(t) \leq 4m  (2/b_0)^p$,
for all $t=0,\ldots,T$.
\end{lem}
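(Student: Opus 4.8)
The plan is an induction on $t$, showing $\Phi(t) \le 4m(2/b_0)^p$ for all $t = 0, \dots, T$. The base case $t=0$ follows directly: at $x_0 = 0$ every \emph{s-Alive} row $i \in \script{I}_0$ has slack $s_i(0) = b_0 - \lambda \sum_j a_i(j)^2 \ge b_0 - 20\lambda = b_0 - 20 b_0/42 > b_0/2$, so $\Phi(0) = \sum_{i \in \script{I}_0} s_i(0)^{-p} \le |\script{I}_0|(2/b_0)^p \le m(2/b_0)^p$. For the inductive step, I would decompose the total change $\Phi(t+1) - \Phi(t)$ into two parts: (a) the \emph{jumps} caused by new rows entering $\script{I}_{t+1}$ (a row $i$ becomes \emph{s-Alive} when $\sum_{j \in \script{V}_{t+1}} a_i(j)^2$ drops to $\le 20$), and (b) the \emph{smooth} change from the movement of $x_t$ along $\eps\delta v_t$ and the barrier shift $\delta^2 d_t$ for rows already in $\script{I}_t$. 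For part (a), a row enters with slack at least $b_0 - 20\lambda - (\text{discrepancy so far})$; since rows in $\script{I}^L$ and not-yet-alive rows have been blocked from incurring any discrepancy (via the constraints defining $\script{U}_t$, $\script{Y}_t$), each such row enters with slack $> b_0/2$, contributing at most $(2/b_0)^p$ per row, and at most $m$ rows ever enter, so the total jump contribution over the whole run is at most $m(2/b_0)^p$. This is the content of the referenced Lemma \ref{lem:dphi}, which I would simply cite. Hence it suffices to show the smooth part never increases $\Phi$ in expectation, keeping $\Phi(t) \le 2m(2/b_0)^p + m(2/b_0)^p \le 4m(2/b_0)^p$ throughout.

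For the smooth part, I would use the second-order Taylor expansion already displayed in the Koml\'{o}s overview: writing $\Delta s_i \simeq (2\lambda\dt{e_{t,i}}{v_t} - \dt{a_i}{v_t})\eps\delta + \lambda\dt{a_i^{(2)}}{v_t^{(2)}}\delta^2 - d_t\delta^2$ (the $-d_t\delta^2$ coming from the barrier move $b_{t+1} = b_t + \delta^2 d_t$, which \emph{increases} slack), one gets $\Phi(t+1) - \Phi(t) \simeq f(t)\delta^2 + g(t)\eps\delta$ where $g(t)$ has zero expectation over $\eps$, and
\[
f(t) = -p\sum_{i \in \script{I}_t}\frac{\lambda\dt{a_i^{(2)}}{v_t^{(2)}} - d_t}{s_i(t)^{p+1}} + \frac{p(p+1)}{2}\sum_{i \in \script{I}_t}\frac{(2\lambda\dt{e_{t,i}}{v_t} - \dt{a_i}{v_t})^2}{s_i(t)^{p+2}}.
\]
The constraints defining $\script{U}_t$ kill the numerator of the second sum for all $i \in \mathcal{C}_t$, so that sum ranges only over $i \in \script{J}_t$, where $s_i(t)^{-1} \le (\Phi(t)/\lfloor n_t/12\rfloor)^{1/p} \le (12\Phi(t)/n_t)^{1/p}$. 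Combined with the $\script{G}_t$ constraint \eqref{eq:gt} — which bounds $\sum_{i\in\script{J}_t}(2\lambda\dt{e_{t,i}}{v_t}-\dt{a_i}{v_t})^2 s_i(t)^{-p-1}$ by $40\sum_{i\in\script{J}_t}\dt{a_i^{(2)}}{v_t^{(2)}}s_i(t)^{-p-1}$ — and the minimality choice of $v_t$ in \eqref{eq:vt7} across an orthonormal basis of $\script{Z}_t$ (whose dimension is $\ge \lceil 2n_t/3\rceil$ by Lemma \ref{lem:g2}, so the average, hence the minimum, over the basis is small), one obtains
\[
f(t) \lesssim p\sum_{i\in\script{J}_t}\frac{\dt{a_i^{(2)}}{v_t^{(2)}}}{s_i(t)^{p+1}}\left(C(p+1)\Big(\tfrac{12\Phi(t)}{n_t}\Big)^{1/p} - \lambda\right) + p\, d_t \sum_{i\in\script{I}_t}\frac{1}{s_i(t)^{p+1}}
\]
for an absolute constant $C$. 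Here I would also need the bound $\dt{a_i^{(2)}}{v_t^{(2)}} \le h(n_t)\norm{v_t}^2 \le h(n_t)$, using hypothesis \eqref{eq:g5} applied to the support of $v_t$ inside $\script{V}_t$ (this is where $h$ enters). In \emph{Case 2} ($d_t = 0$, $p = 2\log(2m)$), plugging in $\Phi(t) \le 4m(2/b_0)^p$ gives $(12\Phi(t)/n_t)^{1/p} \le (48m)^{1/p}\cdot 2/b_0 = O(1/b_0)$ since $m^{1/p} = O(1)$, so the bracket is negative once $b_0 \ge c(p+1)\max_{j\in\script{J}_t}s_j(t)^{-1}$ for a suitable constant — which is exactly \eqref{eq:c2} — giving $f(t) \le 0$. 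In \emph{Case 1}, the choice $d_t = 4(p+1)h(n_t)\max_{i\in\script{J}_t}s_i(t)^{-1}$ is designed so the $d_t$-term exactly absorbs the positive part: the barrier-increase contribution $p\,d_t\sum_{i\in\script{I}_t}s_i(t)^{-p-1}$ dominates $p\sum_{i\in\script{J}_t}\dt{a_i^{(2)}}{v_t^{(2)}}s_i(t)^{-p-1}\cdot C(p+1)(12\Phi(t)/n_t)^{1/p}$ because $\dt{a_i^{(2)}}{v_t^{(2)}} \le h(n_t)$ and $(12\Phi(t)/n_t)^{1/p}$ is absorbed into the constant $4$, again yielding $f(t) \le 0$. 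Either way, since $\E_\eps[g(t)\eps\delta] = 0$ and $f(t)\delta^2 \le 0$, there is a choice of sign $\eps$ with $\Phi(t+1) \le \Phi(t)$ up to the negligible higher-order-in-$\delta$ error (controlled by the choice $\delta \le O(1/(n^2m^6p^4))$ from Appendix \ref{appendix:ss}), completing the induction.

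The main obstacle I anticipate is the bookkeeping around the $\script{G}_t$ constraint and the $\arg\min$ selection: one must argue that because $\dim(\script{Z}_t) \ge \lceil 2n_t/3\rceil$ is a large fraction of $n_t$, the minimum of $\sum_{i\in\script{J}_t}(2\lambda\dt{e_{t,i}}{w}-\dt{a_i}{w})^2 s_i(t)^{-(p+1)}$ over an orthonormal basis $W$ of $\script{Z}_t$ is at most (roughly) $\tfrac{1}{\dim(\script{Z}_t)}\mathrm{tr}$ of the associated quadratic form restricted to $\script{Z}_t$, and that this trace is in turn controlled by $\sum_{i\in\script{J}_t}\dt{a_i^{(2)}}{w^{(2)}}s_i(t)^{-(p+1)}$ via \eqref{eq:gt} and the column-norm bound $\sum_i \sum_{j\in\script{V}_t}a_i(j)^2 \le 2n_t$. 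Getting all the constants ($12$, $40$, $20$, $42$, $840$) to line up so that the final bracket is genuinely nonpositive is the delicate part, but it is purely a constant-chasing exercise once the structure above is in place.
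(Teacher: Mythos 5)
Your overall skeleton --- induction on $t$, separating the jump terms via Lemma \ref{lem:dphi}, and reducing everything to showing $f(t)\le 0$ for the chosen $v_t$ --- matches the paper, and your Case~2 argument is essentially the paper's (there the $\script{G}_t$ constraint and $\lambda=b_0/42$ do the work, and since $d_t=0$ your sign slip is harmless). But Case~1 as written does not close, for three reasons. First, the sign of the barrier term: since $b_{t+1}=b_t+\delta^2 d_t$ \emph{increases} every slack, the drift contributes $-p\delta^2 d_t\sum_i s_i(t)^{-p-1}$ to $f(t)$ (it helps), whereas your expansion has $\Delta s_i\simeq\cdots-d_t\delta^2$ and your displayed $f(t)$ carries $+p\,d_t\sum_i s_i(t)^{-p-1}$; with that sign, the sentence ``the barrier-increase contribution dominates the positive part'' proves the wrong inequality. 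Second, and more substantively, you replace $\max_{j\in\script{J}_t}s_j(t)^{-1}$ by the potential bound $(12\Phi(t)/n_t)^{1/p}$ before comparing with $d_t$, and then claim this factor ``is absorbed into the constant $4$.'' That factor is $O(1)$ only when $p\gtrsim\log m$; in Case~1 the exponent $p$ is arbitrary (for Spencer one takes $p=\log(2m/n)$, and $(48m/n_t)^{1/p}$ can be polynomially large once $n_t$ is small). Moreover \eqref{eq:c1} prescribes $d_t=4(p+1)h(n_t)\max_{i\in\script{J}_t}s_i(t)^{-1}$, which can be much smaller than $(p+1)h(n_t)(12\Phi(t)/n_t)^{1/p}$, so after your replacement the prescribed $d_t$ no longer dominates the positive term. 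Third, the inequality $\dt{a_i^{(2)}}{v_t^{(2)}}\le h(n_t)\norm{v_t}^2$ is not justified: \eqref{eq:g5} controls unweighted sums $\sum_{j\in S}a_i(j)^2$, and coordinate-wise it only yields $a_i(j)^2\le h(1)$, not $h(n_t)$.

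The paper's Case~1 avoids all three issues by not routing through $\script{G}_t$ or through $\dt{a_i^{(2)}}{v_t^{(2)}}$ at all. One keeps $\max_{j\in\script{J}_t}s_j(t)^{-1}$ explicit in \eqref{eq:k1} and invokes Lemma \ref{lem:g4} (the averaging over an orthonormal basis of $\script{Z}_t$, which is where $h$ actually enters): $\sum_{i\in\script{J}_t}\dt{2\lambda e_{t,i}-a_i}{v_t}^2 s_i(t)^{-p-1}\le 8h(n_t)\sum_{i\in\script{J}_t}s_i(t)^{-p-1}$. The positive part of $f(t)$ is then at most $4p(p+1)\delta^2 h(n_t)\max_{j\in\script{J}_t}s_j(t)^{-1}\sum_{i\in\script{J}_t}s_i(t)^{-p-1}$, which the choice in \eqref{eq:c1} cancels exactly against the $-p\delta^2 d_t\sum_{i\in\script{J}_t}s_i(t)^{-p-1}$ drift; no bound on $\Phi(t)$ or on $\dt{a_i^{(2)}}{v_t^{(2)}}$ is needed inside this lemma (the potential-based bound \eqref{eq:k2} is used only later, in the proof of Theorem \ref{thm:g3}, to integrate $d_t$). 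With the sign corrected and Case~1 rerun along these lines, your induction goes through.
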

\begin{proof}
We will prove this by induction. Clearly, this holds at $t=0$ as $\Phi(0) \leq  2 m (2/b_0)^p$.
For the inductive step, we will show that for any $j=0,\ldots,T-1$, if $\Phi(j) \leq  4m (2/b_0)^p$ then
\begin{equation}
 \Phi(j+1) \leq \Phi(j) + \frac{1}{Tb_0^p} +  \vert \script{I}_{j+1} \backslash \script{I}_{j} \vert \cdot \left(\frac{2}{b_0}\right)^p.\label{eq:thm_ind}
\end{equation}
Note that $\vert \script{I}_{j+1} \backslash \script{I}_{j} \vert$ is the number of additional rows in $\mathcal{I}^S$ that may become alive at step $j$.
 This gives the result by induction as summing \eqref{eq:thm_ind} over $j=0,\ldots,T-1$  will give
\begin{equation}
    \Phi(t+1) \leq \Phi(0) + \sum_{j=0}^{T-1}\frac{1}{Tb_0^{p}} + \left(\frac{2}{b_0}\right)^p\sum_{j=0}^{T-1}\vert \script{I}_{j+1} \backslash \script{I}_{j} \vert  \leq 2m \cdot \left(\frac{2}{b_0}\right)^p + \frac{1}{b_0^p} \leq 4m \cdot \left(\frac{2}{b_0}\right)^p.\label{eq:phi}
\end{equation}

We now focus on proving \eqref{eq:thm_ind} for $j=t$.

By the induction hypothesis, $\Phi(t)\leq 4m\left(2/b_0\right)^{p}$.
By Lemma \ref{lem:dphi}, one of the signs for $x_{t+1}$ gives
\begin{equation*}
     \E(\Phi(t+1)) -\Phi(t) \leq  f(t)+ \frac{1}{Tnb_0^p}+ \vert \script{I}_{t+1} \backslash \script{I}_{t} \vert \cdot \left(\frac{2}{b_0}\right)^{p}, \text{ where}
\end{equation*}
where
\begin{equation*}
    f(t) = -p \delta^2 \sum_{i\in \script{I}_t}  \frac{d_t + \lambda\dt{{a}_i^{(2)}}{v_t^{(2)}}}{s_i(t)^{p+1}}  + \frac{p(p+1)\delta^2}{2}\sum_{i\in \script{I}_t}  \frac{\left(2\lambda \dt{e_{t,i}}{v_t} - \dt{{a}_i}{v_t}\right)^2 }{s_i(t)^{p+2}}.
\end{equation*}
So to prove \eqref{eq:thm_ind}, it suffices to show that $f(t)\leq 0$. We first consider the case when $b_t,d_t$ and $p$ are given by \eqref{eq:c1}.

As $2\lambda \dt{e_{t,i}}{v_t} - \dt{{a}_i}{v_t} = 0$ for all $i\notin \script{J}_t$, $f(t)$ satisfies 
\begin{align}
 f(t) &\leq -p \delta^2 \sum_{i\in \script{J}_t}  \frac{d_t + \lambda\dt{{a}_i^{(2)}}{v_t^{(2)}}}{s_i(t)^{p+1}}  + \frac{p(p+1)\delta^2}{2}\max_{j\in\script{J}_t}s_j(t)^{-1}\cdot\sum_{i\in \script{J}_t}  \frac{\left(2\lambda \dt{e_{t,i}}{v_t} - \dt{{a}_i}{v_t}\right)^2 }{s_i(t)^{p+1}}. \label{eq:k1}
\end{align}
By a simple averaging argument described in Lemma \ref{lem:g4}, we also have that
\begin{align}
     \sum_{i\in \script{I}_t}  &\frac{\left(2\lambda \dt{e_{t,i}}{v_t} - \dt{{a}_i}{v_t}\right)^2 }{s_i(t)^{p+1}}\leq \sum_{i\in \script{I}_t} \frac{8h(n_t)}{s_i(t)^{p+1}}. \label{eq:g2}
\end{align}
Plugging~\eqref{eq:g2} in~\eqref{eq:k1} gives
\begin{align}
 f(t) &\leq -p \delta^2 \sum_{i\in \script{J}_t}  \frac{d_t}{s_i(t)^{p+1}}  + \frac{p(p+1)\delta^2}{2}\max_{j\in\script{J}_t}s_j(t)^{-1}\cdot\sum_{i\in \script{J}_t}  \frac{8 h(n_t)}{s_i(t)^{p+1}}.
 \label{eq:extra1}
\end{align}
Therefore, if $d_t$ satisfies equation~\eqref{eq:c1}, then $f(t)\leq 0$.

We now consider the case in \eqref{eq:c2}. As $v_t \in \script{G}_t$, we have
\begin{align}
     \sum_{i\in \script{J}_t}  &\frac{\left(2\lambda \dt{e_{t,i}}{v_t} - \dt{{a}_i}{v_t}\right)^2 }{s_i(t)^{p+1}}\leq  40 \cdot \sum_{i\in \script{J}_t}  \frac{\dt{a_i^{(2)}}{{v_t}^{(2)}}}{s_i(t)^{p+1}}.\label{eq:g3}
\end{align}
Next, as $d_t=0$ and $\lambda = b_0/42$, \eqref{eq:k1} and \eqref{eq:g3} give
\[
f(t) 
\label{eq:form2}
\leq  \sum_{i\in \script{J}_t}\frac{p\delta^2\dt{{a}_i^{(2)}}{v_t^{(2)}}}{s_i(t)^{p+1}}\cdot \left(-\frac{b_0}{42}+ 20(p+1) \cdot \max_{j\in\script{J}_t}s_j(t)^{-1} \right). \]
So if $b_0$ satisfies equation~\eqref{eq:c2}, then $f(t)\leq 0$.
\end{proof}
The next lemma gives a bound on the minimum value of slack for any active row, given the bound on potential function.
\begin{lem}
 For any $t \in \{0,\ldots,T\}$, if $\Phi(t) \leq 4m  (2/b_0)^p$, then $ \max_{i\in \mathcal{J}_t} s_i(t)^{-1} \leq \frac{2}{b_0}\left(\frac{48m}{n_t}\right)^{\frac{1}{p}}$.
 \end{lem}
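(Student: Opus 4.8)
The plan is a one-shot averaging argument over the "blocked" rows. Recall that $\script{J}_t = \script{I}_t \setminus \script{C}_t$, where $\script{C}_t$ is the set of rows of $\script{I}_t$ carrying the $\floor{n_t/12}$ smallest slacks $s_i(t)$. If $\script{J}_t$ is empty there is nothing to prove, so assume it is nonempty; then necessarily $\vert \script{C}_t \vert = \floor{n_t/12}$, and let $i^\ast \in \script{J}_t$ be a row attaining $\max_{i\in\script{J}_t} s_i(t)^{-1}$, i.e.\ the smallest slack among rows of $\script{J}_t$.

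First I would observe that every row $j \in \script{C}_t$ satisfies $s_j(t) \le s_{i^\ast}(t)$, since the blocked rows are by definition those with the $\floor{n_t/12}$ smallest slacks and $i^\ast \notin \script{C}_t$. Hence the rows in $\script{C}_t \cup \{i^\ast\}$ — which are $\floor{n_t/12}+1$ distinct rows, all lying in $\script{I}_t$ — each have slack at most $s_{i^\ast}(t)$, so each contributes at least $s_{i^\ast}(t)^{-p}$ to $\Phi(t)$. Since $\Phi(t) = \sum_{i\in\script{I}_t} s_i(t)^{-p}$ is a sum of nonnegative terms, this yields
\[
\Phi(t) \;\ge\; \bigl(\floor{n_t/12}+1\bigr)\, s_{i^\ast}(t)^{-p} \;\ge\; \frac{n_t}{12}\, s_{i^\ast}(t)^{-p},
\]
using $\floor{x}+1 \ge x$. (The point worth flagging: including $i^\ast$ itself in the count is what upgrades the $\floor{n_t/12}$ — which could be $0$ when $n_t$ is small — to the clean factor $n_t/12$; this is the only place any care is needed, and it is minor.)

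Rearranging gives $s_{i^\ast}(t)^{-p} \le 12\,\Phi(t)/n_t$, and plugging in the hypothesis $\Phi(t) \le 4m(2/b_0)^p$ gives $s_{i^\ast}(t)^{-p} \le 48m\,(2/b_0)^p/n_t$. Taking $p$-th roots (all quantities positive) yields $\max_{i\in\script{J}_t} s_i(t)^{-1} = s_{i^\ast}(t)^{-1} \le \tfrac{2}{b_0}\,(48m/n_t)^{1/p}$, as claimed. No step here is an obstacle; the whole argument is elementary once one writes down the definition of $\script{C}_t$.
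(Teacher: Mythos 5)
Your proposal is correct and follows essentially the same argument as the paper: counting that $\floor{n_t/12}+1$ rows of $\script{I}_t$ (the blocked rows plus the minimizing row of $\script{J}_t$) have slack at most $s_{i^\ast}(t)$, so $\Phi(t) \geq (n_t/12)\, s_{i^\ast}(t)^{-p}$, and then plugging in the potential bound. The only difference is that you spell out the counting and the floor-vs.-$n_t/12$ issue more explicitly than the paper does.
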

\begin{proof}
By the definition of $\mathcal{J}_t$, for any $i \in \mathcal{J}_t$, there are at least $\floor{n_t/12}+1$ indices $j$ in $\script{I}_t$ such that $s_j(t) \leq s_i(t)$. Therefore, 
\begin{equation}
    \max_{i\in \mathcal{J}_t} \frac{1}{s_i(t)} \leq \left(\frac{12\Phi(t)}{n_t}\right)^{\frac{1}{p}} \leq \frac{2}{b_0}\left(\frac{48m}{n_t}\right)^{\frac{1}{p}},\label{eq:k2}
\end{equation} where the last inequality follows by the assumption, $\Phi(t) \leq 4m  (2/b_0)^p$.
\end{proof}
\begin{lem}
\label{lem:g4}
For any $t\in [T]$, the choice of $v_t$ satisfies
\begin{equation}
  \sum_{i\in \script{J}_t}  \frac{\dt{2\lambda{e}_{t,i} - {a}_{i}}{v_t}^2}{s_i(t)^{p+1}}\leq \sum_{i\in \script{J}_t} \frac{8h(n_t)}{s_i(t)^{p+1}}.\label{eq:vt1}
\end{equation}
\end{lem}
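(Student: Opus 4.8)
The plan is to exploit the fact that $v_t$ is chosen in \eqref{eq:vt7} as the \emph{minimizer} of the quantity $\sum_{i\in\script{J}_t}\dt{2\lambda e_{t,i}-a_i}{w}^2 s_i(t)^{-(p+1)}$ over an orthonormal basis $W=\{w_1,\dots,w_k\}$ of $\script{Z}_t$, so its objective value is at most the \emph{average} of the objective over the $k$ basis vectors. After swapping the order of summation this average equals
\[
\frac1k\sum_{i\in\script{J}_t}s_i(t)^{-(p+1)}\sum_{\ell=1}^{k}\dt{2\lambda e_{t,i}-a_i}{w_\ell}^2 ,
\]
and since $W$ is orthonormal, $\sum_{\ell=1}^{k}\dt{u}{w_\ell}^2=\norm{\mathrm{proj}_{\script{Z}_t}u}^2\le\norm{u}^2$ for every $u\in\R^n$. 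So the lemma reduces to upper bounding, for each $i\in\script{J}_t$, the squared norm of $2\lambda e_{t,i}-a_i$ restricted to the relevant coordinates.

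First I would note that each $w_\ell\in\script{Z}_t\subseteq\script{W}_t$ is supported on the alive coordinates $\script{V}_t$, so $\dt{2\lambda e_{t,i}-a_i}{w_\ell}$ only sees the restriction of $2\lambda e_{t,i}-a_i$ to $\script{V}_t$; hence $\sum_{\ell=1}^{k}\dt{2\lambda e_{t,i}-a_i}{w_\ell}^2\le\sum_{j\in\script{V}_t}\bigl(2\lambda a_i(j)^2 x_t(j)-a_i(j)\bigr)^2$. Next, for $i\in\script{J}_t\subseteq\script{I}_t\subseteq\script{I}^S$ we have $\vert a_i(j)\vert\le 1/(2\lambda)$, so $\vert 2\lambda a_i(j)x_t(j)\vert\le 1$ and therefore $\bigl(2\lambda a_i(j)^2 x_t(j)-a_i(j)\bigr)^2=a_i(j)^2\bigl(2\lambda a_i(j)x_t(j)-1\bigr)^2\le 4a_i(j)^2$ — the same elementary estimate already used inside the proof of Lemma~\ref{lem:g2}. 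Then I would apply hypothesis \eqref{eq:g5} with $S=\script{V}_t$ (it is inherited by the truncated rows $a_i^S$ since zeroing entries only decreases the sum) to get $\sum_{j\in\script{V}_t}a_i(j)^2\le n_t\,h(n_t)$, so that $\sum_{\ell=1}^{k}\dt{2\lambda e_{t,i}-a_i}{w_\ell}^2\le 4 n_t h(n_t)$ for each $i\in\script{J}_t$.

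Finally I would invoke Lemma~\ref{lem:g2}, which gives $k=\dim(\script{Z}_t)\ge\ceil{2n_t/3}\ge 2n_t/3$, so $4n_t h(n_t)/k\le 6h(n_t)\le 8 h(n_t)$; combining the displays gives
\[
\sum_{i\in\script{J}_t}\frac{\dt{2\lambda e_{t,i}-a_i}{v_t}^2}{s_i(t)^{p+1}}\le\frac1k\sum_{i\in\script{J}_t}\frac{4n_t h(n_t)}{s_i(t)^{p+1}}\le\sum_{i\in\script{J}_t}\frac{8 h(n_t)}{s_i(t)^{p+1}} ,
\]
which is exactly \eqref{eq:vt1}. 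I do not expect a genuine obstacle here; the points that need a little care are that the minimum over an orthonormal basis is bounded by the average (the standard averaging trick for picking a good direction), that the basis vectors are supported on $\script{V}_t$ so we may restrict $2\lambda e_{t,i}-a_i$ to those coordinates, and that the membership $i\in\script{I}^S$ is precisely what controls $\vert 2\lambda a_i(j)x_t(j)-1\vert\le 2$ — which is where the absolute constant degrades from $1$ to the stated $8$.
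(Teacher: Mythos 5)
Your proposal is correct and follows essentially the same route as the paper: bound the minimizer in \eqref{eq:vt7} by the average over the orthonormal basis of $\script{Z}_t$, use orthonormality and support on $\script{V}_t$ to reduce to $\sum_{j\in\script{V}_t}a_i(j)^2\le n_t h(n_t)$ via \eqref{eq:g5}, pick up the factor $4$ from $\vert a_i(j)\vert\le 1/(2\lambda)$ on small rows, and divide by the dimension lower bound from Lemma~\ref{lem:g2}. The only (cosmetic) difference is that you obtain the factor $4$ coordinatewise, whereas the paper uses $(a+b)^2\le 2(a^2+b^2)$ together with the entrywise bound $\vert 2\lambda e_{t,i}(j)\vert\le\vert a_i(j)\vert$; both yield the same estimate after summing over the basis.
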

\begin{proof}
 
Using $(a+b)^2 \leq 2 (a^2 + b^2)$, and 
as $ |2\lambda e_{t,i}(j)| = |2 \lambda a_i(j)^2x_t(j)| \leq |a_i(j)|$ as $|a_i(j)|\leq  1/2\lambda$ for any $j$ and $i \in \script{I}^S$, we have that for any $w$,
\begin{equation*}
   \sum_{i\in \script{J}_t}  \frac{\dt{2\lambda{e}_{t,i} - {a}_{i}}{w}^2}{s_i(t)^{p+1}} \leq   \sum_{i\in \script{J}_t}  \frac{2\dt{a_i}{w}^2 + 2\dt{2 \lambda e_{t,i}}{w}^2}{s_i(t)^{p+1}}  \leq 4 \sum_{i\in \script{J}_t}  \frac{\dt{a_i}{w}^2}{s_i(t)^{p+1}}. 
\end{equation*} 
Let $W_t = \{w_1,\ldots,w_k\}$ be an orthonormal basis for $\script{Z}_t$ and $k=\dim(\script{Z}_t)$. As $\script{Z}_t \subseteq \script{V}_t$,
\begin{align*}
  \sum_{i\in\script{J}_t}\frac{\sum_{j=1}^k \dt{a_i}{w_j}^2 }{s_i(t)^{p+1}} \leq \sum_{i\in\script{J}_t}\frac{\sum_{j\in \script{V}_t}a_i(j)^2}{s_i(t)^{p+1}} \leq n_t \sum_{i\in \script{J}_t} \frac{h(n_t)}{s_i(t)^{p+1}}\label{eq:alph}.
\end{align*}
where the second inequality uses that $\sum_{j\in \script{V}_t}a_i(j)^2 \leq n_t \cdot h(n_t)$ by the definition of $h$.

As $k \geq \ceil{n_t/2}$, this gives 
\begin{equation*}
     \frac{1}{k}\sum_{j=1}^k\sum_{i\in \script{J}_t}  \frac{\dt{2\lambda{e}_{t,i} - {a}_{i}}{w_j}^2}{s_i(t)^{p+1}}\leq \frac{n_t}{k} \sum_{i\in \script{J}_t} \frac{4h(n_t)}{s_i(t)^{p+1}} \leq \sum_{i\in \script{J}_t} \frac{8h(n_t)}{s_i(t)^{p+1}} .
\end{equation*}
The result now follows as $v_t$ in \eqref{eq:vt7} minimizes $\sum_{i\in \script{J}_t}  \dt{2\lambda{e}_{t,i} - {a}_{i}}{w_j}^2  s_i(t)^{-p-1}  $ over all $w_j \in W_t$.
\end{proof}

We now prove prove the main theorem.
\begin{proof}[Proof of Theorem \ref{thm:g3}]
Recall that we divide each row $a$ of $A$ as $a = a^S + a^L$. We will bound $\dt{a^L}{x_T}$ and $\dt{a^S}{x_T}$ separately.

Let $t_1$ denote the earliest when the squared norm of $a^L$ (restricted to the alive variables) is at most $20$, and let $n_1$ be number of non-zeros in ${a}^L$ restricted to the set $\script{V}_{t_1}$. 
 As  $\vert a^L(j)\vert \geq 1/(2\lambda)$ for each $j$,  the number of non-zero variables $n_1$ in $a^L$ at time $t_1$ is at most  $80 \lambda^2$, as
 \[   n_1/(4\lambda^2)\leq \sum_{j \in \script{V}_{t_1}} a^L(j)^2 \leq 20.\]
Moreover, as $a^L$ incurs zero discrepancy until $t_1$, the overall discrepancy satisfies
\begin{align}
    \vert \dt{{a}^L}{{x}_{T}} \vert &=  \vert \dt{{a}^L}{{x}_{t_1}}\vert  + \vert \dt{{a}^L}{x_T - {x}_{t_1}}\vert \leq 0+\sqrt{n_1}\cdot (\sum_{j \in \script{V}_{t_1}} a^L(j)^2)^{1/2} \leq 80\lambda\leq 3b_0. \label{eq:al}
\end{align}

Henceforth, we focus on the rows $a^S$. We first show that the slacks are always positive.
Let $\gamma = b_0/4(4m)^{\frac{1}{p}}$. By Lemma \ref{thm:g1}, for all $t \in [T]$, 
$\Phi(t) \leq  4m(2/b_0)^{p}< \gamma^{-p}$.
This implies that $\vert s_i(t) \vert \geq \gamma$ for all $i\in \script{I}^S_t$ and $t \in [T]$. In one step of the algorithm,
\begin{align*}
     \vert s_i(t) - s_i(t-1)\vert&\leq \delta^2 d_{t-1} + \vert \dt{a_i}{x_t} - \dt{a_i}{x_{t-1}}\vert \\
     &\leq \delta^2 d_{t-1} + \vert \delta\dt{a_i}{v_{t-1}} \vert \leq 20 n\delta \leq 2\gamma.
\end{align*}
So, if $s_i(t-1) \geq \gamma$ and $\Phi(t) <\gamma^{-p}$, then $s_i(t) \geq 0$, i.e., the slack $s_i(t)$ cannot go from being greater than $\gamma$ to less than $-\gamma$ in a single step.
So, for every $i\in \script{I}^S$ and $t\in [T]$, $s_i(t)\geq \gamma$ and
$\dt{a_i}{x_{T}} \leq b_T$.
Together with~\eqref{eq:al} this gives, $|\dt{a}{x_T}| \leq |\dt{a^S}{x_T}| + |\dt{a^L}{x_T}| \leq b_T +3b_0$.

Let $x\in \{-1,1\}^n$ be obtained from $x_T$ by the rounding $x(j) = \mathrm{sign}(x_{T}(j))$. 
As $T = (n-2)/\delta^2$, $\norm{x_{T}}^2 = n-2$ with $\vert x_\tau(j) \vert \leq 1$ for all $j\in [n]$. After rounding $x_{T}$ to $x$, we have $\norm{x}^2 = n$. For any row $a$ of $A$,  the discrepancy is bounded by 
\[
     \vert \dt{a}{x} \vert = \vert \dt{a}{x_T} \vert + \vert \dt{a}{x - x_T}\vert\leq \vert \dt{a}{x_T}\vert + M\sum_{j=1}^n \vert x(j) - x_{T}(j)\vert 
     \leq b_T + 3b_0 + 2M.
\]
We now consider the two cases for  $b_0$, $d_t$, $p$. If the second case given by \eqref{eq:c2}, then by \eqref{eq:k2}, $b_0 \leq 1680(p+1) \cdot (48m/n_t)^{1/p}/b_0$.     
     As $n_t \geq 1$ for all $t \in [T]$ and $p = \log(2m)$, we have  $\left(48m/n_t\right)^{1/p}\leq 10e$, and setting $b_0 = 250\sqrt{\log(2m)}$ suffices.
Since $d_t = 0$, $b_T=b_0$ and $\norm{Ax}_\infty \leq 4b_0 + 2M$.

In the first case given by~\eqref{eq:c1}, then by~\eqref{eq:k2}, we have $d_t = 8(p+1)(48m)^\frac{1}{p}\cdot \frac{h(n_t)}{b_0 n_t^{1/p}}$  for all $t \in [T]$. Summing $d_t$ over $t$ gives
\[
    b_T - b_0 = \delta^2\sum_{t=0}^{T-1} d_t =8(p+1)(48m)^\frac{1}{p}\delta^2 \cdot\sum_{t=0}^{T-1} h(n_t)/(b_0n_t^{1/p}).\]
As $n_t > n-\delta^2 t - 1 \geq$ and $h$ is non-increasing,
 $\delta^2 \cdot\sum_{t=0}^{T-1} h(n_t) n_t^{-1/p} 
    \leq \beta$,
    so that $b_T \leq  b_0  + 8(p+1)(48m)^{1/p}\beta/b_0$.
Optimizing  $b_0 = (8(p+1)(48m)^{1/p} \beta)^{1/2}  $ gives that $b_T = 2b_0$ and thus
     $\norm{Ax}_\infty \leq b_T + 3b_0 + 2M \leq 5b_0 + 2M\label{eq:b2}$, giving the desired result.

\end{proof}

\section{Applications}
\subsection{Set Coloring}
We bound the discrepancy of a set system $(U, \mathcal{S})$ with $\vert U\vert = n$, $\vert \script{S}\vert = m$, and $m\geq n$.
As $\|A^j\|_2\leq \sqrt{m}$, we have $L = \sqrt{m}$, and as $\sum_{j\in S} a_i(j)^2\leq \vert S \vert$, we can set $h(t) = 1$ for all $t\in [n]$. 
Consider \eqref{eq:c4} and suppose $p\geq 1.1$ so that $p/(p-1) = O(1)$. Then 
\[ \beta = \int_{t=0}^{n-2} h(n-t)\cdot (n-t)^{-1/p}dt  = O(n^{1-1/p}),\]
and the first bound in \eqref{eq:c4} gives $b_0 = O(pn^{1/2} (m/n)^{1/p})$. Setting $p=\log(2m/n)$ gives Spencer's $O(\sqrt{n \log (2m/n)})$ bound.

\subsection{Vector Balancing}\label{vb}
We now consider the discrepancy a matrix $A \in \R^{m \times n}$ with column $\ell_2$-norms at most $1$.

Here $L=1$ and the second term in \eqref{eq:c4} directly gives a $O(\sqrt{\log m})$ bound. This also implies an $O(\sqrt{\log n})$ bound as at most $n^2$ rows can have $\ell_1$-norm more than $1$, and we can assume that $m \leq n^2$.
In particular,
for a row $a_i$ with $\norm{a_i}_2 < 1/n^{1/2}$, we have $\vert \dt{a_i}{x} \vert\leq \norm{a_i}_1 \leq \sqrt{n\norm{a_i}_2} < 1$ and it can be ignored. The sum of squares of elements in $A$ is at most $n$ the number of rows with $\norm{a_i}_2 > 1/n^{1/2}$ is at most $n^2$.

\subsection{Sub-Gaussian Matrices} \label{gm}
Let $X$ be a random variable with $\E(X) = 0$. $X$ is called Sub-Gaussian with variance $\sigma^2$ if its moment generating function satisfies
$
    \E(e^{sX}) \leq e^{\sigma^2 s^2/2}$ for all $s \in \R$.
For a Sub-Gaussian random variable, $\E(X^2) \leq 4\sigma^2$.
\subg*
\begin{proof}
As $a_{i}(j)$ is a Sub-Gaussian with variance $\sigma^2$, 
$a_{i}(j)^2- \E(a_{i}(j)^2)$ 
is a mean zero and sub-exponential random variable with parameter $16\sigma^2$ \cite{vershynin2018high}.
    
For any $S \subseteq [n]$ with $\vert S\vert = s$,
 
Bernstein's inequality  for sub-exponential random variables \cite{vershynin2018high} (Theorem 2.8.1) gives that, 
\begin{equation}
    \Pr(\sum_{j\in S} a_{i}(j)^2 - \E(a_{i}(j)^2) \geq st) \leq \exp(-\min(s^2t^2/16\sigma^4, st/16\sigma^2)).\label{eq:bern} 
\end{equation} 
Setting $t = 96\sigma^2\left(\log (ne/s) + (\log m)/s \right)$ and as $\E( a_i(j)^2) \leq 4 \sigma^2$, and taking a union bound over all the rows and all possible subsets of $s$ columns, we get that,
    \begin{equation}
        \sum_{j \in S} a_{i}^2(j) \leq 100\sigma^2 \vert S\vert \left(\log (ne/| S|) +\log m)/|S|)\right).\label{eq:reg}
    \end{equation}
     for every $S \subseteq [n]$, $i \in [m]$,  with probability at least $1-1/2m^2$.
     
Similarly, as $a_i(j)$ is  sub-Gaussian with mean $0$ and variance $\sigma^2$, with probability at least $1-1/2m^2$,
    we have 
    $    \vert a_{i}(j) \vert \leq 3\sigma \sqrt{\log(mn)}$ 
    for all $i\in [m], j\in [n]$, and thus
the $\ell_2$-norm of a column is at most $L = 3\sqrt{m}\sigma \sqrt{\log(mn)}$ and $M  = 3 \sigma\sqrt{\log mn}$. By~\eqref{eq:reg}, we can set
\begin{equation*}
   h(t) = 100\sigma^2  \left(\log \left(\frac{ne}{t}\right) +\frac{\log m}{t}\right). 
\end{equation*}
A direct computation gives $\beta = \int_0^{n-2} h(n-t) (n-t)^{-1/p} dt = O(\sigma^2  (n^{1-1/p} +  p\log{m})) .$
Using Theorem \ref{thm:g3} with $p = 2\ceil{\log(2m/n)}$, gives 
$b_0 = O(\sigma (p (m/n)^{1/p} (n+ n^{1/p} p \log m))^{1/2})  = O(\sigma n^{1/2} \log (2m/n))$.

Thus, with high probability  
$ \norm{Ax}_{\infty} \leq (5b_0+2M) = O(\sigma \sqrt{n \log (2m/n)})$.

\end{proof}
\subsection{Random Matrices}
The result above directly implies the following bound for random matrices.
\random*
\begin{proof}Consider a random vector $X$ chosen uniformly at random from the unit ball, $\{x\in \R^m : \norm{x}_2 \leq 1\}$. Then every coordinate of $X$ is sub-Gaussian with variance $\sigma^2 = C/\sqrt{m}$, where $C$ is a constant \cite{vershynin2018high} (Theorem 3.4.6, Ex 3.4.7). The result now follows from Theorem \ref{thm:sr1}.
\end{proof}

\section{Flexibility of the Method}
An advantage of the potential function approach is its flexibility. We describe two illustrative applications.  
In Section \ref{s:stoc-disc} we show how the bounds for matrices $A$ and $B$ obtained using the framework can be used to directly give bounds for $C= A+B$ by combining the potentials for $A$ and $B$ in a natural way.
 
In Section \ref{sec:spectral} we consider how the requirement on the function $h(\cdot)$ in Theorem \ref{thm:g3}
can be relaxed, and use it to bound the discrepancy of sparse hypergraphs (the Beck-Fiala setting) satisfying a certain pseudo-randomness condition.

\subsection{Subadditive Stochastic Discrepancy}
\label{s:stoc-disc}
\subadditive*
\begin{proof}
Let $\Phi_1(t)$, $\Phi_2(t)$ be the potential functions corresponding to $A$ and $B$, respectively.
Let the parameters for Algorithm \ref{alg:p2} on $A$ be $b^1_0, p_1, d^1_t, h_1(\cdot)$ and for $B$ be $b^2_0, p_2, d^2_t, h_2(\cdot)$.

Note that it might not be possible to select an update $v_t$ at time $t$, that ensures that both $\Phi_1(t+1) \leq \Phi_1(t)$ and $\Phi_2(t+1) \leq \Phi_2(t)$ hold, but we can find a $v_t$ for which a weighted sum of $\Phi_1(t)$ and $\Phi_2(t)$ decreases at every step.

Consider the potential function \begin{equation*}
    \Phi(t) = \left(b^1_0/2\right)^{p_1}\Phi_1(t) + (b^2_0/2)^{p_2}\Phi_2(t)\,.
\end{equation*}

We apply the same algorithmic framework. For $t=1, \ldots, T$ , select $v_t$ such that $\E(\Phi(t+1)) \leq \Phi(t)$, and select the sign of $\varepsilon$ for which $\Phi(t+1) \leq \Phi(t)$, and set $x_{t+1} = x_t + \epsilon\delta v_t$.
To this end, it suffices to find a $v_t$ such that $\E(\Phi_1(t+1)) \leq \Phi_1(t)$ and $\E(\Phi_2(t+1)) \leq \Phi_2(t)$. 

 Let $\script{Z}^1_t$ and $\script{Z}^2_t$ be the feasible subspaces at step $t$ for $A$ and $B$ respectively from Algorithm \ref{alg:p2}. 
 We will search for $v_t$ in $\script{Z}_t = \script{Z}^1_t \cap \script{Z}^2_t$. By Lemma \ref{lem:g2}, $\dim(\script{Z}^1_t), \dim(\script{Z}^2_t)\geq  \ceil{2n_t/3}$. Therefore,
\begin{equation*}
    \dim(\script{Z}_t) = \dim(\script{Z}^1_t \cap \script{Z}^2_t)\geq \ceil{2n_t/3} + \ceil{2n_t/3} - n_t\geq n_t/3.
\end{equation*}
 
Using Lemma \ref{lem:g4} on $A$ and $B$, along with Markov's inequality implies that there exists a vector $w \in \script{Z}_t$ such that
\begin{equation}
 \sum_{i\in \script{I}^1_t} \frac{\dt{2c b_0^1{e}_{t,i} - {a}_{i}}{w}^2 }{s_i(t)^{p_1+1}}  \leq \sum_{i\in \script{I}^1_t} \frac{25h_1(n_t)}{s_i(t)^{p_1+1}}\quad\text{ and }\quad   \sum_{i\in \script{I}^2_t} \frac{\dt{2cb_0^2{e}_{t,i} - {a}_{i}}{w}^2 }{s_i(t)^{p_2+1}} \leq  \sum_{i\in \script{I}^2_t} \frac{25h_2(n_t)}{s_i(t)^{p_2+1}}.\label{eq:twovt}
\end{equation}
Comparing~\eqref{eq:twovt} with~\eqref{eq:vt1}, the functions $h_1(\cdot)$ and $h_2(\cdot)$ only increase by a constant factor when compared to running Algorithm  \ref{alg:p2} on $A$ and $B$ independently. So it suffices to multiply $d_t^1$ and $d_t^2$ by $4$ to ensure that by Lemma \ref{thm:g1},
 
\begin{equation}
    \E[\Phi_1(t)] -\Phi_1(t-1) \leq \frac{1}{Tn(b_0^1)^{p_1}} \quad \text{and} \quad \E[\Phi_2(t)] -\Phi_2(t-1) \leq \frac{1}{Tn(b_0^2)^{p_2}}. \label{eq:sa1}
\end{equation}
Plugging~\eqref{eq:sa1} in the definition of $\Phi(t)$, we get $ \E[\Phi(t)] -\Phi(t-1) \leq 2/(Tn)$. So one of the two choices of $x_t$ gives $\Phi(t) -\Phi(t-1) \leq 2/(Tn)$. Summing over $t$,
\begin{align*}
    \Phi(t) &\leq  \Phi(0) + \frac{2}{n}
    \leq \left(\frac{b_0^1}{2}\right)^{p_1}\Phi_1(0) + \left(\frac{b_0^2}{2}\right)^{p_2}\Phi_2(0) + \frac{2}{n}. 
\end{align*}
By Lemma \ref{lem:dphi}, $\Phi_1(0)\leq 2m\cdot (2/b_0^1)^{p_1}$ and $\Phi_2(0)\leq 2m\cdot (2/b_0^2)^{p_2}$, thus $\Phi(t)\leq \Phi(0) +  2/n \leq 5m$.
For a row $i \in \script{J}_t^\ell$ for $\ell \in \{1, 2\}$, we have 
$(\floor{n_t/12}+1) \cdot (b^\ell_0/2)^{p_\ell} \cdot s_i(t)^{-p_\ell} \leq \Phi(t) \leq 5m$, which implies that for any $t$, and $\ell \in \{1,2\}$, 
\begin{equation}
     \max_{i\in \script{J}_t^\ell}\; s_i(t)^{-1} \leq \frac{2}{b_0^\ell}\left(\frac{60m}{n_t}\right)^\frac{1}{p_\ell}. \label{eq:sa2}
\end{equation}
Upon comparing~\eqref{eq:sa2} with~\eqref{eq:k2}, notice that $\max_{k\in \script{J}_t^1} \; s_k(t)^{-1}$ and $\max_{k\in \script{J}_t^2} \; s_k(t)^{-1}$ are only a constant factor larger when compared to running Algorithm \ref{alg:p2} on $A$ and $B$ separately, and hence the discrepancies for both $A$ and $B$ are only a constant factor larger.
\end{proof}

\subsection{Discrepancy of Sparse Pseudo-random Hypergraphs} 
\label{sec:spectral}
In this section, we consider $0/1$ matrices that satisfy a certain regularity property, namely, for most rows, the sum of their entries in any subset of columns is close to the sum of the full row scaled by the fraction of columns in the subset. This property is satisfied, e.g., by the matrices that correspond to sparse random hypergraphs. 
In particular, we show the following.

\extended*

\paragraph{Proof outline.} 
At a high level the proof is similar to that of Theorem~\ref{thm:sdisc}, using a weighted potential function. However, rather than just two potentials, we will have to consider a combination of $O(\log n)$ potentials, and it will take some care to make sure this doesn't create an overhead in the discrepancy. We note that the main algorithm remains: at each step choose a vector in a subspace defined by a set of constraints based on the current vector $x_t$.

Consider the case when $A$ has at most $n$ rows and we run Algorithm \ref{alg:p2} on $A$ with the additional constraints that at time $t$, 
\begin{enumerate}[label=(\alph*)]
    \item we ignore all rows with $\sum_{i\in \script{V}_t} a_i(j)  < 20\beta$ from the potential function, and 
    \item we move orthogonal to all rows for which $|\sum_{i\in \script{V}_t} a_i(j) - \norm{a_i}_1\cdot (n_t/n)| \geq 10n_t$.
\end{enumerate}
In the first case, once the size of rows becomes less than $20\beta$ at some step $t$, we will simply bound the discrepancy gained by this row after $t$ by $20\beta$. 

The second set of rows are the one that do not reduce in size proportional to the progress of the coloring.
Using the assumption in the theorem, i.e., \eqref{eq:s1} with $c = 10$, the number of rows for which point (b) is true is at most $n_t/100$. So, for all but $n_t/100$ rows, 
\begin{equation*}
    20\beta \leq \sum_{j \in \script{V}_t} a_i(j) \leq \norm{a_i}_1 \cdot \frac{n_t}{n} + 10\beta.
\end{equation*}
This gives $\beta \leq (1/10)\norm{a_i}_1 \cdot (n_t/n)$ if row $i$ is active and therefore, for all but $n_t/100$ rows, using the assumption of the theorem,
\begin{equation}
   \sum_{j \in \script{V}_t} a_i(j) \leq 2\norm{a_i}_1  \cdot \frac{n_t}{n}. \label{eq:h1}
\end{equation}
So, $h_i(|S|) = 2\norm{a_i}_1/n$ satisfies the bound \eqref{eq:g5} in Theorem \ref{thm:g3} and we obtain 
\[
    |a_i \cdot x_T| = O(\beta) + \min \left( O(\sqrt{p\cdot \norm{a_i}_1 }), O(\sqrt{n\log(2n)})\right)\]
For $p = 2$, $ |a_i \cdot x_T|= O(\beta+\sqrt{\norm{a_i}_1})$. So, the discrepancy of a row is proportional to the square-root of its initial $\ell_1$-norm. Unfortunately, for rows with large initial norms, this can be as large as $O(\sqrt{n})$.

To fix this issue, let us restrict ourselves to the case when all rows have similar initial $\ell_1$-norm, i.e., for all $i$,
\begin{equation*}
   x\cdot k \leq \norm{a_i}_1 < 2x \cdot k.
\end{equation*}
Since every column of $A$ contains at most $k$ ones, the number of rows with $\ell_1$-norm greater than $x \cdot k$ is at most $(k\cdot n)/ (x \cdot n) = n/x$

By \eqref{eq:h1}, for all but $n_t/100$ rows, $\sum_{j \in \script{V}_t} a_i(j) \leq 4x \cdot k \cdot (n_t/n)$.
Note that a row only gains discrepancy when it satisfies both $\sum_{i \in \script{V}_t} a_i(j) < 20k$ and $|\sum_{i \in \script{V}_t} a_i(j) - \norm{a_i}_1 \cdot (n_t/n) | \leq 10 \beta$. This implies that 
\begin{equation*}
    \norm{a_i}_1\cdot (n_t/n) - 10\beta \leq \sum_{i \in \script{V}_t} a_i(j) \leq 20k. 
\end{equation*}
In other words, $\norm{a_i}_1\cdot (n_t/n) \leq 20 k + 10\beta \leq 30 k$. Under the assumption that $\norm{a_i}_1 \geq x \cdot k$ for all rows, we get $(n_t/n) \leq 30/x$.
So, when $n_t \geq 30n/x$, we can set  $h(n_t) = 0$. In other words, the function 
\[
h(|S|) = \begin{cases}
0 & \mbox{ when } |S| \geq 30n/x\\ 
4x \cdot (k/n) & \mbox{ otherwise}
\end{cases}
\]
satisfies \eqref{eq:g5}. This gives $ \int_{t=0}^{n-2} h(n-t) \cdot (n-t)^{-1/p}dt =  O(x^{1/p} \cdot k \cdot n^{-1/p})$, and by Theorem \ref{thm:g3},
\[
    \disc(A) = \beta +\min \left( O(\sqrt{p\cdot k}), O(\sqrt{n\log(2n)})\right) = O(\beta+ \sqrt{k}) \; \text{ for }p = 2.\]
So if we only consider a set rows with similar initial $\ell_1$-norms (within constant factor of each other) at a time, the discrepancy of such a set is bounded by $O(\beta + \sqrt{k})$. This suggests using Theorem \ref{thm:subadd} to bound the discrepancy of union of this set. However, since the initial $\ell_1$-norms of rows can range anywhere from $1$ to $n$, there can be as many as $\log(n)$ sets and corresponding potential functions. Naively applying Theorem \ref{thm:subadd} will give a $\sqrt{\log(n)}$ factor increase in discrepancy, rather than a constant. 

Before discussing how to fix this issue, we formally describe the partition of rows into classes:
 
 \noindent \textbf{Partitioning rows according to $\ell_1$-norm}: First, extend $A$ such that for each original row $a_i$, there are two rows $a_i$ and $-a_i$ in $A$. Since our goal is to prove discrepancy $O(\sqrt{k})$, we can ignore all rows will $\ell_1$-norm less than $\sqrt{k}$. Then $m\leq n\sqrt{k}$ because the number of rows with $\ell_1$-norm greater than $\sqrt{k}$ is at most $2nk/\sqrt{k} = 2n\sqrt{k}$. Let $N = \ceil{\log_2{n/k}}$ and $\script{\script{Q}} = \{0\} \cup [N]$. Partition the rows of $A$ into based on their initial $\ell_1$-norm into $|\script{Q}|=N+1$ classes: 
\begin{itemize}
    \item $\script{A}_0 = \{i \in \script{I}:  \sqrt{k} \leq \norm{a_i}_1 < 2k \} $. 
    \item For each $i\in [N]$, let $\script{A}_i = \{i \in \script{I}:   2^{i} k \leq \norm{a_i}_1 < 2^{i+1}k \} $.
\end{itemize}
The sum of $\ell_1$-norms of rows in $A$ is at most $2nk$, therefore for any $i$, $2^{i}k \vert \script{A}_i \vert \leq 2nk$ and $\vert \script{A}_i\vert \leq 2^{1-i}n$.

To keep the increase in discrepancy a constant factor rather than $\sqrt{\log(n)}$, we carefully distribute the following two resources among these classes at any step:
\begin{itemize}
    \item The number of rows with small slacks that $v_t$ is orthogonal to from each class. Since the total number of rows $v_t$ can move orthogonal to at time $t$ is at most $n_t$, we need to distribute $n_t$ among the classes. See Lemma \ref{lem:sdim} for more details.
    \item The bound on $ \sum_{i\in \script{I}_t \cap \script{A}_q}  \left(2\lambda \dt{e_{t,i}}{v_t} - \dt{{a}_i}{v_t}\right)^2 s_i(t)^{-p-1}$ in terms of $\sum_{i\in \script{I}_t \cap \script{A}_q} h(n_t)s_i(t)^{-p-1}$ for each class $q$.
\end{itemize}
Rows with larger initial $\ell_1$ norm get more of each resource.

We create $N+1$ potential functions $\{\Phi_i(t)\}_{i=0}^N$, one associated with each row partition. The potential functions use the same $p, b_0$ parameters, and $\lambda = cb_0$ with $c = 1/42$, but have different rate of change of barrier functions $d_q(\cdot)$, based on $q$. We will run Algorithm \ref{alg:p2} on each partition separately but use the same $x_t$ and $v_t$ at each step. In this case, we can select parameters to ensure that each potential function is decreasing in expectation (see Lemma \ref{lem:phi_dec}). However, there might not exist a vector $v_t$ that ensure that moving in $v_t$ direction decreases all the potential functions simultaneously. 

To deal with this, we use a weighted combination of $\Phi_q$ as the potential function:
Let \begin{equation}
    \Phi(t) = \frac{1}{k} \cdot  \Phi_0(t)+\sum_{q \geq 1} 2^{2q} \cdot \Phi_q(t). \label{eq:sum_potential}
\end{equation}
For reasoning behind the form of $\Phi(t)$, see Section \ref{sec:step_t}.
\subsubsection{A suitable subspace}
To identify the constrained subspace for the PotentialWalk (Algorithm~\ref{alg:p1}), we use the following definitions.
The set of \emph{Active} rows is defined as 
\[
\script{I}_t = \{i \in \script{I}: \sum_{j \in \script{V}_t} \vert a_i(j) \vert \leq 12 k\}.    
\]
For each class $q$, let $h_q: \R^+ \rightarrow \R$ be a non-increasing function such that for every subset $S \subseteq n$, at most $n_t/16$ rows $i$ from class $\script{A}_q$ violate the condition
\begin{equation}
     \sum_{j \in S} |a_i(j)| \leq |S| \cdot h_q(|S|) \label{eq:hq}
\end{equation}
While following the general framework from Section \ref{sec:gen}, we make three crucial changes:
\begin{itemize}
    \item Move orthogonal to rows with \emph{large deviation}. At step $t$, the $\ell_1$ norm of row $a_i$ will be close to $(n_t / n)\cdot \norm{a_i}_1$ for most rows.  Let $a_{i,t}$ denote a vector in $\R^n$ with $j$-th entry $\mathbf{1}_{j\in \script{V}_t} a_i(j)$, i.e., $a_{i,t}$ is row $a_i$ restricted to the alive coordinates at time $t$.
Then the set of \emph{large deviation} rows consists of rows that deviate significantly from this expected value 
\begin{equation}
    \script{B}_t = \{i \in \script{I}: \vert \norm{a_{i,t}}_1- \norm{a_i}_1 \cdot(n_t/n) \vert \geq 4\beta \}. \label{eq:badrows}
\end{equation}
For any $t\in [T]$, \eqref{eq:s1} implies that $\dim(\script{B}_t) \leq \floor{n_t/16}$.
\item Ignore \emph{Dead} rows. As soon as the $\ell_1$-norm of some row becomes less than $8\beta$, we drop it from the potential function. The set of \emph{dead} rows at step $t$ is defined as
\begin{equation}
    \script{D}_t = \{i \in \script{I}:  \norm{a_{i,t}}_1 \leq 8\beta \}. \label{eq:deadrows}
\end{equation}
For a dead row, rather than keeping track of its discrepancy using a slack function, we will uniformly bound the the additional discrepancy gained by a row after it becomes dead.
\item \emph{Block} rows based on their initial size. For $q\in \script{Q}$, let $\mathcal{C}_t^q$ be the subset of $\mathcal{A}_q \cap \script{I}_t$ corresponding to the $\floor{2^{i-8}n_t^2/n}$ smallest values of $\{s_i(t): i\in\script{A}_q \cap \script{I}_t\}$, and let $\mathcal{J}_t^q = \mathcal{A}_i \backslash \{\mathcal{C}_t^q \cup \script{D}_t\}$.
\end{itemize}
We are ready to state the algorithm for selecting $v_t$.

\allowdisplaybreaks
\begin{algorithm}[H]
\SetAlgoNoLine
\DontPrintSemicolon
\setstretch{1.35}
\caption{Algorithm for Selecting $v_t$}\label{alg:p3}
Let $h_q(n_t) = 2^{q+2}/n$ and $w_q(t) = 2^{5-\frac{q}{4}} \left(\frac{n}{n_t}\right)^{1/4}$\;
\For{$t=1,\ldots, T$}{
Let $\script{W}_t = \{{w} \in \R^n: {w}(i) = 0, \; \forall i \in \script{V}_t \}$ \tcp*{\small restrict to alive variables}
Let $\script{U}_t = \{{w} \in \script{W}_t: \dt{{w}}{2cb_0 {e}_{t,i} - {a}_{i}} = 0, \forall i \in \mathcal{C}_t \text{ and } \dt{{w}}{x_t} = 0 \}$\; \tcp*{\small restrict to large slack rows}
Let $\script{Y}_t = \{{w} \in \mathcal{W}_t: \dt{{w}}{{a}_{i}} = 0, \forall i \in \script{I} \backslash  \script{I}_t\}$\tcp*{\small move orthogonal to large norm rows}
Let $\script{G}_t =\{w \in \script{W}_t: \dt{a_i}{w} = 0, \; \forall i\in \script{B}_t\}$ \;\tcp*{\small move orthogonal to large deviation rows}
Let $\script{Z}_t = \script{U}_t \cap\script{Y}_t \cap \script{G}_t$ and let $W =\{{w}_1, {w}_2, \ldots, {w}_k \}$ be an orthonormal basis for $\script{Z}_t$\;
Let $v_t\in W$ such that for all $q \in \script{Q}$,
    \begin{equation}
         \sum_{i\in \script{J}^q_t}  \dt{2cb_0 {e}_{t,i} - {a}_{i}}{v_t}^2 s_i(t)^{-p-1} \leq 8  w_q(t) \cdot h_q(n_t)\sum_{i\in\script{J}^q_t} s_i(t)^{-p-1}.\label{eq:vt}
    \end{equation}}
\end{algorithm}

We are now ready for the formal proof. We divide it into several subparts. The first part bounds the number of active classes at time $t$, as a slowly increasing function of $t$. Then we derive the specific weights used in the potential function that combines potential functions for each class of rows (based on initial norm). After that we show that there is a large subspace of vectors which all satisfy the desired goal of not increasing the potential value while satisfying all the constraints about inactive rows and variables. Using this we bound the final discrepancy.

\subsubsection{Number of active classes}\label{sec:step_t}
\begin{lem}\label{lem:snum}
At step $t$, the following two conditions hold: (i) The number of classes $q$ for which $\script{A}_q \cap \{\script{I}_t\backslash\{\script{B}_t \cup \script{D}_t\}\} \neq \emptyset$ is at most $\log(16n/n_t)$ and (ii)
     $h_q(t) = 2^{q+2} k/ n$ satisfies~\eqref{eq:hq} for all $q \in \script{Q}$.
\end{lem}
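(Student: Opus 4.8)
The plan is to prove both statements by directly unwinding the definitions of the active set $\script{I}_t$, the large-deviation set $\script{B}_t$ from \eqref{eq:badrows}, and the dead set $\script{D}_t$ from \eqref{eq:deadrows}, and combining them with the pseudo-randomness hypothesis \eqref{eq:s1} of Theorem~\ref{thm:extended} together with the normalization $\beta \le k$. No new idea is needed beyond careful bookkeeping of the thresholds $12k$, $8\beta$ and $4\beta$ appearing in those definitions.

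For part (i), fix a class $\script{A}_q$ with $q\ge 1$ and suppose some row $i\in\script{A}_q$ lies in $\script{I}_t\setminus(\script{B}_t\cup\script{D}_t)$. Activeness gives $\|a_{i,t}\|_1=\sum_{j\in\script{V}_t}|a_i(j)|\le 12k$, and, since $i\notin\script{B}_t$, \eqref{eq:badrows} gives $\|a_i\|_1\cdot(n_t/n)<\|a_{i,t}\|_1+4\beta\le 12k+4\beta\le 16k$, using $\beta\le k$. Because $i\in\script{A}_q$ means $\|a_i\|_1\ge 2^q k$, this forces $2^q\cdot(n_t/n)\le 16$, i.e.\ $q\le\log_2(16n/n_t)$. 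Hence the only classes that can contribute a row to the potential at step $t$ are $q=0$ and those $q\ge 1$ with $2^q\le 16n/n_t$, which is at most $\log_2(16n/n_t)$ classes (the $q=0$ term and rounding contribute the $O(1)$ slack that the stated constant $16$ absorbs).

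For part (ii), apply \eqref{eq:s1} with $c=4$ and $S=\script{V}_t$: since $a_i$ is $0/1$, $\sum_{j\in S}a_i(j)=\|a_{i,t}\|_1$ and $\|a_i\|_1\cdot(|S|/n)=\|a_i\|_1\cdot(n_t/n)$, so the hypothesis says the number of rows of $A$ --- in particular of $\script{A}_q$ --- with $|\|a_{i,t}\|_1-\|a_i\|_1(n_t/n)|\ge 4\beta$, i.e.\ the rows in $\script{B}_t$, is at most $n_t/16$ (this is the bound $\dim(\script{B}_t)\le\lfloor n_t/16\rfloor$ already noted after \eqref{eq:badrows}). For a row $i\in\script{A}_q$ that is neither in $\script{B}_t$ nor dead we have $\|a_{i,t}\|_1>8\beta$ and $\|a_{i,t}\|_1<\|a_i\|_1(n_t/n)+4\beta$, hence $\|a_i\|_1(n_t/n)>4\beta$, and therefore
\[
\sum_{j\in\script{V}_t}|a_i(j)|=\|a_{i,t}\|_1<\|a_i\|_1\cdot\frac{n_t}{n}+4\beta<2\|a_i\|_1\cdot\frac{n_t}{n}<2^{q+2}k\cdot\frac{n_t}{n}=n_t\cdot h_q(n_t),
\]
using $\|a_i\|_1<2^{q+1}k$. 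So at the subset $S=\script{V}_t$ --- which is the only place \eqref{eq:hq} is invoked in Algorithm~\ref{alg:p3} and its analysis --- at most $n_t/16$ rows of $\script{A}_q$ (those in $\script{B}_t$) violate $\sum_{j\in S}|a_i(j)|\le|S|\cdot h_q(|S|)$, and $h_q\equiv 2^{q+2}k/n$ is trivially non-increasing, proving (ii).

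The only points requiring care are numerical: $\beta\le k$ is exactly what turns $12k+4\beta$ into $\le 16k$ in (i), and the $8\beta$ deadness threshold is what guarantees that for surviving rows the ``expected'' restricted norm $\|a_i\|_1(n_t/n)$ dominates the $4\beta$ deviation --- without this the factor-$2$ slack in $h_q$ over the trivial bound $\|a_i\|_1<2^{q+1}k$ is lost. The one genuinely delicate bookkeeping issue is that a dead row can a priori violate the $h_q$-bound; I would handle this by noting that dead rows are dropped from the potential (their future discrepancy is charged separately, cf.\ \eqref{eq:deadrows}), so they never count among the ``violating'' rows of \eqref{eq:hq}.
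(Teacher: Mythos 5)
Your proof is correct and essentially identical to the paper's: both parts combine the activeness threshold $12k$, the deviation threshold $4\beta$ from \eqref{eq:badrows}, the deadness threshold $8\beta$ from \eqref{eq:deadrows}, and $\beta\le k$ to get $\norm{a_i}_1(n_t/n)\le 16k$ (hence $2^q\le 16n/n_t$) and $\norm{a_{i,t}}_1\le 2\norm{a_i}_1(n_t/n)\le 2^{q+2}k\,(n_t/n)$, with \eqref{eq:s1} at $c=4$ bounding the exceptional rows by $n_t/16$. Your explicit remark that dead rows are excused from the violation count (since they are dropped from the potential) is a point the paper glosses over, but it changes nothing about the argument.
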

\begin{proof}
Let $\norm{a_{i,t}}_1 = \sum_{j \in \script{V}_t} |a_i(j)|$, i.e., it is the $\ell_1$-norm of row $i$ restricted to $\script{V}_t$.
At step $t$, if $i \in \script{I} \backslash \{\script{B}_t \cup \script{D}_t\}$, then by~\eqref{eq:badrows} and~\eqref{eq:deadrows},  we have  $ 8\beta \leq \norm{a_{i,t}}_1$ and 
\begin{align}
   (n_t/n) \cdot \norm{a_i}_1 - 4\beta &\leq \norm{a_{i,t}}_1 \leq  (n_t/n) \cdot \norm{a_i}_1 +4\beta.\label{eq:s2}
\end{align}
This gives 
    $4\beta  \leq  (n_t/n) \cdot \norm{a_i}_1$ and 
 $
    \norm{a_{i,t}}_1 \leq (2n_t/n) \cdot \norm{a_i}_1 \label{eq:eq2}$.
     
    Moreover, if $i\in \script{A}_q$ then $\norm{a_i}_1 \leq 2^{q+1}k$ and we get
  $\norm{a_{i,t}}_1\leq  (n_t/n) \cdot 2^{q+2}k.$
Therefore $h_q(t) = 2^{q+2}/k$ satisfies~\eqref{eq:hq}.

Furthermore, if $i \in \script{I}_t$, i.e., $\norm{a_{i,t}}_1 \leq 12 k$, by~\eqref{eq:s2} we have $(n_t/n)\cdot \norm{a_i}_1 -4\beta \leq \norm{a_{i,t}}_1 \leq 12k$. As $\beta < k$,
this gives 
 $(n_t/n)\cdot \norm{a_i}_1 \leq 4 \beta + 12 k \leq 16 k.$
So if $i\in\script{I}_t\backslash\{\script{B}_t \cup \script{D}_t\}$, then
\[   4\beta\cdot (n/n_t)\leq \norm{a_i}_1 \leq 16 k \cdot (n/n_t).\]
Note that this condition is dependent only on the initial $\ell_1$-norm of $a_i$. Since $2^q k\leq\norm{a_i}_1 < 2^{q+1}k$ for any $i\in \script{A}_q$, a necessary condition for $\script{A}_q \cap \{\script{I}_t\backslash\{\script{B}_t \cup \script{D}_t\}\} \neq \emptyset$ is
\begin{equation}
   (2\beta/k)\cdot (n/n_t)   \leq  2^{q} \leq 16 \cdot (n/n_t).\label{eq:s4}
\end{equation}
Therefore $q \leq \log(16n/n_t)$.
\end{proof}
Lemma \ref{lem:snum} implies that at any step $t$, the set of active rows is from the first $\log_2(16n/n_t)$ classes of rows. It also helps us define two important parameters associated with a row class $q$. At step $t$, consider a $q\in \script{Q}$ with $\script{A}_q \cap \{\script{I}_t \backslash \{\script{B}_t \cup \script{D}_t\}\}\neq \emptyset$.
\begin{itemize}
    \item Since $n - \delta^2 t-1 < n_t\leq  16\cdot 2^{-q} n$ By~\eqref{eq:s4}.
    For $q \geq 1$, let
\[   t_q := \max\left\{0,\;n\delta^{-2}\left(1-16\cdot 2^{-q} - 1/n\right)\right\}\]
Similarly, let \[
     t_0 := n\delta^{-2} \left(1-16k^{-1/2} - 1/n\right).\]
Before step $t_q$, for any $i \in\script{A}_q$, $\dt{a_i}{v_t} = 0$. Because $s_i(t)$ is a constant till $t_q$, we set $d_q(t) = 0$ for all $t < t_q$.
    \item  On the other hand, $q$ must satisfy $2^q \leq \frac{16n}{n_t}$. Let
\[
    q_t := \arg \max_{i\geq 0} \left\{2^i \leq 16\cdot(n/n_t)\right\}. 
\]
\end{itemize}

\subsubsection{The weighted potential function}
Now we can justify our choice of the potential function. 
If all the potential functions actually decreased at every step of the algorithm, and we could select a $v_t$ that ensured $\max_{i \in \script{A}_q} (\dt{a_{i}}{v_t})^2 \leq k/n$ for all $q$, then using $h_q(t) = k/n$ for all $q \in \script{Q}$, Theorem \ref{thm:g3} gives us 
\begin{equation*}
    \sum_{t=t_q}^n d_q(t) \simeq O(\sqrt{p(2^{1-q}n)^{1/p}})\cdot \sqrt{\int_{t=t_q}^{n-2}(n-t)^{-1/p}dt\cdot (k/n)} = O(2^{\frac{q}{p} - q(1-\frac{1}{p})}\sqrt{k}) = O(\sqrt{k}),
\end{equation*}
for $p=2$.
However, since the potential functions decrease simultaneously only in expectation, there might not exist a $v_t$ such that each potential function decreases when we move along $v_t$. Instead we take a weighted linear combination of the potential functions $\Phi(t)$~\eqref{eq:sum_potential}, and ensure that $\Phi(t)$ is decreasing at each step $t$. Strictly speaking, $\Phi(t)$ is not decreasing over time but actually increasing as row classes with higher $q$ get added in later steps. When we say $\Phi(t)$ is decreasing, we mean that $\Phi(t+1)$ restricted to rows in $\script{I}_t$ is less than $\Phi(t)$ restricted to rows in $\script{I}_t$, i.e., $\sum_{i\in \script{I}_t}1/s_i(t+1)^{-p}\leq \sum_{i\in \script{I}_t}1/s_i(t)^{-p}$.

What should the weights be? First, we need to normalize $\Phi_q(t)$ by $|\script{A}_q(t)|$. However this is not enough as we still want to use $\Phi(t)$ to bound $1/s_i(t)$ for each active row. However, $\Phi(t)$ can be much larger than the $\Phi_q(t)$. 

If we use the sum of normalized potential functions as the potential, consider some $i \in \script{A}_{q}$. Condition~\eqref{eq:s4} implies that at step $t$, there are at most $\log_2(16n/n_t)$ active classes and therefore $\max_{i\in \script{A}_q} (s_i(t))^{-p}\propto \log_2(16n/n_t) \cdot \Phi_q(0)$. This gives 
\begin{align*}
    \sum_{t=t_q}^n d_q(t) &\simeq O\left(\sqrt{p(2^{1-q}n)^{1/p}} \right)\cdot \sqrt{\int_{t=t_q}^{n-2}\left( \log\frac{n}{n_t}\cdot\frac{1}{(n-t)}\right)^{1/p} \cdot (k/n)} \\
    &= O(q2^{\frac{q}{p} - q(1-\frac{1}{p})}\sqrt{k}) = O(q\sqrt{k}),
\end{align*}
for $p = 2$.
Intuitively, a row with a large initial size may acquire high discrepancy because it gets added to the potential function later, when $\Phi(t)$ contains the potentials corresponding to more row classes $q$, and therefore the value of $\Phi(t)$ is actually higher. This suggests that the potential $\Phi_q(t)$ corresponding to a large $q$ should have a higher weight to balance the effect of a large value of $\Phi(t)$, and hence our choice of $\Phi(t)$:
\begin{equation*}
    \Phi(t) =  \frac{1}{k} \cdot  \Phi_0(t)+\sum_{q = 1}^{q_t} 2^{2q} \cdot \Phi_q(t).  
\end{equation*}

\subsubsection{Bounding the discrepancy}
The next lemma gives a bound on $\dim(\script{Z}_t)$ analogous to \ref{lem:g2}.
\begin{lem}
\label{lem:sdim}
For any $t \in [T]$, it holds that $\dim(\script{Z}_t)\geq \ceil{n_t/2}$.
\end{lem}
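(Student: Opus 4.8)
The plan is to bound $\dim(\script{Z}_t)$ from below by controlling the codimension, inside the $n_t$-dimensional ambient space $\script{W}_t$ of vectors supported on the alive coordinates, of each of the three subspaces whose intersection is $\script{Z}_t$, and then invoke the elementary fact that codimensions of subspaces of a common space are subadditive under intersection, so that $\mathrm{codim}_{\script{W}_t}(\script{Z}_t)\le \mathrm{codim}(\script{U}_t)+\mathrm{codim}(\script{Y}_t)+\mathrm{codim}(\script{G}_t)$. This is exactly the structure of the proof of Lemma~\ref{lem:g2}; the only genuinely new ingredient is the bound on $\vert\script{C}_t\vert$, which now involves a sum over the (at most $q_t+1$) active row classes.

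First I would dispose of the two easy terms. The subspace $\script{Y}_t$ is cut out by one linear constraint for each row $i$ with $\sum_{j\in\script{V}_t}\vert a_i(j)\vert > 12k$; since each column of the (row-doubled, trimmed) matrix $A$ has $\ell_1$-norm at most $2k$, we get $\sum_i \sum_{j\in\script{V}_t}\vert a_i(j)\vert = \sum_{j\in\script{V}_t}\Vert A^j\Vert_1 \le 2kn_t$, so fewer than $n_t/6$ rows can exceed $12k$ and $\mathrm{codim}(\script{Y}_t)\le n_t/6$. The subspace $\script{G}_t$ is cut out by one constraint per row in $\script{B}_t$, and we have already recorded (immediately after \eqref{eq:badrows}) that the pseudo-randomness hypothesis \eqref{eq:s1}, applied with $S=\script{V}_t$ and $c=4$, gives $\vert\script{B}_t\vert\le\floor{n_t/16}$, hence $\mathrm{codim}(\script{G}_t)\le n_t/16$.

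The substantive step is $\mathrm{codim}(\script{U}_t)$: here $\script{U}_t$ is cut out by the single constraint $\dt{w}{x_t}=0$ together with one constraint per index in $\script{C}_t=\bigcup_{q\in\script{Q}}\script{C}^q_t$, so $\mathrm{codim}(\script{U}_t)\le 1+\vert\script{C}_t\vert = 1+\sum_q\floor{2^{q-8}n_t^2/n}$. The point of the block sizes $\floor{2^{q-8}n_t^2/n}$ is that, by Lemma~\ref{lem:snum} and the definition of $q_t$, only classes with $q\le q_t$ are nonempty among active rows and $2^{q_t}\le 16n/n_t$, so the geometric sum is controlled: $\sum_{q=0}^{q_t}2^{q-8}n_t^2/n < 2^{q_t-7}n_t^2/n \le (n/(8n_t))(n_t^2/n)=n_t/8$, giving $\mathrm{codim}(\script{U}_t)\le n_t/8+1$. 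Adding the three codimensions, $\mathrm{codim}_{\script{W}_t}(\script{Z}_t)\le (1/8+1/6+1/16)n_t+1=(17/48)n_t+1$, so $\dim(\script{Z}_t)\ge (31/48)n_t-1\ge\ceil{n_t/2}$ once $n_t$ exceeds a small absolute constant; for the few remaining small values of $n_t$ one checks directly that all the floor terms vanish and the counts defining $\script{B}_t$ and the large-norm rows are empty, so $\script{Z}_t$ is cut out by at most one constraint and the bound is immediate.

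The main obstacle I anticipate is pinning down the constant in the bound on $\vert\script{C}_t\vert$: naively one might only get $\vert\script{C}_t\vert=O(n_t\log(n/n_t))$ from summing over $O(\log(n/n_t))$ active classes, which would be useless; the geometric decay in the block sizes, combined with the sharp cutoff $2^{q_t}\le 16n/n_t$ on which classes are active at time $t$, is exactly what collapses this to a small fraction of $n_t$. Once that estimate is in hand, the rest is a routine dimension count entirely parallel to Lemma~\ref{lem:g2}.
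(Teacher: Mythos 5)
Your proof is correct and follows essentially the same route as the paper: bound the number of blocked rows by summing the geometrically growing class sizes $\floor{2^{q-8}n_t^2/n}$ over the active classes $q\le q_t$ and collapsing via $2^{q_t}\le 16n/n_t$ to get at most $n_t/8$, bound the large-norm rows by $n_t/6$ via column-norm counting and the large-deviation rows by $n_t/16$ via \eqref{eq:s1}, then subtract codimensions. Your explicit handling of the final arithmetic (including small $n_t$) is if anything slightly more careful than the paper's one-line conclusion.
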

\begin{proof}
At time $t$, $\script{I}_t$ only consists of rows from class $\script{A}_q$ with $q\leq q_t$. So,
\begin{align*}
     \dim(\script{C}_t) &\leq \sum_{i=0}^{q_t} \dim(\script{C}_t^i)\leq\sum_{i=0}^{q_t} \frac{2^{i-8}n_t^2}{n} 
    \leq \frac{n_t^2}{n}\cdot 2^{q_t-7}\leq \frac{2^{-7}n_t^2}{n} \cdot \frac{16n}{n_t} \leq \frac{n_t}{8}.
\end{align*}
Since the number of rows in $\script{I}_t$ is at most $\floor{n_t/6}$, we have $\dim(\script{Y}_t) \geq n_t - \floor{n_t/6}$. 

By~\eqref{eq:s1}, $\dim(\script{B}_t) \leq \floor{n_t/16}$ and $\dim(\script{G}_t) \geq n_t - \floor{n_t/16}$.
Putting it together, \begin{equation*}
    \dim(\script{Z}_t) \geq \dim(\script{Y}_t) - \dim(\script{B}_t) - \dim(\script{C}_t) - 1 \geq \ceil{n_t/2}.\qedhere
\end{equation*}
\end{proof}

The next lemma is analogous to Lemma \ref{lem:g4}.
\begin{lem}
\label{lem:salph}
For all $t \in [T]$, there exists $v_t \in \script{Z}_t$ such that $\forall q \in \script{Q}$,
\begin{align}
    \sum_{i\in\script{J}^q_t} \dt{2cb_0 {e}_{t,i} - {a}_{i}}{v_t}^2 s_i(t)^{-p-1}&\leq 8w_q(t) \cdot h_q(n_t) \sum_{i\in\script{J}^q_t} s_i(t)^{-p-1} \; .\label{eq:kvt}
\end{align}
\end{lem}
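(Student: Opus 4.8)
The idea is to run the averaging argument of Lemma~\ref{lem:g4} on each row class $q$ separately, and then merge the per-class guarantees into a single update direction $v_t$ by a Markov-type union bound over the (few) active classes. The weights $w_q(t)$ are designed precisely so that this union bound goes through.

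Concretely, I would fix an orthonormal basis $W=\{w_1,\dots,w_k\}$ of $\script{Z}_t$, with $k=\dim(\script{Z}_t)\ge\ceil{n_t/2}$ by Lemma~\ref{lem:sdim}, and for each $q$ set $X_{j,q}:=\sum_{i\in\script{J}^q_t}\dt{2cb_0 e_{t,i}-a_i}{w_j}^2 s_i(t)^{-p-1}$. Repeating the computation of Lemma~\ref{lem:g4} -- expand $(a+b)^2\le 2a^2+2b^2$, use $|2cb_0 e_{t,i}(j)|=2cb_0 a_i(j)^2|x_t(j)|\le|a_i(j)|$ on the small-entry rows (the large-entry part being handled as in the proof of Theorem~\ref{thm:g3}), and sum over the basis via Bessel's inequality, so $\sum_{j\le k}\dt{a_i}{w_j}^2\le\sum_{j'\in\script{V}_t}a_i(j')^2=\norm{a_{i,t}}_1$ since $a_i\in\{0,1\}$ -- yields $\sum_{j\le k}X_{j,q}\le 4 n_t h_q(n_t)\sum_{i\in\script{J}^q_t}s_i(t)^{-p-1}$, where the inequality $\norm{a_{i,t}}_1\le n_t h_q(n_t)$ for $i\in\script{J}^q_t$ is exactly Lemma~\ref{lem:snum}(ii) (valid because $\script{J}^q_t$ excludes the dead rows $\script{D}_t$, while $\script{Y}_t$ and $\script{G}_t$ annihilate the large-norm rows and the $\le n_t/16$ large-deviation rows $\script{B}_t$ of each class). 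Dividing by $k\ge n_t/2$ gives, for every $q$, $\tfrac1k\sum_{j\le k}X_{j,q}\le 8 h_q(n_t)\sum_{i\in\script{J}^q_t}s_i(t)^{-p-1}=:B_q$.

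I would then apply Markov's inequality class by class: for each $q$, at most $k/w_q(t)$ basis indices $j$ can have $X_{j,q}>w_q(t)B_q$. By Lemma~\ref{lem:snum}(i) only classes with $q\le q_t$ (where $2^{q_t}\le 16n/n_t$) have $\script{J}^q_t\ne\emptyset$, so the number of ``bad'' indices summed over all relevant classes is at most $k\sum_{q=0}^{q_t}1/w_q(t)=k\,2^{-5}(n_t/n)^{1/4}\sum_{q=0}^{q_t}2^{q/4}$. Since the geometric sum is $O(2^{q_t/4})$ and $2^{q_t/4}\le(16n/n_t)^{1/4}$, the product is a constant strictly less than $1$ times $k$, so some index $j^\star$ is good for every relevant class. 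Setting $v_t:=w_{j^\star}\in\script{Z}_t$ then gives $X_{j^\star,q}\le w_q(t)B_q=8 w_q(t)h_q(n_t)\sum_{i\in\script{J}^q_t}s_i(t)^{-p-1}$ for all $q$ (trivially for the inactive classes), which is \eqref{eq:kvt}.

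The main obstacle I expect is the two bookkeeping steps feeding the union bound. First, one must check that the rows on which $h_q$ fails to control $\norm{a_{i,t}}_1$ are exactly those already excluded from $\script{J}^q_t$ -- the dead rows, the large-norm rows killed by $\script{Y}_t$, and the large-deviation rows $\script{B}_t$ killed by $\script{G}_t$ -- which is where Lemma~\ref{lem:snum} and the pseudo-randomness hypothesis \eqref{eq:s1} are used. Second, one must verify the geometric-sum bound $\sum_{q\le q_t}1/w_q(t)<1$: this is the whole reason $w_q(t)$ was taken to decay like $2^{-q/4}$ while the number of active classes $q_t$ grows only like $\log(16n/n_t)$, so that the $(n/n_t)^{1/4}$ from the tail of the geometric series cancels the $(n_t/n)^{1/4}$ inside $1/w_q(t)$. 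Aside from these checks, the proof is a direct transcription of Lemma~\ref{lem:g4}, now applied in parallel over the classes and glued by averaging.
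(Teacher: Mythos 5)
Your proposal is correct and follows essentially the same route as the paper: apply the averaging argument of Lemma~\ref{lem:g4} (with the dimension bound of Lemma~\ref{lem:sdim}) separately to each class, then use Markov's inequality over the orthonormal basis with the weights $w_q(t)$, relying on Lemma~\ref{lem:snum} to bound the number of active classes so that $\sum_{q\le q_t} w_q(t)^{-1}<1$ and a single good basis vector exists. The only difference is that you spell out the bad-index counting and the geometric-sum check more explicitly than the paper does, which is a matter of exposition rather than substance.
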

\begin{proof}
By Lemmas \ref{lem:g4} and \ref{lem:sdim}, for each $q \in \script{Q}$, there exists $v_q \in \script{Z}_t$ such that \begin{equation*}
  \sum_{i\in \script{J}^q_t}  \dt{2cb_0{e}_{t,i} - {a}_{i}}{v_q}^2s_i(t)^{-p-1}\leq \frac{n_t}{\dim(\script{Z}_t)} \cdot\sum_{i\in \script{J}^q_t} 4h_q(n_t)s_i(t)^{-p-1} \leq \sum_{i\in \script{J}^q_t} 8h_q(n_t)s_i(t)^{-p-1} .
\end{equation*}
 However, this does not imply that there exists a $v_t$ that satisfies these bounds for all classes simultaneously. Instead, we use Markov's inequality to assign a weight $w_q(t)$ to each class $q$ at step $t$ such that $\sum_{q = 0}^{q_t} w_q^{-1}(t) < 1$, and therefore there exists a vector $v_t \in \script{Z}_t$ such that
 \begin{equation}
     \sum_{i\in \script{I}_t \cap \script{A}_q}  \left(2\lambda \dt{e_{t,i}}{v_t} - \dt{{a}_i}{v_t}\right)^2 s_i(t)^{-p-1} \leq w_q(t) \cdot \sum_{i\in \script{I}_t \cap \script{A}_q} 8h(n_t)s_i(t)^{-p-1} \label{eq:wts}
 \end{equation} and 
 for each class.  Let 
\begin{equation*}
    \script{\script{Q}}_t = \{q \in \script{\script{Q}}: \script{A}_q \cap \{\script{I}_t\backslash\{\script{B}_t \cup \script{D}_t\}\} \neq \emptyset\}.
\end{equation*}
If some row class $q$ is not in $\script{\script{Q}}_t$, then any row $i \in \script{A}_q$ is either dead or frozen or bad. If it is dead, we drop it from the potential and it does not affect~\eqref{eq:wts}. If it is frozen or bad, $\dt{2cb_0e_{i,t} - a_i}{v_t} = 0$ and the condition is trivially satisfied.
So we only need to consider $q \in\script{\script{Q}}_t$. The weight
$w_q = 2^{5-q/4}\left(n/n_t\right)^{1/4}$ suffices as $\sum_{q = 1}^{q_t}2^{q/4-5}\left(n/n_t\right)^{1/4} \leq 1/2$.
\end{proof}

Note that for any row $i \in \script{A}_q$, at $t \leq t_q$, $\dt{2cb_0e_{i,t} - a_i}{v_t} = 0$. So, we can set $d_t^q = 0$ for rows in class $q$. So, by Lemma \ref{thm:g1} and equation~\eqref{eq:wts}, 
\begin{equation}
      d^q(t) = \begin{cases}
      0 & \text{if } t \leq t_q \\
      4(p+1) \cdot w_q(t) \cdot h_q(n_t) \cdot \max_{i\in \mathcal{J}^q_t} s_i(t)^{-1} & \text{otherwise},
      \end{cases}
   \label{eq:sc1}
\end{equation}
implies that there exists a $v_t \in \script{Z}_t$ such that for all $q \in \script{Q}$,\begin{equation}
    \E[\Phi_q(t)] \leq \Phi_q(t-1) + \frac{1}{Tn b_0^p}. \label{eq:s11}
\end{equation}
The next lemma helps us bound the rate of change of $b_q(t)$, which eventually gives a bound on $b_q(T)$ in Theorem \ref{thm:extended}.
\begin{lem}
For any $t \in \{0, \ldots, T\}$, if $\Phi(t) \leq  8n\left(\frac{2}{b_0}\right)^p (\frac{16n}{n_t})$, then 
\begin{equation}
   \max_{j\in\script{J}_t^q}s_j(t)^{-1} \leq \begin{cases}
    k^{1/p} \cdot \frac{2^{1+15/p}}{b_0}\left(\frac{n}{n_t}\right)^{3/p} & \text{if } q=0 \\
    \frac{2^{1+(15-3q)/p}}{b_0}\left(\frac{n}{n_t}\right)^{3/p} & \text{if } q \geq 1. \label{eq:sss}
    \end{cases}
\end{equation}
\end{lem}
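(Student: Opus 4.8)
The plan is to mirror the single-potential estimate \eqref{eq:k2}, now bookkeeping the weights that appear in the combined potential \eqref{eq:sum_potential}. Fix a class $q$. If $\script{J}_t^q=\emptyset$ the claimed inequality is vacuous, so assume $\script{J}_t^q\neq\emptyset$; in particular $|\script{A}_q\cap\script{I}_t|>\floor{2^{q-8}n_t^2/n}$. By the definition of $\script{C}_t^q$ as the $\floor{2^{q-8}n_t^2/n}$ smallest-slack rows of $\script{A}_q\cap\script{I}_t$, every $i\in\script{J}_t^q$ has at least $\floor{2^{q-8}n_t^2/n}+1\geq 2^{q-8}n_t^2/n$ rows $j\in\script{A}_q\cap\script{I}_t$ with $s_j(t)\leq s_i(t)$, and each such $j$ contributes at least $s_i(t)^{-p}$ to $\Phi_q(t)$. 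Hence $s_i(t)^{-p}\leq \big(n/(2^{q-8}n_t^2)\big)\,\Phi_q(t)$.

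Next I would bound $\Phi_q(t)$ by the whole weighted potential. From \eqref{eq:sum_potential}, for $q\geq 1$ we have $2^{2q}\Phi_q(t)\leq \Phi(t)$, and for $q=0$ we have $\Phi_0(t)\leq k\,\Phi(t)$. Substituting the hypothesis $\Phi(t)\leq 8n(2/b_0)^p(16n/n_t)=2^{7}(n^2/n_t)(2/b_0)^p$ then gives, for $q\geq 1$,
\[ s_i(t)^{-p}\leq \frac{n}{2^{q-8}n_t^2}\cdot 2^{-2q}\cdot 2^{7}\frac{n^2}{n_t}\left(\frac{2}{b_0}\right)^p = 2^{15-3q}\left(\frac{n}{n_t}\right)^3\left(\frac{2}{b_0}\right)^p, \]
and for $q=0$,
\[ s_i(t)^{-p}\leq \frac{n}{2^{-8}n_t^2}\cdot k\cdot 2^{7}\frac{n^2}{n_t}\left(\frac{2}{b_0}\right)^p = 2^{15}\,k\left(\frac{n}{n_t}\right)^3\left(\frac{2}{b_0}\right)^p. \]
Taking $p$-th roots and using $2\cdot 2^{(15-3q)/p}=2^{1+(15-3q)/p}$ yields exactly the two cases of \eqref{eq:sss}.

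This lemma is essentially a counting estimate plus a substitution, so there is no serious obstacle; the only points that need care are (i) the weights $1/k$ and $2^{2q}$ in \eqref{eq:sum_potential} being precisely what collapses the constant to a clean $2^{15}$, and (ii) the fact that some of the small-slack rows counted above could lie in $\script{D}_t$ (see \eqref{eq:deadrows}) and thus be dropped from $\Phi_q(t)$ — one should either restrict $\script{C}_t^q$ to non-dead rows throughout (so that all the rows counted genuinely appear in $\Phi_q(t)$) or absorb the resulting loss into the constant, neither of which affects the exponents. The hypothesis $\Phi(t)\leq 8n(2/b_0)^p(16n/n_t)$ itself is not proved here; it is the bounded-potential statement for this setting, established separately as the analogue of Lemma~\ref{thm:g1}.
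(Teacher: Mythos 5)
Your proposal is correct and follows essentially the same argument as the paper: count the $\floor{2^{q-8}n_t^2/n}$ blocked rows of class $q$ with slack at most $s_i(t)$, lower-bound their weighted contribution to $\Phi(t)$ via the factors $2^{2q}$ (resp. $1/k$), and substitute the assumed bound on $\Phi(t)$ before taking $p$-th roots. Your side remark about small-slack rows possibly lying in $\script{D}_t$ is a legitimate subtlety the paper's proof does not spell out, and your suggested fixes handle it without affecting the exponents.
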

\begin{proof}
For any $q$ and $i \in \mathcal{J}_t^q$, there are at least $\floor{2^{q-8}n_t^2/n}$ indices $j$ in $\script{I}_t \cap \script{A}_q$ such that $s_j(t) \leq s_i(t)$. Therefore, for $q \geq 1$,
\begin{align}
    2^{2q}\cdot \frac{2^{q-8}n_t^2}{n} \cdot s_i(t)^{-p}\leq \Phi(t),\label{eq:s6}
\end{align}
and for $q = 0$,
\begin{align}
     \frac{1}{k}\cdot \frac{2^{-8}n_t^2}{n} \cdot s_i(t)^{-p}\leq \Phi(t).\label{eq:s10}
\end{align}
Plugging $\Phi(t) \leq 8n\left(\frac{2}{b_0}\right)^p (\frac{16n}{n_t})$ in~\eqref{eq:s6} and~\eqref{eq:s10} gives the required bounds.
\end{proof}

\begin{lem}
\label{lem:phi_dec}
For value of $p$ and $d_q$ given by \eqref{eq:wts}, for all $t=0,\ldots,T$, we have 
\begin{equation*}
    \Phi(t) \leq \frac{2^7 n^2}{n_t} \cdot \left(\frac{2}{b_0}\right)^p.
\end{equation*}
\end{lem}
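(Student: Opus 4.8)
The plan is to run the same inductive scheme as in Lemma~\ref{thm:g1}, but for the weighted potential $\Phi(t)=\tfrac1k\Phi_0(t)+\sum_{q=1}^{q_t}2^{2q}\Phi_q(t)$, and to keep explicit track of the discrete moments at which $\Phi$ can jump, namely whenever a row (or a whole class) first enters $\script{I}_t$. Throughout I would assume, exactly as in the proof of Theorem~\ref{thm:g3}, that the step size $\delta$ is small enough that all active slacks stay positive and the second-order Taylor expansion of each $\Phi_q$ is valid. Note that the target inequality $\Phi(t)\le \tfrac{2^{7}n^{2}}{n_t}(2/b_0)^{p}$ is precisely the hypothesis of the preceding lemma; it is \emph{not} needed to run the induction, since the barrier rate $d_q(t)$ in~\eqref{eq:sc1}/\eqref{eq:wts} is defined directly in terms of $\max_{i\in\script{J}^q_t}s_i(t)^{-1}$.

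First I would do the base case $t=0$: here $n_0=n$ and $x_0=0$, so every slack is at its initial value, only rows with restricted $\ell_1$-norm below the Active threshold lie in $\script{I}_0$, and by Lemma~\ref{lem:snum} these belong to classes $q\le q_0=O(1)$. Bounding $\Phi_q(0)\le|\script{A}_q|(2/b_0)^{p}\le 2^{1-q}n(2/b_0)^{p}$ (the remaining-variance term only lowers the potential) and plugging into the weighted sum, the weights $2^{2q}$ are beaten by $2^{1-q}$, giving $\Phi(0)\le 2^{7}n(2/b_0)^{p}$, which is the claim since $n_0=n$.

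For the inductive step I would first invoke Lemma~\ref{lem:salph} to get a single update $v_t\in\script{Z}_t$ satisfying the per-class bound~\eqref{eq:kvt} simultaneously for every $q$. Substituting $v_t$ into the second-order expansion of each $\Phi_q$ as in Lemma~\ref{thm:g1}, the positive term has the shape $\tfrac{p(p+1)}{2}\max_{i\in\script{J}^q_t}s_i(t)^{-1}\cdot 8\,w_q(t)h_q(n_t)\sum s_i^{-p-1}$, and the choice of $d_q(t)$ in~\eqref{eq:sc1} is exactly what cancels it, so the per-class drift $f_q(t)\le 0$. Picking the favourable sign of $\eps$, the only ways $\Phi$ can go up in a step are (i) the accumulated higher-order/rounding error, which is $O(2^{2q_t}/(Tnb_0^{p}))$ per step and hence negligible over all $T$ steps for $p=2$, and (ii) the jump caused by rows newly entering $\script{I}_t$.

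The crux, and the step I expect to be the main obstacle, is bounding the total of these jumps against the \emph{growing} target $\tfrac{2^{7}n^{2}}{n_t}(2/b_0)^{p}$. A row of class $q$ is frozen ($\dt{a_i}{v_\tau}=0$, $d_q(\tau)=0$), hence absent from $\Phi$, until $\script{A}_q$ becomes active, which by~\eqref{eq:s4} cannot happen before $n_t$ has dropped to $\Theta(2^{-q}n)$; at that instant the slack of such a row is still $\gtrsim b_0$ (it was frozen and the choice of $\lambda$ keeps the remaining-variance term of an Active row below $b_0/2$), so $s_i^{-p}\le(2/b_0)^{p}$ and the row contributes at most $2^{2q}(2/b_0)^{p}$ to $\Phi$. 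Summing the injected mass over all rows active by time $t$ gives at most $\bigl(\tfrac1k|\script{A}_0|+\sum_{q=1}^{q_t}2^{2q}|\script{A}_q|\bigr)(2/b_0)^{p}\le\bigl(2+\sum_{q=1}^{q_t}2^{q+1}\bigr)n(2/b_0)^{p}\le 2^{q_t+2}n(2/b_0)^{p}$, using $|\script{A}_q|\le 2^{1-q}n$. Since $2^{q_t}\le 16n/n_t$ by definition of $q_t$, this is at most $64\,(n/n_t)\,n(2/b_0)^{p}$, and adding the negligible error and using $n/n_t\ge1$ gives $\Phi(t)\le 2^{7}(n^{2}/n_t)(2/b_0)^{p}$; as $n_{t+1}\le n_t$ only enlarges the target, the induction closes. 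The delicacy is precisely that the weights $2^{2q}$ (and $1/k$), the block sizes $\lfloor 2^{q-8}n_t^{2}/n\rfloor$, and the relation $2^{q_t}\le 16n/n_t$ must fit together so that this one-line jump bound never exceeds the target — which is exactly why $\Phi(t)$ was defined with those weights.
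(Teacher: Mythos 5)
Your proposal follows essentially the same route as the paper's proof: the choice of $d_q(t)$ together with Lemma \ref{lem:salph} makes each per-class drift nonpositive (up to the negligible Taylor error), so the only growth in $\Phi$ comes from rows newly entering $\script{I}_t$ with slack at least $b_0/2$, and the total injected mass is bounded by $\bigl(\tfrac1k|\script{A}_0|+\sum_{q\le q_t}2^{2q}|\script{A}_q|\bigr)(2/b_0)^p$, which the bounds $|\script{A}_q|\le 2^{1-q}n$ and $2^{q_t}\le 16n/n_t$ turn into $2^7(n^2/n_t)(2/b_0)^p$. The only cosmetic difference is that you organize this as an induction on $t$ (with a sensible remark on why the target bound is not needed to run the drift argument), whereas the paper simply sums the one-step inequality over time; the estimates and constants are the same.
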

\begin{proof}

Plugging~\eqref{eq:s11} in the definition of $\Phi(t)$,
\begin{equation}
     \E(\Phi(t+1)) - \Phi(t) \leq  \frac{1}{Tb_0^p} + \vert  \{\script{I}_{t} \backslash \script{I}_{t-1}\} \cap \script{A}_0\vert \cdot \frac{1}{k}\cdot \left(\frac{2}{b_0}\right)^p+  2^{2q}\cdot \left(\frac{2}{b_0}\right)^p   \cdot\sum_{q\geq 1}\vert  \{\script{I}_{t} \backslash \script{I}_{t-1}\} \cap \script{A}_q \vert .\label{eq:sphi}
\end{equation}
At every step $t$, the algorithm selects the choice of $x_t$ for which the above inequality is true. Summing $\Phi(s)-\Phi(s-1)$ over $s\in [t]$,
\begin{align}
   \Phi(t)&\leq \Phi(0) + \frac{1}{k}\vert \script{I}^0_{t}  \vert \cdot \left(\frac{2}{b_0}\right)^{p}+ \sum_{q\geq 1} 2^{2q}\vert \script{I}^q_{t} \vert \cdot \left(\frac{2}{b_0}\right)^{p}
\end{align}
For any $q \in \script{Q}$, by Lemma \ref{lem:dphi} we have $\Phi_q(0) + \sum_{t}\vert \script{I}^q_{t+1} \backslash \script{I}^q_{t} \vert \cdot (2/b_0)^{p}  \leq  \vert \script{A}_q \vert \cdot (2/b_0)^{p}$. This gives
\begin{align}
   \Phi(t)&\leq \frac{1}{k}\vert \script{A}_0\vert\cdot \left(\frac{2}{b_0}\right)^p + \sum_{1\leq q \leq q_t}2^{2q} \vert \script{A}_q\vert\cdot \left(\frac{2}{b_0}\right)^p 
\end{align}
Using $\vert \script{A}_0\vert \leq 2n/\sqrt{k}$ and $\vert \script{A}_q\vert \leq 2^{1-q}n$ for $q \geq 1$, we get 
\begin{align}
   \Phi(t)&\leq  2n \left(\frac{2}{b_0}\right)^p\left(\frac{1}{\sqrt{k}}+\sum_{q=1}^{q_t} 2^q\right)   \leq 4n\left(\frac{2}{b_0}\right)^p 2^{q_t+1} \leq 2^7 \left(\frac{2}{b_0}\right)^p \left(\frac{n^2}{n_t}\right), \label{eq:s5}
\end{align}
where the last inequality follows from $2^{q_t} \leq 16(n/n_t)$. 
\end{proof}

\begin{proof}[Proof of Theorem \ref{thm:extended}]
If row $i \in \script{A}_q$ becomes dead after step $t-1$, then
\begin{align*}
    \vert \dt{a_i}{x_T} \vert &\leq \vert \dt{a_i}{x_t} \vert+ \vert \dt{a_i^S}{x_T-x_t} \vert \leq  b_t(q) + 2\sum_{j\in \script{V}_t}\vert a_i(j)\vert \\
    &\leq b_T(q) + 2\sum_{j\in \script{V}_t} a_i(j)^2 \leq b_T(q) + 16\beta.
\end{align*}
Substituting the bound on $\max_{i \in \script{J}_q^t} s_i(t)^{-1}$ from~\eqref{eq:sss}, and using $w_q(t) = 2^{5-q/4}\cdot (n/n_t)^{1/4}$ and $h_q(t) = 2^{q+2}/n$, we get 
\begin{equation*}
    d_q(t) = \begin{cases} 0 & \text{ if } t < t_q\\
    9k\cdot\frac{2^{3q/8+14}}{nb_0}\left(\frac{n}{n-\delta^2 t - 1}\right)^{5/8} & \text{ if }  q \geq 1 \text{ and }  t \geq t_q\\
     9k^{\frac{9}{8}}\cdot\frac{2^{14}}{nb_0}\left(\frac{n}{n-\delta^2 t - 1}\right)^{5/8} & \text{ if } q = 0 \text{ and } t \geq t_0.
    \end{cases}
\end{equation*}
For any $q\geq 1$, summing up $d_q(\cdot)$,
\begin{align*}
    b_q(T) &= b_0 + \delta^2 \sum_{t=t_q}^{T-1} d_q(0) \leq \delta^2 \int_{t = t_q}^T\frac{9k\cdot 2^{3q/4+12+(15-3q)/8}}{nb_0}\left(\frac{n}{n-\delta^2 t-1}\right)^{5/8}dt\\
     &\leq b_0+\int_{t = \delta^2 t_q}^{n-2}  \frac{9k\cdot2^{3q/8+14}}{nb_0}\left(\frac{n}{n- t-1}\right)^{5/8}dt\\
    &\leq  b_0 + \frac{2^{19+3q/8}k}{b_0} \cdot n^{-3/8} \cdot (n-\delta^2t_q)^{3/8} = b_0 +\frac{2^{20}k}{b_0}.
    \end{align*}
    For $b_0 = 2^{10}\sqrt{k}$,  $b_q(T) \leq 2^{11}\sqrt{k}$ for all $q \geq 1$. 
    Similar calculation for $q = 0$ show that $b_0 = 2^{10}\sqrt{k}$ and $b_T(0) = 2^{11}\sqrt{k}$ suffice. 

Let $x\in \{-1,1\}^n$ be obtained from $x_T$ by the rounding $x(j) = \mathrm{sign}(x_{T}(j))$. 
Since $T = (n-2)/\delta^2$, $\norm{x_{T}}^2 = n-2$ with $\vert x_T(j) \vert \leq 1$ for all $j\in [n]$. After rounding $x_{T}$ to $x$, $\norm{x}^2 = n$ and
\begin{align*}
    \vert \dt{a_i}{x} \vert &\leq \vert \dt{a_i}{x_T} \vert+ \vert \dt{a_i}{x-x_T} \vert \leq 2b_T + 16\beta + \sum_{j}\vert x(j)- x_T(j)\vert \\
    &\leq b_T + 16\beta + 2.\qedhere
\end{align*}
\end{proof}

\paragraph{Random and Semi-random Sparse Hypergraphs.}
This gives an alternate proof of the result \cite{potukuchi2019spectral} of Potukuchi that $\disc(\script{H}) = O(\sqrt{k})$ for regular random $k$-regular hypergraph $\script{H}$, on $n$ vertices and $m$ edges with $m\geq n$ and $k=o(m^{1/2})$.
In particular, Potukuchi showed that such hypergraphs satisfy condition \eqref{eq:s1} with high probability.

Consider a random $k$-regular hypergraph $A$ with $n$ vertices and $m$ edges as above, but suppose that an adversary can change the graph so that the number of edges incident to $v$ that are added or deleted is at most $t$. Let $A+C$ denote the incidence matrix of this corrupted hypergraph. How much can this affect the discrepancy of the hypergraph?
\semirandomh*
\begin{proof}
 By the subadditive property of stochastic discrepancy, 
$
    \disc(A+C) \leq O(\sqrt{k}) + O(\sqrt{t\log{n}}).$
However, this bound is not algorithmic because it requires running the algorithm separately on $A$ and $A_c - A$.
\end{proof}

\paragraph{Acknowledgments.}
We are grateful to Yin Tat Lee and Mohit Singh for helpful discussions. The latter two authors were supported in part by NSF awards CCF-2007443 and CCF-2134105.

\bibliographystyle{plain}
\bibliography{references}
\begin{appendix}

\section{Appendix: Bounding the step size}\label{appendix:ss}

\begin{lem}\label{lem:dphi}
For $A \in \R^{m \times n}$,
\begin{itemize}
    \item $\Phi(0) + \sum_{t}\vert \script{I}_{t+1} \backslash \script{I}_{t} \vert \cdot \left(\frac{2}{b_0}\right)^{p}  \leq 2m \cdot \left(\frac{2}{b_0}\right)^{p}$.
    \item 
For all $t \in \{0, 1, \ldots, T-1\}$, if $\Phi(t) \leq 2^7 m^2 \left(\frac{2}{b_0}\right)^{p}$ and $d_t = O(pn \cdot \max_{i \in \script{J}_t} s_i(t)^{-1})$, then for step size $\delta^2 \leq (Cn^2m^6p^4)^{-1}$, 
\begin{align*}
    \E(\Phi(t+1)) -\Phi(t) &\leq  f(t)+ \frac{1}{Tnb_0^p}+ \vert \script{I}_{t+1} \backslash \script{I}_{t} \vert \cdot \left(\frac{2}{b_0}\right)^{p}\\
      \text{where }\qquad f(t) &= -p \delta^2 \sum_{i\in \script{I}_t}  \frac{d_t + cb_0\dt{{a}_i^{(2)}}{v_t^{(2)}}}{s_i(t)^{p+1}}  + \frac{p(p+1)\delta^2}{2}\sum_{i\in \script{I}_t}  \frac{ \dt{2cb_0e_{t,i} - a_i}{v_t}^2 }{s_i(t)^{p+2}}.
\end{align*}
\end{itemize}
\end{lem}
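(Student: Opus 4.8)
I would prove the two bullets separately. For the first, at $t=0$ we have $x_0=0$, so every $i\in\script{I}_0$ satisfies $s_i(0)=b_0-\lambda\sum_j a_i(j)^2\ge b_0-20\lambda=b_0(1-20c)>b_0/2$ by the choice $c=1/42$; hence $\Phi(0)<|\script{I}_0|(2/b_0)^p$. Since a frozen variable never moves again ($v_t(j)=0$ for $j\notin\script{V}_t$), the alive set $\script{V}_t$ is non-increasing in $t$, so $\sum_{j\in\script{V}_t}a_i(j)^2$ is non-increasing and therefore $\script{I}_t\subseteq\script{I}_{t+1}$ for all $t$. Consequently $\sum_t|\script{I}_{t+1}\setminus\script{I}_t|\le|\script{I}^S|-|\script{I}_0|$, and adding the bound on $\Phi(0)$ gives $\Phi(0)+\sum_t|\script{I}_{t+1}\setminus\script{I}_t|(2/b_0)^p\le|\script{I}^S|(2/b_0)^p\le 2m(2/b_0)^p$.

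For the second bullet I would split $\Phi(t+1)-\Phi(t)=\sum_{i\in\script{I}_t}\big(s_i(t+1)^{-p}-s_i(t)^{-p}\big)+\sum_{i\in\script{I}_{t+1}\setminus\script{I}_t}s_i(t+1)^{-p}$. A row $i$ entering at step $t$ has been kept orthogonal to every $v_s$, $s\le t$, by the constraint defining $\script{Y}_s$, so $\dt{a_i}{x_{t+1}}=0$; since $b_{t+1}\ge b_0$ and the remaining variance of such a row is at most $21\lambda$ (the s-Alive threshold plus the negligible contribution of frozen coordinates), $s_i(t+1)\ge b_0-21\lambda=b_0/2$, and the second sum is at most $|\script{I}_{t+1}\setminus\script{I}_t|(2/b_0)^p$ — the third term of the claim. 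For the first sum, plugging $x_{t+1}=x_t+\eps\delta v_t$ and $b_{t+1}=b_t+\delta^2 d_t$ into the definition of the slack gives the \emph{exact} identity $s_i(t+1)=s_i(t)+\eps\delta A_i+\delta^2 B_i$, where $A_i=\dt{2cb_0 e_{t,i}-a_i}{v_t}$ and $B_i=d_t+cb_0\dt{a_i^{(2)}}{v_t^{(2)}}$ (there is no truncation here, as the variance term is quadratic in $\eps\delta v_t$). Taylor-expanding $g(s)=s^{-p}$ around $s_i(t)$ and taking the expectation over the Rademacher sign ($\E\eps=0$, $\E\eps^2=1$) gives $\E[s_i(t+1)^{-p}]=s_i(t)^{-p}-p B_i\delta^2 s_i(t)^{-p-1}+\tfrac{p(p+1)}{2}A_i^2\delta^2 s_i(t)^{-p-2}+E_i$, where $E_i$ collects all terms of order $\delta^3$ and higher (including the $\delta^4 B_i^2$ term). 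Summing over $i\in\script{I}_t$, the first three terms are exactly $f(t)$, so it remains to show $\sum_{i\in\script{I}_t}E_i\le\frac1{Tnb_0^p}$; the asserted sign of $\eps$ exists because $f(t)$ and the jump term are independent of $\eps$.

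The remainder estimate is the crux and is where the step-size bound enters. The hypothesis $\Phi(t)\le 2^7m^2(2/b_0)^p$ gives $s_i(t)^{-1}\le (2^7m^2)^{1/p}(2/b_0)=O(m^2/b_0)$ for every active $i$; together with $|2cb_0 e_{t,i}(j)|\le|a_i(j)|\le M$, $\|v_t\|_2=1$, and the hypothesis $d_t=O(pn\max_i s_i(t)^{-1})$, this yields $|A_i|=O(M\sqrt n)$ and $|B_i|=O(pnm^2/b_0+b_0M^2)$, hence $|\eps\delta A_i+\delta^2 B_i|=O(M\sqrt n\,\delta)$. For $\delta^2\le(Cn^2m^6p^4)^{-1}$ with $C$ an absolute constant absorbing the polynomial dependence on $M$, $L$ and $1/b_0$, this increment is at most $s_i(t)/2$, so along the whole segment $g$ and its derivatives stay comparable to their values at $s_i(t)$, the $\ell$-th order term is $O\big((2p)^\ell s_i(t)^{-p-\ell}|\eps\delta A_i+\delta^2 B_i|^\ell\big)$, and summing the geometric tail $\ell\ge 3$ gives $|E_i|=O\big(p^3 s_i(t)^{-p-3}(M\sqrt n\,\delta)^3\big)$. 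Summing over the at most $2m$ active rows, using $s_i(t)^{-p-3}\le 2^7m^2(2/b_0)^p\cdot O(m^6/b_0^3)$, and comparing with $\frac1{Tnb_0^p}=\frac{\delta^2}{(n-2)nb_0^p}$, one checks that $\delta^2\le(Cn^2m^6p^4)^{-1}$ suffices. The main obstacle is exactly this bookkeeping: one must first certify that the expansion of $s^{-p}$ converges — i.e., that the a priori potential bound keeps $s_i(t)$ large relative to one step — and then carefully track the $m,n,p$ (and suppressed $M,L,b_0$) factors coming from the number of rows, the magnitudes of $A_i$ and $B_i$, and the $s_i(t)^{-p-3}$ blow-up of the third derivative.
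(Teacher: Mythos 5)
Your proposal is correct and follows essentially the same route as the paper's Appendix proof: bounding $\Phi(0)$ and the entering rows via the slack lower bound $b_0-21\lambda=b_0/2$ (using orthogonality to not-yet-s-Alive rows), and then writing the exact slack increment $s_i(t+1)=s_i(t)+\eps\delta A_i+\delta^2 B_i$, Taylor-expanding $s^{-p}$, killing the linear term in expectation, and pushing all higher-order terms below $1/(Tnb_0^p)$ by combining the a priori slack lower bound from $\Phi(t)\leq 2^7m^2(2/b_0)^p$ with an inverse-polynomial step size. The paper phrases the expansion as a second-order multivariate Taylor expansion of $Z_t(b,x)$ with an explicit third-derivative Lagrange remainder, but this is the same argument, and your final bookkeeping is at the same (explicitly unoptimized) level of precision as the paper's.
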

\begin{proof}
We note that the purpose of this lemma is to allow the proof to ignore higher order terms by making the step size inverse polynomially small, and thereby obtain a (deterministic) polytime algorithm. As our focus is on establishing polynomiality, we have not optimized the bounds. 

For any $i \in \script{I}_{t+1}\backslash \script{I}_{t}$, we have $\dt{a_i}{x_{t+1}} = 0$ and 
\[
\sum_{j}a_{i}(j)^2(1-x_{t+1}(j)^2) \leq \sum_{j\in\mathcal{V}_{t+1}} a_i(j)^2 + n (1-(1-\frac{1}{2n})^2) < 21.
\]
Therefore, for any $i \in \script{I}_{t+1}\backslash \script{I}_{t}$, using the fact that the coefficient of the above energy term is $cb_0 = b_0/42$,  
\begin{align*}
    \frac{1}{s_{i}(t+1)} \leq \frac{1}{(b_{t+1} -  21 cb_0)}\leq \frac{2}{b_0} .
\end{align*}
Therefore 
\[
\Phi(0) = \sum_{i \in \script{I}_0} s_i(0)^{-p} \leq |\script{I}_{0}| \cdot (\frac{2}{b_0})^p.
\]
Since $|\script{I}_{0}| + \sum_{t} |\script{I}_{t+1}\backslash \script{I}_{t}| \leq 2m$, we have 
\begin{equation*}
    \Phi(0) + \sum_{t}\vert \script{I}_{t+1} \backslash \script{I}_{t} \vert \cdot \left(\frac{2}{b_0}\right)^{p}  \leq 2m \cdot \left(\frac{2}{b_0}\right)^{p}.
\end{equation*}
This concludes the proof of the first part.

For the second part, we will use a second-order Taylor approximation and choose $\delta$ small enough so that the higher order terms are negligible. 

Let $Z_t(b, x) := \sum_{i \in \script{I}_t} \left(b - \dt{a_i}{x} - \lambda\cdot \sum_{j=1}^n a_i(j)^2(1-x(j)^2)\right)^{-p} = \sum_{i \in \script{I}_t} s_i(t)^{-p}$, the potential function restricted to the active rows in time step $t$. 
Then,
\begin{align*}
    \Phi(t+1) - \Phi(t)  &= \sum_{i\in \script{I}_t}s_i(t+1)^{-p} -s_i(t+1)^{-p}  + \sum_{i\in \script{I}_{t+1} \backslash \script{I}_{t} }s_i(t+1)^{-p}\\
    &\leq Z_t(b_{t+1}, x_{t+1}) - Z_t(b_t, x_t) + \vert \script{I}_{t+1} \backslash \script{I}_{t} \vert \cdot \left(\frac{2}{b_0}\right)^{p}. 
\end{align*}
Hence,
\begin{align}
    \E(\Phi(t+1)) - \Phi(t)  &\leq \E(Z_t(b_{t+1}, x_{t+1})) - Z_t(b_t, x_t) + \vert \script{I}_{t+1} \backslash \script{I}_{t} \vert \cdot \left(\frac{2}{b_0}\right)^{p}\label{eq:apf}
\end{align}
Using Taylor's theorem,
\begin{align*}
   Z_t(b_{t+1}, x_{t+1}) - Z_t(b_t, x_t) &= \delta \cdot\nabla_xZ_t(b_t, x_t)^{\top}v_t + \delta^2\cdot\nabla_b Z_t(b_t, x_t)d_t+ \frac{\delta^2}{2}\cdot v_t^{\top} \nabla_x^2Z_t(b_t, x_t) v_t \\ &+ \frac{\delta^4}{2}\cdot \nabla^2_bZ_t(b_t,x_t)d_t^2 + \frac{1}{6}\cdot\nabla^3Z_t(b', x')[w,w,w],
\end{align*}
for some $b' \in [b_t, b_t + \delta^2 d_t]$ and $x' \in [x_t, x_t+\delta v_t]$, and $w$ is the tuple $(\delta^2 d_t, \delta v_t)$. Taking expectation,
\begin{align}
   \E(Z_t(b_{t+1}, x_{t+1})) - Z_t(b_t, x_t) &=  \delta^2\cdot\nabla_b Z_t(b_t, x_t)d_t+ \frac{\delta^2}{2}\cdot v_t^{\top} \nabla_x^2Z_t(b_t, x_t) v_t \notag \\ 
   &+ \frac{\delta^4}{2}\cdot \nabla^2_bZ_t(b_t,x_t)d_t^2 + \E(\frac{1}{6}\cdot\nabla^3Z_t(b', x')[w,w,w]).\label{eq:ap3}
\end{align}

For any $t \in [T]$,
\begin{align}
    \nabla_b Z_t(b_t,x_t) &= -p\sum_{i \in \script{I}_t} \frac{1}{s_i(t)^{p+1}}, \label{eq:ap4}\quad \text{and}\\
     \nabla^2_x Z_t(b_t,x_t) &= p(p+1)\sum_{i \in \script{I}_t} \frac{(2cb_0 e_{t,i}-a_i )(2cb_0 e_{t,i}-a_i)^{\top}}{s_i(t)^{p+2}} -pcb_0\sum_{i \in \script{I}_t}\frac{\text{diag}(a_i^{(2)})}{s_i(t)^{p+1}}.\label{eq:ap5}
\end{align}
We will show the following claim.

\begin{claim}
For any $t$ and any $b',x'$ as defined above, 
\[\E(\frac{1}{6}\cdot \nabla^3Z_t(b', x')[w,w,w]) + \frac{\delta^4}{2}\cdot \nabla^2_bZ_t(b_t,x_t)d_t^2 \leq \frac{1}{Tnb_0^p}.\]
\end{claim} 
Combining this claim with~\eqref{eq:ap3},~\eqref{eq:ap4}, and~\eqref{eq:ap5}, we get 
\begin{align*}
     \E(Z_t(b_{t+1}, x_{t+1})) - Z_t(b_t, x_t) &\leq 
     -p \delta^2 \sum_{i\in \script{I}_t}  \frac{d_t + cb_0\dt{{a}_i^{(2)}}{v_t^{(2)}}}{s_i(t)^{p+1}}  \\&+ \frac{p(p+1)\delta^2}{2}\sum_{i\in \script{I}_t} \frac{ \dt{2cb_0e_{t,i} - a_i}{v_t}^2 }{s_i(t)^{p+2}}
     + \frac{1}{Tb_0^p}\\
     &= f(t)  + \frac{1}{Tnb_0^p}.
\end{align*}
Substituting this bound in~\eqref{eq:apf} proves the lemma.

{\bf Proof of Claim 1.}
 
As $\Phi(t) \leq 2^7 m^2 \cdot \left(2/b_0\right)^p$, for any $i\in \script{I}_t$, \begin{equation}
    s_i(b_t, x_t) = s_i(t) \geq b_0 (2^{p+7}m^2)^{-1/p}. \label{eq:ap1}
\end{equation}
 
By~\eqref{eq:ap1}, \begin{equation}
    d_t = O(pn \cdot \max_{i \in \script{J}_t} s_i(t)^{-1}) = O\left(pn \cdot (2^{p+7} m^2)^{1/p} b_0^{-1}\right).\label{eq:ap2}
\end{equation}
By~\eqref{eq:ap1} and~\eqref{eq:ap2}, and as
the second derivative of $Z_t$ with respect to $b_t$ is \[
     \nabla^2_bZ_t(b_t, x_t) =  p(p+1)\sum_{i\in\script{I}_t} s_i(t)^{-p-2},\] we obtain 
\[\delta^4\nabla^2_bZ_t(b_t, x_t)d_t^2 = O(2^p n^2m^{3+\frac{3}{p}}p^4\delta^4 (m/n)^{\frac{2}{p}}b_0^{-p-4}).\]
For each of the choices $p = 2\ceil{\log(2m)}$ or $p = 2\ceil{\log(2m/n)}$ or $p=8$, since $\delta^2 = 1/Cn^2m^6p^4$ and $T=(n-2)/\delta^2$, we have
\begin{equation}
    \delta^4\nabla^2_b Z_t(t)d_t^2 \leq \frac{\delta^2}{2n(n-2)b_0^p} \leq \frac{1}{2Tnb_0^p}. \label{eq:b1}
\end{equation}
$\E(\nabla^3 Z_t(b', x'))$ in direction $w$ is given by
\begin{align}
&\E(\nabla^3 Z_t(b', x')[w,w,w]) = -p(p+1)(p+2)\sum_{i\in\script{I}_t} \frac{\delta^6 d_t^3}{s_i(b', x')^{p+3}} \notag \\
&-   3p(p+1)(p+2)\delta^4\sum_{i\in\script{I}_t} 
\frac{d_t\left(2cb_0 \dt{(a_i^2 x')}{v_t} - 
\dt{{a}_i}{v_t}\right)^2}{s_i(b', x')^{p+3}} +3p(p+1)cb_0\delta^4\sum_{i\in\script{I}_t} \frac{d_t\dt{{a}_i^{(2)}}{v_t^{(2)}} }{s_i(b', x')^{p+2}}\notag\\
&\leq 3p(p+1)cb_0\delta^4\sum_{i\in\script{I}_t} \frac{d_t}{s_i(b', x')^{p+2}},\label{eq:a3}
\end{align}
where we use that $d_t,s_i \geq 0$.

To bound the difference between $ s_i(b', x') - s_i(b_t, x_t)$, consider the difference between $b'$ and $b$, 
\begin{align}
 \vert b' - b_t \vert &\leq \delta^2 d_t \leq \frac{b_0}{16(2^7m^2)^{1/p}},\label{eq:a1}
\end{align}
and the difference between $\dt{a_i}{x'}+cb_0\cdot \sum_{j=1}^n a_i(j)^2(1-x'(j)^2)$ and $\dt{a_i}{x_t}+cb_0\cdot \sum_{j=1}^n a_i(j)^2(1-x_t(j)^2)$,
 \begin{align}
 \vert \dt{a_i}{x'} &+\sum_{j=1}^n a_i(j)^2(1-x'(j)^2) - \dt{a_i}{x_t} -\sum_{j=1}^n a_i(j)^2(1-x_t(j)^2)\vert  \notag\\&\leq \delta \vert \dt{a_i}{v_t}\vert  + cb_0 \sum_{j} a_i(j)^2\vert x_t(j)^2 - (x_t(j)+\delta\lambda_2 v_t(j))^2)\vert \notag\\
 &\leq    \delta(1+4cb_0\sqrt{n})\leq \frac{b_0}{16(2^7m^2)^{1/p}} \label{eq:a2}. 
\end{align}
By~\eqref{eq:a1} and~\eqref{eq:a2}, \begin{align}
    s_i(b', x') &= s_i(b_t, x_t) + y - b_t + \dt{a_i}{x'} +\sum_{j=1}^n a_i(j)^2(1-x'(j)^2) \dt{a_i}{x_t} -\sum_{j=1}^n a_i(j)^2(1-x_t(j)^2)\notag \\
    &\geq s_i(b_t, x_t)-\frac{b_0}{16(2^7m^2)^{1/p}} -\frac{b_0}{16(2^7m^2)^{1/p}}\geq \frac{3b_0}{8(2^7m^2)^{1/p}}\label{eq:a4}
\end{align}
By~\eqref{eq:a3} and~\eqref{eq:a4}, $\E(\nabla^3 Z_t(b', x')[w,w,w]) = O(nm^{3+\frac{3}{p}}p^3\delta^4 (8/3)^p b_0^{-p-1})$.

Again, since $p = 2\ceil{\log(2m)}$ or $p = 2\ceil{\log(2m/n)}$ or $p=8$, for $\delta^2 = 1/Cn^2m^6p^4$, 
\begin{equation}
    \E(\nabla^3 Z_t(b', x')[w,w,w])\leq \frac{\delta^2}{2n(n-2)b_0^p} = \frac{1}{2Tnb_0^p}.  \label{eq:b0}
\end{equation}
Now the claim follows from the bounds~\eqref{eq:b1} and~\eqref{eq:b0}.
\end{proof}
\end{appendix}
\end{document}